\newtheorem{lemma}{Lemma}[section]
\newtheorem{proposition}{Proposition}[section]
\journal{Journal of Environmental Economics and Management}
\begin{document}

\begin{frontmatter}

\title{The behavioral effects of index insurance in fisheries} 

\author[ucsb]{Nathaniel Grimes}
\author[ucsb,emlab]{Christopher Costello}
\author[ucsb,emlab]{Andrew J. Plantinga}

\affiliation[ucsb]{organization={Bren School of Environmental Science and Management, University of California},
            addressline={}, 
            city={Santa Barbara},
            postcode={93106}, 
            state={CA},
            country={}}
\affiliation[emlab]{organization={Environmental Markets Lab},
            addressline={}, 
            city={Santa Barbara},
            postcode={93106}, 
            state={CA},
            country={}}            

\begin{abstract}
Fisheries are vulnerable to environmental shocks that impact stock
health and fisher income. Index insurance is a promising financial tool
to protect fishers from environmental risk. However, insurance may
change fisher's behavior. It is imperative to understand the direction
fishers change their behavior before implementing new policies as
fisheries are vulnerable to overfishing. We provide the first
theoretical application of index insurance on fisher's behavior change
to predict if index insurance will incentivize higher or lower harvests
in unregulated settings. We find that using traditional fishery models
with production variability only originating through stock abundance
leads fishers to increase harvest with index insurance. However, fishers
are adaptable and experience multiple sources of risk. Using a more
flexible specification of production shows that index insurance could
raise or lower harvest depending on the risk mitigation strategies
available for fishers and the design of the insurance contract. We
demonstrate the magnitude of potential change by simulating from
parameters estimated for three Norwegian fisheries. Fisheries with index
insurance contracts protecting extraction risks may increase harvest by
10\% or decrease by 2\% depending on the risk effects of inputs.
Insurance contracts protecting stock risk will lead to 6-20\% increases
in harvest. Before widespread adoption, careful consideration must be
given to how index insurance will incentivize or disincentivize
overfishing.
\end{abstract}

\begin{keyword}
Moral Hazard \sep Index-Insurance \sep Fisheries \sep Conservation

\end{keyword}

\end{frontmatter}

\section{Introduction}\label{introduction}

Fisheries are exposed to numerous environmental risks that impact
biological and economic sustainability \citep{Alvarez2006,Lehodey2006,rogers2019,Cheung2021,Oken2021,Smith2023}. There are
limited financial tools available to protect fishing communities against
environmental risks\citep{Sethi2010,Kasperski2013}. Index
insurance is a new financial tool that has significant potential to bolster community welfare in response to disastrous weather events \citep{Maltby2023,Watson2023,Hobday2025}. However, insurance possesses moral hazards that may induce behavior change in fishers. In this paper, we provide the first examination of whether index insurance will lead to incentives that increase or decrease fishery harvest. We apply our theoretical findings
to an empirical setting in three Norwegian fisheries.

Fishers are highly sensitive to risk, especially income risk, and demonstrate risk aversion despite working a seemingly risky profession \citep{Smith2005,Holland2008,Sethi2010}. Individual choices by fishers and fishery management mitigate environmental risk. Individual efforts include choosing consistent, known fishing grounds over risking
exploring unknown spots \citep{Holland2008} or choosing to fish less after storms and hurricanes \cite{Pfeiffer2020,Pfeiffer2022}. However, these efforts are unlikely to completely eliminate risk. Additional financial tools may be needed to address income risk as a
result of environmental fluctuations \citep{Sethi2010,Kasperski2013,Wabnitz2019,Sumaila2021}.

Index insurance may be an ideal financial tool for risk management in fisheries as it is scalable, protects against environmental shocks, and smooths income for fishers\citep{Watson2023,Hobday2025}. Index insurance uses independent measures of weather as the
basis for issuing payouts to all policyholders. For example, a pilot program through the Caribbean Oceans and Aquaculture Sustainability Facility (COAST) pays out a set amount to fishers when indices of wave
height, wind speed, and storm surge indicate a hurricane \citep{Sainsbury2019}.

One crucial area that remains under studied is the potential influence of insurance on fisher behavior. Moral hazards are decisions by insured
agents that they would not otherwise take if they were uninsured \cite{Wu2020}. There are two components to insurance moral hazards: ``chasing the trigger'' and ``risk reduction''. ``Chasing the trigger'' is the directed behavior of policyholders to increase the
likelihood of a payout. For example, a fisher might choose to fish less to receive an indemnified harvest insurance payment. Index insurance
completely eliminates this moral hazard as the index is independent of fisher choices, e.g.~fishers cannot affect sea surface temperature. ``Risk reduction'' occurs when policyholders possess an insurance
contract that protects them from risk, leading them to reoptimize their decisions. Index insurance remains susceptible to this element of moral
hazard. One possible response by fishers is to take on additional risk by fishing more. Alternatively, the risk protection offered by insurance
could encourage fishers to fish less as insurance payouts sufficiently cover income loss. All preliminary analyses of fisheries index insurance are missing rigorous assessment of risk reduction moral hazards.

In this paper, we assess how index insurance moral hazards could change fisher input choices. Subsequently, the changes in inputs lead to
changes in harvest, and thus sustainability. Fisheries remain vulnerable to overfishing \citep{fao2020}. It is imperative to ensure new policies, such as index insurance, do not provide perverse incentives that degrade long term sustainability by encouraging greater fishing pressures.

The current working assumption by practitioners is that index insurance will motivate fishers to fish less by relieving income constraints, but there is no evidence beyond anecdotes \citep{RARE2021}. Previous studies
articulated hypothetical examples of moral hazards in potential fishery indemnity insurance programs, such as encouraging fishers to fish in
foul weather or to not exit the fishery after a bad year of harvest \citep{Herrmann2004,Watson2023}. However,
neither study built testable models to uncover risk reduction moral hazard impacts on fisheries.

Research from agriculture provides compelling evidence that behavior change ought to be expected in fisheries. Index insurance applied to grazing in pasture commons shows clear evidence of risk reduction moral
hazards leading to environmental degradation \citep{muller2011,Bulte2021}. Other studies from agriculture find that the impact of insurance on environmental sustainability depends on the
underlying risk reducing or increasing qualities of inputs used in production \citep{Ramaswami1993,Mahul2001,Mishra2005}. Risk
increasing (decreasing) inputs will always lead to increased (decreased) input use with insurance. Numerous agricultural studies confirm insurance incentivizes changes in input use \citep{horowitz1993,Babcock1996,Smith1996,Mishra2005,Cai2016,Deryugina2017,Claassen2017,Elabed2016,Sibiko2020,stoeffler2022a,Sloggy2025}.

Fisheries differ from agriculture in crucial ways, thus motivating an analysis of the behavioral effects of index insurance in this new setting. In a standard fisheries model, production by a fisher depends
on the abundance of the resource, as measured by the fish stock. In this case, there is a positive relationship between output and inputs of
labor and capital. For example, more fishing boats results in a greater catch for a given fish stock. When the fish stock is subject to stochastic shocks, inputs will be risk-increasing and, thus, index
insurance covering biological stock risk will always increase input use and harvest of the resource. This suggests that index insurance will be in conflict with conservation goals in fisheries. However, fishers also
face risk that is independent of biological risk. For example, weather and wave conditions may make it difficult to catch fish or new regulations may limit where fishing can take place. Inputs that interact
with extraction shocks may be risk-increasing or risk-decreasing. Therefore, index insurance applied to extraction risk has the potential to reduce harvest and conserve fish stocks. A complete analysis of index insurance in fisheries is needed to understand the different types of risk faced by fishers, alternative ways in which insurance contracts can be designed, and whether insurance will be compatible with conservation
goals.

To estimate the magnitude of potential harvest changes, we numerically simulate a model from parameter estimations of Norwegian fisheries from
\citet{Asche2020}. The Norwegian fisheries show index insurance could raise harvest by 20\% or lower harvest by 2\% contingent on the fishery input characteristics and contract type. All contracts specified to protect biological stock risk showed large increases in harvest regardless of the extraction risk effects. Currently, most proposed insurance contracts are examining triggers based on stock risk, such as sea surface temperature or chlorophyll-a \citep{Watson2023}.
Without additional constraints, these types of contracts will incentivize greater exploitation of vulnerable fish stocks.

The remainder of the paper is structured as follows.
Section~\ref{sec-jp} demonstrates how assuming standard fishery production models will always bias index insurance towards overfishing.
We then present a new stochastic production function for fisheries that integrates both stock and extraction risk in Section~\ref{sec-common}. Fishers now face multiple risks where contracts could be specified to
protect either risk. In this new setting, we prove how index insurance will change fisher behavior, and how the outcome depends on the risk effects of inputs as well as the type of insurance contract.
Section~\ref{sec-multi} extends the theoretical model to account for
multiple inputs in fishing that reflects the decisions of fishers in the
empirical setting. Section~\ref{sec-sim} numerically estimates potential
harvest changes with an index insurance program. Parameters are
calibrated with an application to Norwegian fisheries through the
results of Asche \emph{et al.} (2020). Section~\ref{sec-disc} concludes
with a discussion on the suitability of fishery index insurance.

\section{Index Insurance with standard fishery production}\label{sec-jp}

Here we develop a model of fishery production under risk. We are
primarily interested in within-season input decisions and risk, so we
omit time subscripts, but will be explicit about when random variables
are known, or unknown, to fishers. We begin with a canonical model of
fishery production.

Fishers choose an input $x$ that interact through an increasing
concave production technology $f(x)$\footnote{Common specifications of
  $f(x)$ are Cobb-Douglas or linear harvest from Gordon-Schaefer.}.
Fish biomass is $B$, and total production, or ``harvest'', $y$ is
given by:

\begin{equation}\label{eq-trad}{
y = Bf(x)
}\end{equation}

We assume that $B$ is unknown prior to the choice of $x$, but that
some information on $B$ is available. Specifically, we assume
$B=\hat{B}+\theta$, where $\hat{B}$ is known and $\theta$ is mean
zero additive error term, with known variance $\sigma^2_\theta$; we
will refer to $\theta$ as the ``stock risk''. This implies that
Equation~\ref{eq-trad} can be rewritten:

\begin{equation}\label{eq-trad2}{
    y = \hat{B}f(x) + \theta f(x)
}\end{equation}

This approach combines a standard model of fishery production
(Equation~\ref{eq-trad}) with the simple observation that biomass, which
affects harvest, is unknown prior to the choice of inputs. With this
model, we can examine how index insurance affects incentives around
input choice. Under standard assumptions, fishers are price takers,
incur convex costs, and are risk averse over profits.

We can define a simple profit function where prices are set to one in
Equation~\ref{eq-pi}

\begin{equation}\label{eq-pi}{
\pi=\hat{B}f(x)+\theta f(x)-c(x)
}\end{equation}

We create an insurance lottery through a contract that uses $\theta$
as the index. The trigger $\bar{\theta}$ initiates a constant payout
$\gamma$ when the index falls below the trigger,
$\theta<\bar{\theta}$.

Actuarially fair insurance implies the premium, $\rho$, paid in both
lottery outcomes to be the probability of receiving a payout times the
payout amount, $\rho=J(\bar \theta)\gamma$, where $J(\theta)$ is the
cumulative distribution of the shock. Additionally, if we set the
trigger to zero, $\bar\theta=0$, profit will enter corresponding bad
and good states when $\theta$ is positive and negative respectively.

Fishers are expected utility maximizers. To do so, they need to make
decisions on the marginal profitability in the good and bad states. We
introduce the following lemma to help us compare marginal profits in
both states:

\begin{lemma}\label{lem-theta}

Index insurance contracts built on $\theta$ will always lead to higher
expected marginal profits in the good state

\[\frac{\mathbb{E}[\partial \pi|\theta<\bar \theta]}{\partial x}-\frac{\mathbb{E}[\partial \pi|\theta>\bar \theta]}{\partial x}<0\]

\end{lemma}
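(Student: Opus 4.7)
The plan is to reduce the claimed inequality to a simple comparison of the conditional means of $\theta$ on either side of the trigger $\bar\theta$. First I would differentiate $\pi$ in Equation~\ref{eq-pi} with respect to $x$, noting that the insurance premium $\rho$ and the lump-sum payout $\gamma$ do not depend on $x$ and hence do not appear in the marginal: $\partial\pi/\partial x = \hat{B}f'(x) + \theta f'(x) - c'(x)$. Interpreting the lemma's notation as $\partial\mathbb{E}[\pi\mid \cdot]/\partial x$, I would interchange differentiation and conditional expectation (justified by the linearity of $\pi$ in $\theta$) to obtain
\begin{equation*}
\mathbb{E}\!\left[\frac{\partial\pi}{\partial x}\,\Big|\,\theta<\bar\theta\right] - \mathbb{E}\!\left[\frac{\partial\pi}{\partial x}\,\Big|\,\theta>\bar\theta\right] = f'(x)\bigl(\mathbb{E}[\theta\mid\theta<\bar\theta] - \mathbb{E}[\theta\mid\theta>\bar\theta]\bigr),
\end{equation*}
because the nonstochastic terms $\hat{B}f'(x)$ and $c'(x)$ cancel across the two states.

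Next I would sign the right-hand side. Since $f$ is increasing, $f'(x)>0$. The bracketed conditional-mean difference is strictly negative by a standard truncation argument: every realization in the event $\{\theta<\bar\theta\}$ lies strictly below every realization in $\{\theta>\bar\theta\}$, so
\begin{equation*}
\mathbb{E}[\theta\mid\theta<\bar\theta] \;<\; \bar\theta \;<\; \mathbb{E}[\theta\mid\theta>\bar\theta]
\end{equation*}
whenever $J$ places positive mass on both sides of $\bar\theta$. The product of a positive quantity and a negative quantity is negative, which is exactly the claim.

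The main obstacle---though a modest one---is recognizing that the insurance contract does not contaminate $\partial\pi/\partial x$. Because the contract is an \emph{index} contract whose payout depends on $\theta$ alone, both $\rho$ and the within-state payout $\gamma$ are constants in $x$ and vanish under differentiation; this is the key structural feature distinguishing the analysis from an indemnity setup and is what makes the truncated-mean comparison clean. The only regularity I would need beyond the stated assumptions is that $J$ places positive mass on both sides of $\bar\theta$, which is implicit in the author's framing of distinct ``good'' and ``bad'' states. Notably, risk aversion, concavity of $f$, and convexity of $c$ play no role in the argument, since the comparison is between marginal profits in two states rather than between optimized input choices.
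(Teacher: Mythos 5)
Your proof is correct and follows essentially the same route as the paper's: cancel the nonstochastic terms $\hat{B}f_x$ and $c_x$ across the two states and reduce the difference to $f_x(x)\bigl(\mathbb{E}[\theta\mid\theta<\bar\theta]-\mathbb{E}[\theta\mid\theta>\bar\theta]\bigr)$, which is negative because $f_x>0$ and the truncated means are ordered. The paper's appendix version additionally repeats the cancellation under the risky production function of Section~3 (where independence of $\omega$ and $\theta$ eliminates the $\mathbb{E}[\omega h_x\mid\cdot]$ terms), an extension invoked later but not needed for the lemma as stated; your explicit truncation inequality, the positive-mass caveat, and attributing $f_x>0$ to monotonicity rather than concavity are if anything more careful than the paper's own write-up.
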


The proof of Lemma~\ref{lem-theta} is included in the appendix.

Risk aversion is a necessary condition for insurance to be desirable
\citep{Outreville2014}. Therefore, we assume fishers are risk averse to
income shocks through a concave utility function. Fishers will maximize
their own expected utility in the lottery by choosing the input given an
exogenous insurance contract (Equation~\ref{eq-max}).

\begin{equation}\label{eq-max}{
\begin{aligned}
U\equiv\max_{x}\mathbb{E}[U]=& \int^{\bar \theta}_{-\infty}j(\theta)u(\pi(x,\hat{B},\theta)+(1-J(\bar \theta))\gamma)d\theta\\
&+\int^{\infty}_{\bar{\theta}}j(\theta) u(\pi(x,\hat{B},
\theta)-J(\bar \theta)\gamma)d \theta
\end{aligned}
}\end{equation}

The first order condition that solves Equation~\ref{eq-max} is then:

\begin{equation}\label{eq-foc1}{
\begin{aligned}
\frac{\partial U}{\partial x}=&\int^{\bar \theta}_{-\infty}j(\theta)u_{x}(\pi(x,\hat{B},\theta)+(1-J(\bar \theta))\gamma)\frac{\partial \pi}{\partial x}(x,\hat{B},\theta)d\theta\\
&+\int^{\infty}_{\bar{\theta}}j(\theta) u_{x}(\pi(x,\hat{B},\theta)-J(\bar \theta)\gamma)\frac{\partial \pi}{\partial x}(x,\hat{B},\theta) d\theta\\
&=0
\end{aligned}
}\end{equation}

To find the effect of insurance on optimal input, we use the implicit
function theorem to examine how input choice varies with the insurance
payout $\gamma$:

\begin{equation}\label{eq-ivt}{
\frac{\partial x^{*}}{\partial \gamma}=-\frac{\frac{\partial U}{\partial x \partial \gamma}}{\frac{\partial^2 U}{\partial x^{2}}}
}\end{equation}

We use $\gamma$ to test insurance effects, because a marginal change
in the payout increases the value of insurance\footnote{As shown in
  Section~\ref{sec-sim}, marginal utility reaches a maximum value before
  decreasing at high $\gamma$. Therefore we assume the change in
  $\gamma$ occurs in regions of increasing marginal utility in
  $\gamma$ for simplicity, but our proof will hold for any $\gamma$.}.
Receiving more compensation in the bad states provides greater boosts to
utility for risk averse fishers.

By the sufficient condition of a maximization problem,
$\frac{\partial^2 U}{\partial x^{2}}$ is negative so we can focus
solely on the numerator to sign the effect. The numerator of
Equation~\ref{eq-ivt} is given by:

\begin{equation}\label{eq-xgam}{
\begin{aligned}
\frac{\partial U}{\partial x \partial \gamma}=\int^{\bar \theta}_{-\infty}j(\theta)u''(\pi(x,\hat{B},\theta)+(1-J(\bar \theta))\gamma)\frac{\partial \pi}{\partial x}(x,\hat{B},\theta)(1-J(\bar \theta))d\theta\\
+\int^{\infty}_{\bar{\theta}}j(\theta) u''(\pi(x,\hat{B},\theta)-J(\bar \theta)\gamma)\frac{\partial \pi}{\partial x}(x,\hat{B},\theta)(-J(\bar \theta)) d\theta
\end{aligned}
}\end{equation}

\begin{proposition}\label{prp-std}

For insurance contracts specified at trigger $\bar\theta=0$ with
production function $y=\hat{B}f(x)+\theta f(x)$, optimal harvest will
always increase with an increase in $\gamma$.

\end{proposition}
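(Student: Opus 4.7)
The plan is to apply the implicit function theorem of (\ref{eq-ivt}): the second-order condition forces $\partial^2 U/\partial x^2 < 0$, so the sign of $\partial x^*/\partial \gamma$ equals the sign of the cross partial $N := \partial^2 U/(\partial x\,\partial\gamma)$ written in (\ref{eq-xgam}). I would collapse its two pieces into one expectation $N=\mathbb{E}[u''(W)\,g(\theta)\,m(\theta)]$, where $g(\theta) = (\hat B+\theta)f'(x)-c'(x)$ is marginal profit, $W(\theta)=\pi(\theta)+m(\theta)\gamma$ is state-contingent wealth, and the indemnity weight $m(\theta)$ equals $p := 1-J(0)$ on the bad state $\theta<0$ and $-q := -J(0)$ on the good state. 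Actuarial fairness gives $\mathbb{E}[m]=0$, so $N = \mathrm{Cov}(u''(W) g,\,m)$, and it suffices to show this covariance is strictly positive.

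The key structural property of $y = B f(x)$ is that $g$ is strictly increasing in $\theta$, since $\partial g/\partial\theta = f'(x) > 0$; equivalently, $g = \alpha\pi + \beta$ with $\alpha = f'/f > 0$ is a linear increasing function of the stochastic profit. So $x$ is risk-increasing in the Sandmo sense, and Lemma~\ref{lem-theta} is exactly the conditional-mean manifestation of this fact: $g$ is smaller on average in the bad state, so $g$ and $m$ co-move in opposite directions. Combined with concavity of $u$, which makes $u''(W)$ more negative where $W$ is lower (i.e.\ on the bad state), the product $u''(W) g$ is systematically less negative (or even positive) on the bad state and more negative on the good state, delivering the positive covariance with $m$ that I need.

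To convert this intuition into a proof I would substitute $\pi = W - m\gamma$ to write $g = \alpha W + \beta - \alpha m\gamma$, decomposing $N = \alpha\,\mathbb{E}[u''(W) W m] + \beta\,\mathbb{E}[u''(W) m] - \alpha\gamma\,\mathbb{E}[u''(W) m^2]$. The third piece is unambiguously positive because $u'' < 0$ and $m^2 \ge 0$ force the expectation negative while the coefficient $-\alpha\gamma$ is nonpositive. For the first two I would invoke the first-order condition (\ref{eq-foc1}), which after the same substitution reads $\alpha\,\mathbb{E}[u'(W)\pi] + \beta\,\mathbb{E}[u'(W)] = 0$ and eliminates $\beta$ in terms of $u'$-weighted moments of $\pi$; what remains can then be signed using only that $u'$ is decreasing, which makes $u'(W)$ and the bad-state indicator positively associated. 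Concluding $N > 0$ gives $\partial x^*/\partial\gamma > 0$, and because $y = (\hat B+\theta) f(x)$ is strictly monotone in $x$, expected harvest rises with the indemnity $\gamma$.

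The main obstacle will be the algebraic reduction after FOC substitution: the residual expression mixes $u'$ and $u''$ cross terms whose relative sizes depend on how absolute risk aversion interacts with the insurance wedge, and without imposing DARA or CARA they do not cancel cleanly. A likely fallback is to integrate by parts in $\theta$ using $\partial u'(W)/\partial\theta = u''(W) f(x)$, trading $u''$ for $u'$ inside each integral; the FOC then annihilates the bulk contributions and leaves only a boundary term at the trigger $\theta = 0$ proportional to $j(0)\,g(0)\,\gamma$, which is positive because the first-order condition together with actuarial fairness force $g(0) > 0$ at any interior optimum.
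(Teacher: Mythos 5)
Your setup is right, and your central structural observation --- that marginal profit $g(\theta)=(\hat B+\theta)f'(x)-c'(x)$ is increasing in $\theta$, which is exactly what Lemma~\ref{lem-theta} records --- is the same one the paper uses. But your proof does not close, and you essentially concede this yourself. In the decomposition $N=\alpha\,\mathbb{E}[u''(W)Wm]+\beta\,\mathbb{E}[u''(W)m]-\alpha\gamma\,\mathbb{E}[u''(W)m^{2}]$, only the third term is signable from concavity alone; the first two depend on how $u''$ varies with wealth (i.e.\ on $u'''$ and the shape of absolute risk aversion), and the first-order condition $\mathbb{E}[u'(W)g]=0$ constrains $u'$-weighted moments, not $u''$-weighted ones, so substituting it cannot eliminate them. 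Your supporting intuition that ``concavity makes $u''(W)$ more negative where $W$ is lower'' is itself a prudence-type assumption, not a consequence of $u''<0$. The integration-by-parts fallback has the same defect: differentiating $j(\theta)g(\theta)$ produces $j'(\theta)g(\theta)$ and $j(\theta)f'(x)$ terms that do not reassemble into the FOC, so the ``bulk contributions'' are not annihilated and you are not left with only a boundary term at the trigger.

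The paper avoids all of this with one blunt device you did not use: it supposes insurance fully covers the loss between states, so that the utility argument --- and hence $u''(\cdot)$ --- is the same in the good and bad states and factors out of Equation~\ref{eq-xgam}. What remains is $J(\bar\theta)(1-J(\bar\theta))\,u''(\cdot)$ multiplying the difference of state-conditional expected marginal profits, which is negative times negative by Lemma~\ref{lem-theta}, hence $\partial^{2}U/\partial x\,\partial\gamma>0$ and $\partial x^{*}/\partial\gamma>0$. If you want the result without that full-coverage assumption, you must impose something like CARA or DARA to sign your residual $u''$-moments; as written, the general claim is not established.
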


\begin{proof}
Suppose insurance fully covers the loss between states, then utility in
the good state and bad state are equal to each other so that we can
factor out like terms in Equation~\ref{eq-xgam}. For brevity, all like
terms including $\gamma$ are indicated by $u(\cdot)$.

\begin{equation}\label{eq-ps}{
\begin{aligned}
\frac{U}{\partial x \partial \gamma}&=J(\bar\theta)(1-J(\bar\theta))u''(\cdot)\\
&\times\left[\int^{\bar\theta}_{-\infty}j(\theta)\frac{\partial \pi}{\partial x}(x,\hat{B},\theta)d\theta-\int^{\infty}_{\bar{\theta}}j(\theta)\frac{\partial \pi}{\partial x}(x,\hat{B},\theta)d\theta\right ]
\end{aligned}
}\end{equation}

The first term of Equation~\ref{eq-ps} is negative by the concavity of
utility, $u''(\cdot)<0$. The second term of Equation~\ref{eq-ps} is
negative by Lemma~\ref{lem-theta}.

Therefore, an index insurance contract built assuming a standard fishery
production model will always lead to increased input use, and thus more
harvest.
\end{proof}

Fishers will choose an insurance contract that maximizes their expected
utility without market constraints. There could be market distortions
such as subsidies or limits on coverage that may prevent fishers from
choosing the optimal level. Regardless of the level of insurance
payouts, fishers will always increase input use by
Proposition~\ref{prp-std}.

Proposition~\ref{prp-std} immediately challenges the sustainability of
index insurance in fisheries. Fishers will always increase input choices
if we assume a standard fishery production model. The intuition why
fishers would increase harvest is identical to observations from
agriculture that studied similar specifications of production
\footnote{Equation~\ref{eq-trad2} is a special case of the production
  function used in \citet{Mahul2001} where $f(x)$ is exactly equal to
  $h(x)$ used in his specification.} \cite{Ramaswami1993,Mahul2001}.
Risk increasing inputs increase the variance of harvest (Just and Pope
1978). Effort is inherently ``risk-increasing'' in
Equation~\ref{eq-trad2} and leads to more variance\footnote{Note that:
  $V(y)=\sigma^2f(x)^2$, the derivative is always positive because
  $\frac{f(x)}{\partial x}>0$:
  $\frac{\partial V(y)}{\partial x}=2\sigma^2\frac{\partial f(x)}{\partial x}$}.
Insurance lowers exposure to income variance. With less marginal
exposure to risk, fishers can increase their risky input use.

However, fishers are exposed to more margins of risk and possess means
to mitigate those risks beyond ways captured in the simple canonical
model. In the next section we expand fishery production to include
multiple sources of risk, and incorporate risk mitigation strategies
present in fisheries.

\section{Index insurance with risky fishery
production}\label{sec-common}

Anecdotally and empirically, fishers make informed decisions to mitigate
risk beyond just exposure to stock risk \citep{Kirkley1998,Eggert2004,Kompas2004,Smith2005,Sethi2010,Holland2008,Pfeiffer2020,Pfeiffer2022}.
Stock risk, $\theta$, remains an inevitable source of risk that we
must maintain in any fishery model, but we expand the standard fishery
model in Equation~\ref{eq-trad2} to Equation~\ref{eq-jpfish} by adding a
second source of risk called extraction risk, $\omega$, and new ways
inputs could interact with extraction risk through $h(x)$.

\begin{equation}\label{eq-jpfish}{
y=f(x)\hat{B}+\theta f(x)+\omega h(x)
}\end{equation}

All other forms of risk not captured by stock risk are extraction risk,
$\omega$, where $\omega$ is a random variable unknown to fishers
before the choose inputs with $\mathbb{E}[\omega]=0$ and variance
$\sigma_\omega^2$. Foul weather, regulatory changes, or inherent
variability in extraction all impact fisher production. Fisher inputs
may interact with these risks through the extraction risk effect
function $h(x)$. Extraction risk effects may reduce or increase risk.
Inputs that increase variance are called risk increasing, $h_x(x)>0$,
and inputs that decrease variance are called risk decreasing,
$h_x(x)<0$ \citep{Just1978}.

We will refer to production in the form of Equation~\ref{eq-jpfish} as
``risky production''. Risky production is more flexible than the
standard fishery production model in Equation~\ref{eq-trad2} as it
allows for two sources of risk and multiple avenues for fishers to
mitigate risk. Risky production nests the standard fishery production
model as a special case when $\omega=0$ or $h(x)=0$. It also
maintains increasing mean production in inputs so that changes in inputs
correspond to changes in expected harvest and conservation.

With two sources of risk and adaptive margins, we have created a more
nuanced representation of risky production in fisheries. However, it is
not immediately clear how index insurance will affect fisher harvest
decisions with Equation~\ref{eq-jpfish}. We can amend the insurance
framework from Section~\ref{sec-jp} to assess the potential behavior
implications of an insurance contract to protect against stock,
$\theta$, or extraction risk, $\omega$.

Profits with risky production update Equation~\ref{eq-pi} to
Equation~\ref{eq-pi1}:

\begin{equation}\label{eq-pi1}{
\begin{aligned}
\pi=f(x)\hat{B}+\theta f(x)+\omega h(x)-c(x)
\end{aligned}
}\end{equation}

We assume fishers and insurance companies have perfect information on
both distributions of random variables.

We create insurance lotteries through contracts that use either
$\omega$ or $\theta$ as the trigger. For notational ease, we present
the model with a contract built on $\omega$, but the structure is
interchangeable with contracts built on $\theta$. By allowing
contracts on only one of the random variables, we introduce basis risk
as a contract triggered solely on $\omega$ cannot protect against all
the biological risk of $\theta$. We assume that $\theta$ and
$\omega$ are independent\footnote{In the appendix we include proofs
  when shocks are perfectly correlated, but the results become difficult
  to extract clear insights}. An example of how shocks could be
independent is that stock shocks operate at different time scales than
extraction shocks\citep{Alvarez2006}.

Payouts remain a constant $\gamma$ and fishers pay a premium,
$\rho=J(\bar(\omega)\gamma)$, in both states.

As before, the marginal profits in both states are crucial to
understanding fisher behavior. For contracts built on $\omega$ the
risk effects function $h(x)$ becomes influential. We introduce the
following lemma to help us compare marginal profits in both states:

\begin{lemma}\label{lem-mp}

Expected marginal profit is higher in bad states for risk decreasing
inputs when insurance contracts are built on extraction risk $\omega$.

$\frac{\mathbb{E}[\partial \pi|\omega<\bar \omega]}{\partial x}-\frac{\mathbb{E}[\partial \pi|\omega>\bar \omega]}{\partial x}>0$
if $h_{x}(x)<0$.

Otherwise, risk increasing inputs lead to higher expected marginal
profit in the good states.

$\frac{\mathbb{E}[\partial \pi|\omega<\bar \omega]}{\partial x}-\frac{\mathbb{E}[\partial \pi|\omega>\bar \omega]}{\partial x}<0$
if $h_{x}(x)>0$

\end{lemma}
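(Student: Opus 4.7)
The plan is to reduce the difference of conditional marginal profits to a single scalar whose sign is governed entirely by $h_x(x)$. I would begin by differentiating the profit function in Equation~\ref{eq-pi1} with respect to the input, obtaining
\begin{equation*}
\frac{\partial \pi}{\partial x} = f_x(x)\hat{B} + \theta f_x(x) + \omega h_x(x) - c_x(x).
\end{equation*}
Then I would take the conditional expectation of this expression under each event $\{\omega<\bar\omega\}$ and $\{\omega>\bar\omega\}$, using the independence of $\theta$ and $\omega$ so that $\mathbb{E}[\theta\mid\omega<\bar\omega]=\mathbb{E}[\theta\mid\omega>\bar\omega]=\mathbb{E}[\theta]=0$, which kills the stock-risk contribution in both conditionals.

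Next, since the non-stochastic terms $f_x(x)\hat B$ and $-c_x(x)$ survive identically in both conditional expectations, subtracting yields
\begin{equation*}
\frac{\mathbb{E}[\partial \pi\mid\omega<\bar\omega]}{\partial x}-\frac{\mathbb{E}[\partial \pi\mid\omega>\bar\omega]}{\partial x} = h_x(x)\bigl(\mathbb{E}[\omega\mid\omega<\bar\omega]-\mathbb{E}[\omega\mid\omega>\bar\omega]\bigr).
\end{equation*}
The bracketed quantity is the difference between the conditional mean of $\omega$ on the lower tail and on the upper tail. Given the convention established for the $\theta$ contract (with $\bar\theta=0$ and $\mathbb{E}[\theta]=0$), and since $\mathbb{E}[\omega]=0$ is assumed with $\bar\omega=0$, the lower-tail conditional mean is strictly negative and the upper-tail conditional mean is strictly positive, so the bracket is negative. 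This step mirrors the argument used for Lemma~\ref{lem-theta}, which I would cite for the analogous tail-mean ordering.

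With the bracket established as negative, the sign of the whole expression is determined by $h_x(x)$: if $h_x(x)<0$ the product is positive (risk-decreasing inputs yield higher expected marginal profit in the bad state), while if $h_x(x)>0$ the product is negative (risk-increasing inputs yield higher expected marginal profit in the good state), which is exactly the two-part claim. The main obstacle is the justification that $\mathbb{E}[\omega\mid\omega<\bar\omega]<\mathbb{E}[\omega\mid\omega>\bar\omega]$ strictly; I would handle this by noting that any non-degenerate distribution with $\mathbb{E}[\omega]=0$ and a trigger at the mean has this tail-mean separation, and then briefly remark that the argument extends to any $\bar\omega$ in the support since truncation from above always yields a smaller conditional mean than truncation from below.
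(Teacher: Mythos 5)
Your proposal is correct and follows essentially the same route as the paper's proof: differentiate the profit function, use independence of $\theta$ and $\omega$ to eliminate the stock-risk term from both conditional expectations, cancel the deterministic terms, and sign the remaining difference $h_x(x)\bigl(\mathbb{E}[\omega\mid\omega<\bar\omega]-\mathbb{E}[\omega\mid\omega>\bar\omega]\bigr)$ by the tail-mean ordering. Your explicit justification of that ordering is a slight elaboration on the paper, which leaves the final signing step implicit, but the argument is the same.
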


The proof of Lemma~\ref{lem-mp} is included in the appendix. The results
of Lemma~\ref{lem-theta} hold even with risky production in
Equation~\ref{eq-jpfish} as shown in the proof of Lemma~\ref{lem-theta}.

Fishers consider the joint distribution $j(\omega,\theta)$ of shocks
to maximize their utility with exogenous insurance contracts in
Equation~\ref{eq-maxrisk}.

\begin{equation}\label{eq-maxrisk}{
\begin{aligned}
U\equiv\max_{x}\mathbb{E}[U]=\int^{\infty}_{-\infty}&\left[ \int^{\bar \omega}_{-\infty}j(\omega,\theta)u(\pi(x,\hat{B},\theta,\omega)+(1-J(\bar \omega))\gamma)d\omega \right.\\
&\left.+\int^{\infty}_{\bar{\omega}}j(\omega,\theta) u(\pi(x,\hat{B},
\theta,\omega)-J(\bar \omega)\gamma)d \omega\right] d\theta
\end{aligned}
}\end{equation}

The first order condition that solves Equation~\ref{eq-max} is then:

\begin{equation}\label{eq-focrisk}
\begin{aligned}
\frac{\partial U}{\partial x}
&= \int_{-\infty}^{\infty}
\Biggl[
\int_{-\infty}^{\bar\omega}
j(\omega,\theta)\,
u_{x}\!\Bigl(
\pi(x,\hat{B},\theta,\omega)
+ (1 - J(\bar\omega))\gamma
\Bigr) \\
&\qquad\qquad\quad
\times
\frac{\partial \pi}{\partial x}
(x,\hat{B},\theta,\omega)
\, d\omega \\
&\quad
+ \int_{\bar\omega}^{\infty}
j(\omega,\theta)\,
u_{x}\!\Bigl(
\pi(x,\hat{B},\theta,\omega)
- J(\bar\omega)\gamma
\Bigr) \\
&\qquad\qquad\quad
\times
\frac{\partial \pi}{\partial x}
(x,\hat{B},\theta,\omega)
\, d\omega
\Biggr]
\, d\theta
= 0 .
\end{aligned}
\end{equation}

Changes in optimal input due to insurance remain characterized by the
implicit function theorem of Equation~\ref{eq-ivt}. Now the numerator is
given by:

\begin{equation}\label{eq-xgamrisk}
\begin{aligned}
\frac{\partial^2 U}{\partial x \, \partial \gamma}
&= \int_{-\infty}^{\infty} 
\Biggl[
\int_{-\infty}^{\bar{\omega}}
j(\omega,\theta)\,
u''\!\Bigl(
\pi(x,\hat{B},\theta,\omega)
+ (1 - J(\bar{\omega}))\gamma
\Bigr) \\
&\qquad\qquad\quad
\times
\frac{\partial \pi}{\partial x}(x,\hat{B},\theta,\omega)
\, (1 - J(\bar{\omega})) \, d\omega \\
&\quad +
\int_{\bar{\omega}}^{\infty}
j(\omega,\theta)\,
u''\!\Bigl(
\pi(x,\hat{B},\theta,\omega)
- J(\bar{\omega})\gamma
\Bigr) \\
&\qquad\qquad\quad
\times
\frac{\partial \pi}{\partial x}(x,\hat{B},\theta,\omega)
\, (- J(\bar{\omega})) \, d\omega
\Biggr]
\, d\theta .
\end{aligned}
\end{equation}

We examine input decisions with insurance contingent on the source of
risk the insurance is designed to protect. Proposition~\ref{prp-ind}
demonstrates that contracts built on extraction risks depend on the
underlying input risk effects.

\begin{proposition}\label{prp-ind}

For index insurance contracts specified at trigger $\bar\omega=0$,
optimal fisher harvest will decrease with an increase in $\gamma$ when
$h_x(x)<0$ and increase when $h_x(x)>0$.

\end{proposition}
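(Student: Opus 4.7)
The plan is to mirror the proof of Proposition~\ref{prp-std}, adapted to the double-integral structure induced by the joint density $j(\omega,\theta)$. I would begin with the numerator of the implicit-function-theorem expression~\eqref{eq-xgamrisk} and invoke the same ``insurance fully covers the loss between states'' simplification used in Proposition~\ref{prp-std}, so that the arguments of $u''$ coincide across the good and bad states and $u''(\cdot)$ can be pulled out of both integrals. Converting the remaining integrals into conditional expectations then yields
\begin{equation*}
\frac{\partial^{2} U}{\partial x\,\partial \gamma}
= J(\bar\omega)\bigl(1-J(\bar\omega)\bigr)\, u''(\cdot)\left[\mathbb{E}\!\left[\frac{\partial \pi}{\partial x}\,\bigg|\,\omega<\bar\omega\right]-\mathbb{E}\!\left[\frac{\partial \pi}{\partial x}\,\bigg|\,\omega>\bar\omega\right]\right].
\end{equation*}

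Next I would exploit the independence assumption $j(\omega,\theta)=j(\omega)j(\theta)$ to perform the $\theta$-integration first. Because $\theta$ enters $\partial \pi/\partial x = f_{x}(x)\hat{B}+\theta f_{x}(x)+\omega h_{x}(x)-c_{x}(x)$ linearly and $\mathbb{E}[\theta]=0$, this marginalization kills the $\theta f_{x}(x)$ term and collapses the conditional-expectation difference above to a one-dimensional object in $\omega$ that is \emph{exactly} the quantity signed by Lemma~\ref{lem-mp}. By that lemma the bracket is strictly positive when $h_{x}(x)<0$ and strictly negative when $h_{x}(x)>0$.

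Finally I would combine signs via~\eqref{eq-ivt}. The second-order condition gives $\partial^{2} U/\partial x^{2}<0$, so $\partial x^{*}/\partial \gamma$ inherits the sign of the numerator above. Since $u''(\cdot)<0$ by risk aversion and $J(\bar\omega)(1-J(\bar\omega))>0$, the numerator carries the opposite sign of the bracket: negative when $h_{x}(x)<0$, positive when $h_{x}(x)>0$. Hence $\partial x^{*}/\partial \gamma<0$ in the risk-decreasing case and $\partial x^{*}/\partial \gamma>0$ in the risk-increasing case. Because expected harvest $\mathbb{E}[y]=\hat{B}f(x)$ is strictly increasing in $x$, the same sign conclusions transfer to harvest, which is what the proposition asserts.

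The main obstacle I anticipate is the rigorous justification of the ``full coverage'' step that lets $u''(\cdot)$ be factored out of the integrals. In general $u''$ varies with $\omega$ and $\theta$ through $\pi$, so the factorization is only exact in a limiting configuration of $\gamma$; the footnote following~\eqref{eq-ivt} handles this by restricting attention to the range of $\gamma$ where marginal utility is monotone and then arguing that the sign conclusion extends to arbitrary $\gamma$. A careful write-up would either reproduce that extension explicitly or, equivalently, verify that retaining $u''$ inside the integrals as a strictly positive state-dependent weight preserves the sign of the weighted difference of marginal profits already signed by Lemma~\ref{lem-mp}.
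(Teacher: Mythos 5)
Your proposal is correct and follows essentially the same route as the paper's proof: factor the joint density using independence, invoke the full-coverage simplification to pull $u''(\cdot)$ and $J(\bar\omega)(1-J(\bar\omega))$ out of the state integrals, sign the remaining difference of conditional expected marginal profits by Lemma~\ref{lem-mp}, and transfer the sign through the implicit function theorem and the monotonicity of expected harvest in $x$. The only cosmetic difference is that the paper keeps $u''(\theta,\cdot)$ inside the outer $\theta$-integral rather than factoring it out completely, which does not change the argument since it is negative pointwise; your closing caveat about the rigor of the full-coverage step applies equally to the paper's own proof.
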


\begin{proof}
Independence of $\omega$ and $\theta$ allows us to factor out the
joint distribution in the integral of Equation~\ref{eq-xgamrisk} into
the respective marginal distributions shown by $j_\theta(\theta)$ and
$j_\omega(\omega)$ respectively.

\begin{equation}\label{eq-egam}
\begin{aligned}
\frac{\partial^2 U}{\partial x \, \partial \gamma}
&= \int_{-\infty}^{\infty} j_\theta(\theta)
\Biggl[
\int_{-\infty}^{\bar\omega} j_\omega(\omega)\,
u''\!\Bigl(
\pi(x,\hat{B},\theta,\omega)
+ (1 - J(\bar\omega))\gamma
\Bigr) \\
&\qquad\qquad\quad
\times
\frac{\partial \pi}{\partial x}(x,\hat{B},\theta,\omega)
\, (1 - J(\bar\omega)) \, d\omega \\
&\quad +
\int_{\bar\omega}^{\infty} j_\omega(\omega)\,
u''\!\Bigl(
\pi(x,\hat{B},\theta,\omega)
- J(\bar\omega)\gamma
\Bigr) \\
&\qquad\qquad\quad
\times
\frac{\partial \pi}{\partial x}(x,\hat{B},\theta,\omega)
\, (- J(\bar\omega)) \, d\omega
\Biggr]
\, d\theta .
\end{aligned}
\end{equation}

Suppose insurance fully covers the loss between states, then utility in
the good state and bad state are equal to each other so that we can
factor out like terms in Equation~\ref{eq-egam}. For brevity, all like
terms including $\gamma$ are indicated by $u(\cdot)$.

\begin{equation}\label{eq-simp}
\begin{aligned}
\frac{\partial^2 U}{\partial x \, \partial \gamma}
&= \int_{-\infty}^{\infty} j_\theta(\theta)
\, J(\bar\omega) (1 - J(\bar\omega)) \, u''(\theta,\cdot) \\
&\quad \times
\Biggl[
\int_{-\infty}^{\bar\omega} j_\omega(\omega)\,
\frac{\partial \pi}{\partial x}(x,\hat{B},\theta,\omega)
\, d\omega \\
&\qquad
-
\int_{\bar\omega}^{\infty} j_\omega(\omega)\,
\frac{\partial \pi}{\partial x}(x,\hat{B},\theta,\omega)
\, d\omega
\Biggr]
\, d\theta .
\end{aligned}
\end{equation}

The first term outside the brackets is negative by the definition of
concave utility, $u''<0$. Lemma~\ref{lem-mp} demonstrates the interior
of the brackets is positive when $h_x(x)<0$ as the marginal profit in
the bad state is greater than the marginal profit in the good.
Therefore, index insurance will decrease input use for risk decreasing
inputs when the extraction shocks are independent of stock shocks.

When $h_x(x)>0$, the interior sign of the brackets in
Equation~\ref{eq-simp} is negative by Lemma~\ref{lem-mp}. Therefore,
index insurance will increase input use for risk increasing inputs.

Direction of input change will exactly match the direction of harvest
change by the properties of the production function.
\end{proof}

Proposition~\ref{prp-ind} shows contracts specified on extraction risk
could have positive or negative effects on harvest depending on the risk
effects of the input. Insurance lowers the use of risk decreasing inputs
because the need to protect against risk with that input is replaced by
the risk mitigating qualities of insurance. Insurance increases risk
increasing inputs as it protects against additional risk allowing
fishers to expand production without taking on greater risk. The stock
risk persists when contracts are specified on extraction risks, but is
not influential in the marginal decision with insurance. Instead, if we
specify a contract on the stock shocks, then the stock risk effects
become more prevalent and distinctly changes the harvest outcome.

\begin{proposition}\label{prp-theta}

For index insurance contracts specified at trigger $\bar\theta=0$,
optimal harvest will always increase with an increase in $\gamma$.

\end{proposition}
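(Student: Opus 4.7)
The plan is to mirror the structure of Proposition~\ref{prp-std} in the risky-production setting, and to show that the extraction margin drops out of the relevant expression once we average over $\omega$. First I would rewrite Equation~\ref{eq-maxrisk} with the roles of $\theta$ and $\omega$ swapped: the contract is now triggered on $\theta$, so the inner integral splits the support of $\theta$ at $\bar\theta=0$ with payout $(1-J(\bar\theta))\gamma$ in the bad state and $-J(\bar\theta)\gamma$ in the good state, while the outer integral runs over $\omega$. Differentiating the corresponding first-order condition in $\gamma$ yields an analog of Equation~\ref{eq-xgamrisk} with $\theta$ and $\omega$ interchanged.

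Next, I would use the independence of the two shocks to factor $j(\theta,\omega)=j_\theta(\theta)j_\omega(\omega)$, impose full coverage so that the $u''$ arguments coincide in both lottery states, and pull out the common scalar $J(\bar\theta)(1-J(\bar\theta))u''(\cdot)$, just as in Equation~\ref{eq-ps}. The key maneuver is then to swap the order of integration and integrate $\omega$ first. Because
\[
\frac{\partial \pi}{\partial x}(x,\hat B,\theta,\omega) = \hat B f_x(x) + \theta f_x(x) + \omega h_x(x) - c_x(x)
\]
is linear in $\omega$ and $\mathbb{E}[\omega]=0$, the $\omega h_x(x)$ contribution integrates to zero. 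What remains is a bracket in $\theta$ of exactly the same form as in Equation~\ref{eq-ps}, with the marginal profit replaced by its $\omega$-average $\hat B f_x(x) + \theta f_x(x) - c_x(x)$.

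Finally, I would apply Lemma~\ref{lem-theta}, which the paper explicitly notes continues to hold under risky production, to conclude that this bracket is negative. Together with $u''<0$ and the nonnegative factor $J(\bar\theta)(1-J(\bar\theta))$, the numerator of Equation~\ref{eq-ivt} is positive, so $\partial x^*/\partial\gamma>0$, and since expected harvest is strictly increasing in $x$ under risky production, harvest rises as well. The substantive step, rather than any technical obstacle, is the vanishing of $h_x(x)$ after the $\omega$ integration; this is what makes stock-risk contracts insensitive to whether inputs are risk-increasing or risk-decreasing on the extraction margin, in sharp contrast to Proposition~\ref{prp-ind}.
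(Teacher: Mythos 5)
Your proposal is correct and follows essentially the same route as the paper: swap the roles of $\theta$ and $\omega$ in the insurance lottery, factor the joint density using independence, impose full coverage to pull out $J(\bar\theta)(1-J(\bar\theta))u''(\cdot)<0\cdot(\text{something nonnegative})$, and sign the remaining bracket by Lemma~\ref{lem-theta}. Your explicit step of integrating out $\omega$ first so that the $\omega h_x(x)$ term vanishes is the same cancellation the paper buries inside the proof of Lemma~\ref{lem-theta} (via $\mathbb{E}[\omega\mid\theta\lessgtr\bar\theta]=0$), so it is a presentational difference rather than a different argument.
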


\begin{proof}
A contract built with $\theta$ will follow the same steps as the proof
for Proposition~\ref{prp-ind} with the only difference being in the
integral bounds and the differential variables as shown in
Equation~\ref{eq-gstheta}. The 2nd term of Equation~\ref{eq-gstheta} is
always negative by Lemma~\ref{lem-theta}. Therefore, a contract built on
$\theta$ will always increase optimal input use.

\begin{equation}\label{eq-gstheta}
\begin{aligned}
\frac{\partial^2 U}{\partial x \, \partial \gamma}
&= \int_{-\infty}^{\infty} j_\omega(\omega)
\, J(\bar\theta) (1 - J(\bar\theta)) \, u''(\omega,\cdot) \\
&\quad \times
\Biggl[
\int_{-\infty}^{\bar\theta} j_\theta(\theta)\,
\frac{\partial \pi}{\partial x}(x,\hat{B},\theta,\omega)
\, d\theta \\
&\qquad
-
\int_{\bar\theta}^{\infty} j_\theta(\theta)\,
\frac{\partial \pi}{\partial x}(x,\hat{B},\theta,\omega)
\, d\theta
\Biggr]
\, d\omega .
\end{aligned}
\end{equation}

The change in input use will lead to higher expected optimal harvest.
\end{proof}

Proposition~\ref{prp-theta} demonstrates a clear bias towards
overfishing with contracts solely specified on $\theta$. Increased
stock abundance directly leads to higher variance. Insurance only
protects against the additional stock risk, and encourages fishers to
expand production. Fishers have no means to lower stock risk through
input choices. The marginal change in variance from insurance cannot be
influenced by extraction risks due to the additive specification.

Insurance contracts can only influence fisher input choices if the
inputs interact with the source of risk the contract is designed to
protect. Choosing which risk to protect has important consequences for
moral hazard effects.

\section{Insurance with multiple inputs}\label{sec-multi}

The single input model provides clear, testable insights. However, real
world fisheries are more complex than single input models. We develop a
multi-input model to represent this complexity. The multi-input model
provides the foundation of our numerical analysis that leverages
parameter estimations from \citet{Asche2020}. Their study
estimated production and risk effect parameters across three inputs in
Norwegian fisheries. The numerical analysis will allow us to quantify
the directional effects of insurance on fisher behavior for valuable
Norwegian fisheries.

We extend the model of the previous section to two inputs,
$X\in\{{x_a,x_b}\}$. Two inputs sufficiently articulate the
complexities that arise while still remaining tractable to solve.

There are two additional effects to consider when adding more inputs.
The interaction between inputs leads to the first effect. Changes in
input use may not correspond to the direction dictated by their
respective extraction risk effects. For example, a fisher may not choose
to reduce a risk decreasing input if the cross partial effects of
production and risk negatively impact production of another input. We
summarize the conditions that lead to unequivocal changes in input use
in Proposition~\ref{prp-samre}.

\begin{proposition}\label{prp-samre}

In fisheries with two inputs, index insurance specified with contracts
on $\omega$ will increase (decrease) the optimal use of a specific
input if the input's risk effects are increasing (decreasing) when the
following sufficient condition is true:

$\frac{\partial U}{\partial x_a\partial x_b}>0$ when both inputs share
the same risk effects, and
$\frac{\partial U}{\partial x_a\partial x_b}<0$ when inputs have
opposite risk effects.

Otherwise, index insurance may lead to ambiguous changes in the input
regardless of the input's own risk effect.

\end{proposition}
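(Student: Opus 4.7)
The plan is to lift the single-input implicit-function-theorem argument of Proposition~\ref{prp-ind} to the two-input system. Writing $U_{ij}:=\partial^{2}U/\partial x_{i}\partial x_{j}$ and $U_{i\gamma}:=\partial^{2}U/\partial x_{i}\partial \gamma$ for $i,j\in\{a,b\}$, the two first-order conditions implicitly define $(x_{a}^{*}(\gamma),x_{b}^{*}(\gamma))$. Differentiating the FOC system in $\gamma$ and inverting the Hessian $H$ of $U$ gives
\begin{equation*}
\begin{pmatrix}\partial x_{a}^{*}/\partial \gamma\\[2pt] \partial x_{b}^{*}/\partial \gamma\end{pmatrix}
= -\frac{1}{\det H}\begin{pmatrix} U_{bb} & -U_{ab}\\ -U_{ab} & U_{aa}\end{pmatrix}\begin{pmatrix}U_{a\gamma}\\ U_{b\gamma}\end{pmatrix}.
\end{equation*}

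Next I would sign each ingredient. The second-order condition for a maximum forces $H$ to be negative definite, so $U_{aa},U_{bb}<0$ and $\det H>0$, whence $-U_{ii}/\det H>0$ for $i\in\{a,b\}$. The argument in the proof of Proposition~\ref{prp-ind} goes through coordinate by coordinate to sign the cross-partials $U_{i\gamma}$: each is strictly positive when $h_{x_{i}}(X)>0$ and strictly negative when $h_{x_{i}}(X)<0$. Expanding the matrix product, $\partial x_{i}^{*}/\partial \gamma$ splits into an \emph{own} term $(-U_{jj}/\det H)\,U_{i\gamma}$, whose sign equals that of $U_{i\gamma}$, and a \emph{cross} term $(U_{ab}/\det H)\,U_{j\gamma}$, whose sign equals that of $U_{ab}\cdot U_{j\gamma}$.

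The remaining step is a case split on risk effects. When both inputs share the same risk effect, $U_{a\gamma}$ and $U_{b\gamma}$ carry the same sign, so the cross term reinforces the own term precisely when $U_{ab}>0$. When the inputs have opposite risk effects, $U_{a\gamma}$ and $U_{b\gamma}$ carry opposite signs, and the cross term reinforces the own term precisely when $U_{ab}<0$. In either case, under the stated hypothesis both contributions to $\partial x_{i}^{*}/\partial \gamma$ inherit the sign of $U_{i\gamma}$, establishing that each input moves in the direction dictated by its own risk effect; without the hypothesis the two contributions conflict and the net sign is indeterminate, which is the content of the ``otherwise'' clause.

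The main obstacle is purely the sign bookkeeping: one must track the four Hessian objects ($U_{aa},U_{bb},U_{ab},\det H$) together with the two $\gamma$-cross-partials ($U_{a\gamma},U_{b\gamma}$), and pair them correctly in the Cramer-style expansion. Once that bookkeeping is in place, the claim is an almost mechanical corollary of Proposition~\ref{prp-ind} applied input-by-input combined with the concavity of $U$ at the optimum, so no new economic mechanism needs to be introduced.
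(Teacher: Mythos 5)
Your proposal is correct and follows essentially the same route as the paper: the implicit function theorem applied to the two-equation FOC system, a Cramer-style inversion of the Hessian, signing $U_{aa},U_{bb},\det H$ via the second-order conditions and $U_{i\gamma}$ via the single-input argument (Lemma~\ref{lem-mp} under full coverage), and then the same case split showing the cross term reinforces the own term exactly when the stated sign condition on $U_{ab}$ holds. The only cosmetic difference is that you present the inversion in matrix form while the paper writes out the two cofactor expansions explicitly.
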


The proof is included in Section~\ref{sec-samre}.

The second effect is a straightforward consequence of adding inputs.
While insurance may incentivize greater use of one input, it may
simultaneously reduce use of another. The net change in harvest depends
on both the magnitude of adjustment in each input and their relative
production elasticities.

\begin{proposition}\label{prp-har}

When index insurance leads to increases (decreases) of both inputs,
total expected harvest will increase (decrease).

Otherwise, total change in expected harvest depends on the relative
change in input use and $\frac{\partial f(x_m)}{\partial x_m}$

\end{proposition}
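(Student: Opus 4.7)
The plan is to express expected harvest as a deterministic function of the two chosen inputs and then apply the chain rule to decompose $\frac{d\mathbb{E}[y]}{d\gamma}$ into contributions from each input's response to the insurance payout.

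First, I would note that inputs $x_a$ and $x_b$ are chosen before the shocks $\theta$ and $\omega$ are realized, so taking expectations of the multi-input analogue of Equation~\ref{eq-jpfish}, and using $\mathbb{E}[\theta]=\mathbb{E}[\omega]=0$, yields $\mathbb{E}[y]=\hat{B}f(x_a,x_b)$. The extraction-risk channel $\omega h(\cdot)$ drops out in expectation, so only the mean production term remains and the analysis of expected harvest reduces to the behavior of $\hat{B}f$ evaluated at the optima $x_a^{*}(\gamma)$ and $x_b^{*}(\gamma)$.

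Next, I would differentiate with respect to $\gamma$ using the chain rule through the optimal input choices:
\[
\frac{d\mathbb{E}[y]}{d\gamma}
= \hat{B}\left[
\frac{\partial f}{\partial x_a}\,\frac{\partial x_a^{*}}{\partial \gamma}
+ \frac{\partial f}{\partial x_b}\,\frac{\partial x_b^{*}}{\partial \gamma}
\right].
\]
Because $f$ is increasing in each input (the mean-production monotonicity emphasized after Equation~\ref{eq-jpfish}) and $\hat{B}>0$, both weights $\hat{B}\,\partial f/\partial x_m$ are strictly positive. Hence the sign of $\frac{d\mathbb{E}[y]}{d\gamma}$ is governed entirely by a positive-weight combination of the two input responses $\frac{\partial x_a^{*}}{\partial \gamma}$ and $\frac{\partial x_b^{*}}{\partial \gamma}$, whose signs in turn are characterized by Proposition~\ref{prp-samre}.

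The first claim then follows immediately: when the two responses share sign, the weighted sum inherits that sign, so expected harvest moves monotonically in the common direction of input change. For the second claim, if the signs of $\frac{\partial x_a^{*}}{\partial \gamma}$ and $\frac{\partial x_b^{*}}{\partial \gamma}$ differ, the net sign depends on whether the expanding input contributes more via its marginal product times response magnitude than the contracting input; that is exactly the dependence on the relative change in input use and $\partial f(x_m)/\partial x_m$ asserted in the proposition. The main obstacle I foresee is minor and purely expository: being explicit that input decisions are made prior to the realization of the shocks so that the expectation factors cleanly, and that $\hat{B}\,\partial f/\partial x_m$ is unambiguously positive so no hidden cancellations appear in the chain-rule sum. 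Beyond that, no risk preferences, distributional assumptions on shocks, or implicit-function-theorem machinery are needed.
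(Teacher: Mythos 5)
Your proposal is correct and follows essentially the same route as the paper: both arguments reduce expected harvest to $\hat{B}f(x_a,x_b)$, take the total derivative with positive marginal-product weights, and conclude that same-signed input responses yield an unambiguous harvest change while opposite-signed responses leave the sign dependent on the relative magnitudes weighted by $\partial f/\partial x_m$. Your version is marginally more explicit in justifying that the mean-zero shock terms drop out of $\mathbb{E}[y]$ and in writing the derivative with respect to $\gamma$ through the optimal input functions, but this is a presentational refinement rather than a different argument.
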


\begin{proof}
The total derivative of expected harvest is:

\begin{equation}\label{eq-totaly}{
\frac{d \mathbb{E}[y]}{dx}=\hat{B}\frac{\partial f(x_a,x_b)}{\partial x_a}dx_a+\hat{B}\frac{\partial f(x_a,x_b)}{\partial x_b}dx_b
}\end{equation}

Marginal production is positive, therefore
$\frac{\partial f(x_m)}{\partial x_m}>0$ for either input represented
by $x_m$. When $dx_a>0$ and $dx_b>0$, Equation~\ref{eq-totaly} is
always positive. The opposite is true when $dx_a<0$ and $dx_b<0$.

For inputs with changes in opposite directions, Equation~\ref{eq-totaly}
is positive or negative contingent on the relative weight between
$\frac{\partial f(x_a,x_b)}{\partial x_a}dx_a$ and
$\frac{\partial f(x_a,x_b)}{\partial x_b}dx_b$
\end{proof}

Proposition~\ref{prp-har} shows that reductions in certain inputs may be
offset by subsequent increases in more productive ones, thereby limiting
the conservation potential of index insurance. In other words, lowering
one margin of production through insurance does not necessarily result
in a smaller total harvest.

These two insights will help explain the modeled responses of fishers in
Section~\ref{sec-sim}. In general, stock and extraction risk effects
remain the leading influences on guiding fishers input choices after
buying insurance. Proposition~\ref{prp-har} and
Proposition~\ref{prp-samre} identify that within the complicated nexus
of multiple input interactions, certain inputs may dominate the overall
outcomes. Both propositions indicate that inputs that share risk effects
(e.g.~all inputs are risk increasing), ought to have the same
conclusions as observed in Section~\ref{sec-common}. Fisheries that use
inputs with opposite risk effects are impossible to sign without further
information. We turn to simulations in the following section to
elucidate the ambiguity.

\section{Numerical Simulations}\label{sec-sim}

We use numerical simulation to determine the magnitude of change in
input use. First, we analyze reasonable parameter estimates to isolate
the magnitude of single input changes. Next, we calibrate the model to
estimates from Norwegian fisheries using three inputs from the
parameters found in\citet{Asche2020}. Monte Carlo simulations
find expected utility across 1000 random draws of stock and extraction
shocks. A comprehensive set of parameters test the sensitivity of fisher
input choices with index insurance. All simulations are conducted in R
with accompanying code available at nggrimes@github.com/ibi-behavior.

\subsection{Simulations with one
input}\label{simulations-with-one-input}

We use the structural form where $f(x)=x^\alpha$ and $h(x)=x^\beta$
to most easily integrate risk increasing or decreasing effects in
$h(x)$. Under these functional forms, Equation~\ref{eq-pi1} becomes:

\begin{equation}\label{eq-sim}{
\pi=x^\alpha(\hat\beta+\theta)+\omega x^\beta-cx^2
}\end{equation}

Mean production $f(x)$ is concave so that $\alpha>0$. Extraction
risk effects on the input can either be risk increasing or decreasing
with $\beta\lessgtr0$. We apply convex costs, $c(x)=cx^2$, for
smoother convergence in the maximization procedure. Stock and extraction
shocks are normally distributed with $\theta\sim N(0,\sigma_{\theta})$
and $\omega\sim N(0,\sigma_{\omega})$.

Fishers will choose inputs $x$ to maximize expected utility with an
exogenous insurance contract. Constant Absolute Risk Aversion (CARA)
utility is used to account for negative shocks and profit loss. Under
this contract specification, the maximization problem becomes
Equation~\ref{eq-maxsim}:

\begin{equation}\label{eq-maxsim}{
\begin{aligned}
U&\equiv\max_{x}\mathbb{E}[u]=\mathbb{E}[(1-\exp(-a(\pi(x,\hat\beta,\theta,\omega)+\mathbb{I}(\gamma))]\\
\mathbb{I}(\gamma)&=\begin{cases}-\rho\gamma & \text{if } \omega\ge \bar\omega\\
(1-\rho)\gamma & \text{if } \omega<\bar\omega
\end{cases}
\end{aligned}
}\end{equation}

We convert $\gamma$ to be a percentage of mean optimal profit without
insurance for interpretability. For example, $\gamma=1$ would
represent a payout equivalent to expected profit before insurance, and
$\gamma=0$ represents no insurance. Equation~\ref{eq-maxsim} shows the
insurance payoff function $I(\gamma)$ built on extraction risk with
$\bar\omega$ as the trigger. A contract built on stock risk would
instead use $\bar\theta$ and $\theta$ as the conditions in the
payoff function.

We create a large parameter space to assess the sensitivity of optimal
input choices to different model parameters. We vary the relative
productivity of the input $\alpha\in\{0.25,0.5,0.75\}$, the extraction
risk effect of the input
$\beta\in\{-0.7,-0.5,-0.3,-0.1,0.1,0.3,0.5,0.7\}$, the risk aversion
parameter $a\in\{1,2,3\}$, the stock shock variance
$\sigma_{\theta}\in\{0.1,0.2,0.3,0.4\}$, and the extraction shock
variance $\sigma_{\omega}\in\{0.1,0.2,0.3,0.4\}$.

For any level of insurance payout $\gamma$, input use changes
monotonically in accordance to the conditions of
Proposition~\ref{prp-theta} and Proposition~\ref{prp-ind}
(Figure~\ref{fig-iter}). Fishers maximize utility at point B in the
bottom right panel of Figure~\ref{fig-iter}. Point A shows the no
insurance case. Any coverage between point A and B will improve utility.
Limits on coverage could constrain fishers to this area, but the
direction of input use remains constant. Overinvestment in insurance, as
shown by point C, could reduce utility as the premiums become excessive
large. Input use continues to increase beyond optimal levels.

The monotonicity of input use in all cases suggests that the insurance
level that maximizes utility will preserve the sign of input changes.
Therefore, an endogenous choice of insurance will not affect the
direction of input change, but it will affect the magnitude.

For example, the equivalent of point B for contracts protecting against
extraction risks when inputs are risk decreasing would be further to
left than contracts on stock risks in Figure~\ref{fig-iter}. Fishers
would be better off with lower levels of insurance coverage. Allowing
fishers to choose insurance coverage ensures that the choice of
insurance is welfare improving and will not bias input choices with over
or under investment of insurance. Simulations moving forward will allow
fishers to choose both inputs and insurance coverage.

\begin{figure}
    \centering
    \includegraphics[width=0.75\linewidth]{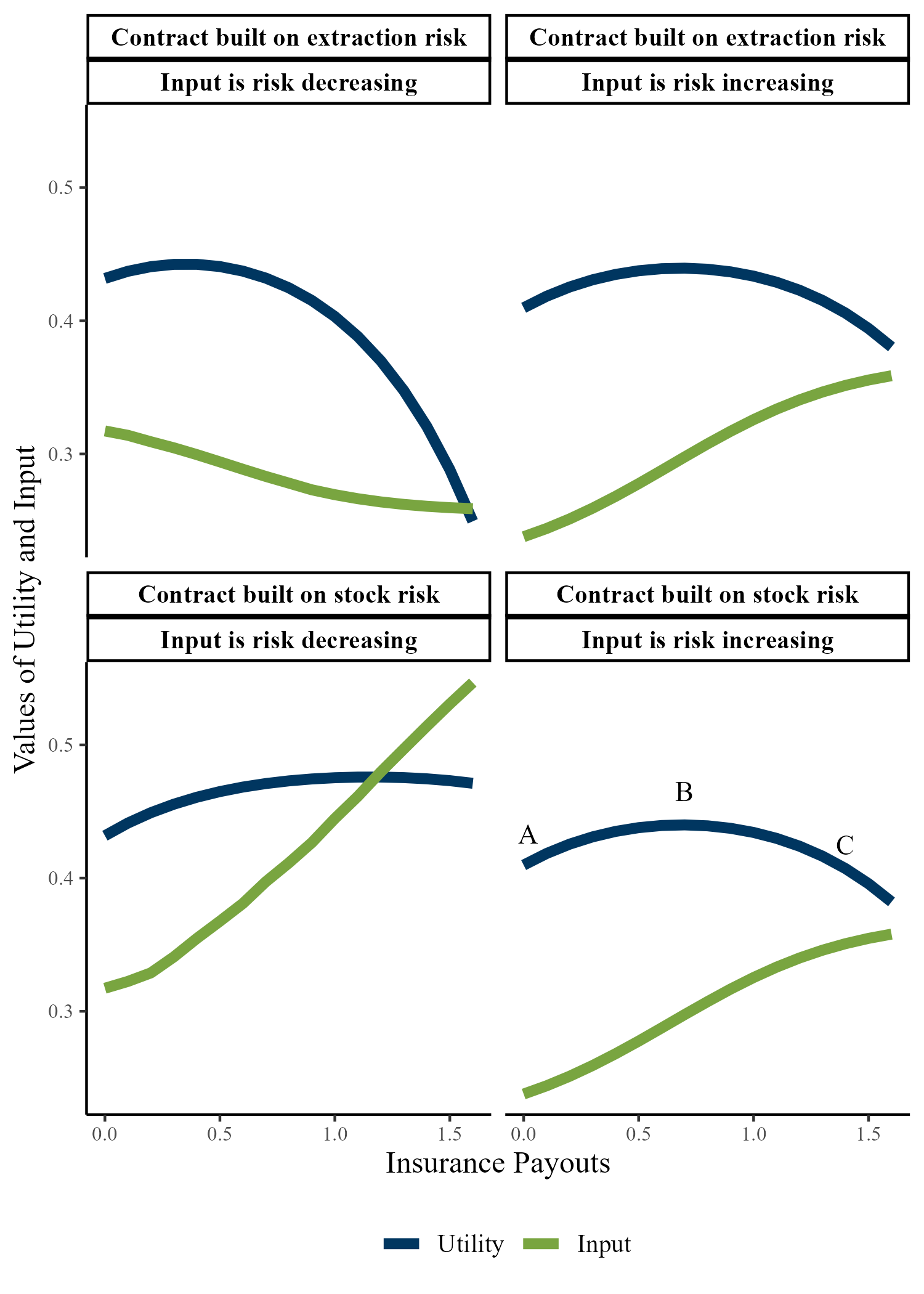}

\caption{\label{fig-iter}Utility (blue lines) and optimal input use
(green lines) with increasing levels of constant index insurance
payouts. Inputs have high mean productivity ($\alpha=0.75$), fishers
are risk aversion $a=3$, and there is high variance in both shocks
($\sigma_{w}=0.4$ and $\sigma_{t}=0.4$).}

\end{figure}%

With two choice variables,$\gamma$ and $x$, Equation~\ref{eq-maxsim}
becomes Equation~\ref{eq-maxsim2}:

\begin{equation}\label{eq-maxsim2}{
\begin{aligned}
U&\equiv\max_{x,\gamma}\mathbb{E}[u]=\mathbb{E}[(1-\exp(-a(\pi(x,\hat\beta,\theta,\omega)+\mathbb{I}(\gamma))]\\
\mathbb{I}(\gamma)&=\begin{cases}-\rho\gamma & \text{if } \omega\ge \bar\omega\\
(1-\rho)\gamma & \text{if } \omega<\bar\omega
\end{cases}
\end{aligned}
}\end{equation}

Furthermore, we run two groups of simulations: (i) one where the
insurance contract indemnifies on $\omega$, as in
Equation~\ref{eq-maxsim}; and (ii) one where the index is constructed on
$\theta$, to present results for both Proposition~\ref{prp-ind} and
Proposition~\ref{prp-theta}.

Contract specification and risk effects control the direction of input
change in fisheries. Insurance contracts that protect against extraction
risks lead to increases or decreases in optimal input use contingent on
the underlying risk effects of the input. All optimal choices of risk
decreasing inputs, shown as red bars in Figure~\ref{fig-corr}, decreased
if the contract was specified on extraction risks.

The productivity of inputs strongly influences the magnitude of change
in input use particularly for risk decreasing inputs. When inputs are
relatively less productive (left panel, $\alpha=0.25$), fishers are
more willing to reduce the unproductive input in favor of the protection
offered by insurance. They lose less in production while gaining more
variance reduction by substituting with insurance. Therefore an
important tradeoff exists for risk reducing inputs that does not exist
for risk increasing inputs. Fishers demonstrate smaller absolute changes
in input use for risk decreasing inputs than risk increasing
particularly at higher mean productivity levels in
Figure~\ref{fig-corr}. Risk increasing inputs exert stronger effects
towards overfishing than risk decreasing inputs have towards
conservation.

\begin{figure}
    \centering
    \includegraphics[width=0.75\linewidth]{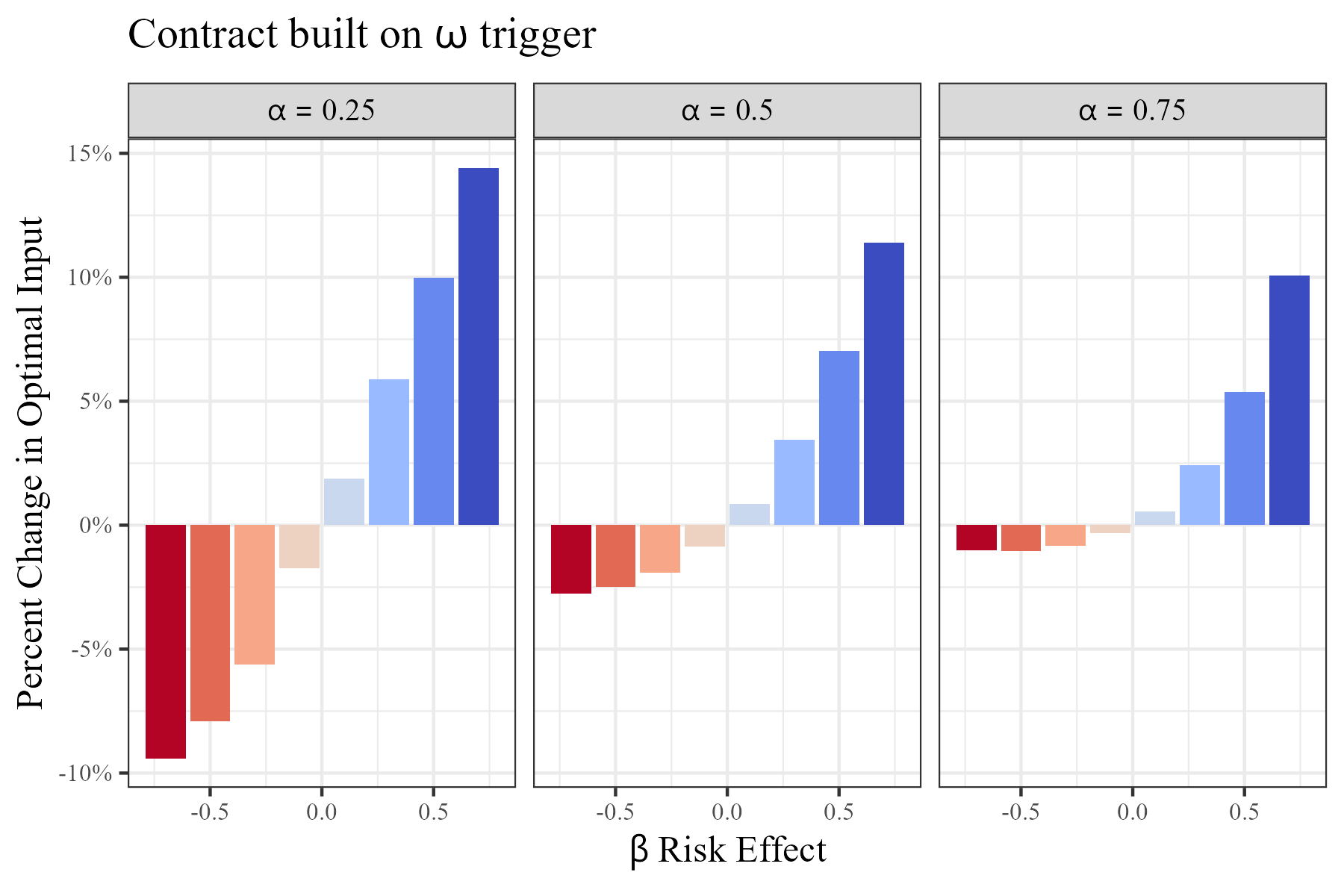}

\caption{\label{fig-corr}Percentage change in optimal input with an
index insurance contract using extraction risk, $\omega$, as the
index. Risk increasing inputs (blue bars) always increase input use,
while risk decreasing inputs (red bars) always decrease input use. Each
panel indicates the mean productivity ($\alpha$) of the input.}

\end{figure}%

Contracts built on $\theta$ as the index always lead towards
overfishing because of the inherent risk increasing characteristics of
$f(x)$ (Figure~\ref{fig-corr-theta}). Inputs that are more productive
and more risky with higher levels of $\alpha$ will be used more. The
right most panel of Figure~\ref{fig-corr-theta} shows the highest effect
where input changes could increase by 18\% relative to no insurance.
Extraction risk in $h(x)$ has little influence on changes in optimal
input use when the contract is specified on stock risk. Across all
levels of the risk effect, $\beta$, the magnitude of change remains
relatively constant for a given productivity level.

\begin{figure}
    \centering
    \includegraphics[width=0.75\linewidth]{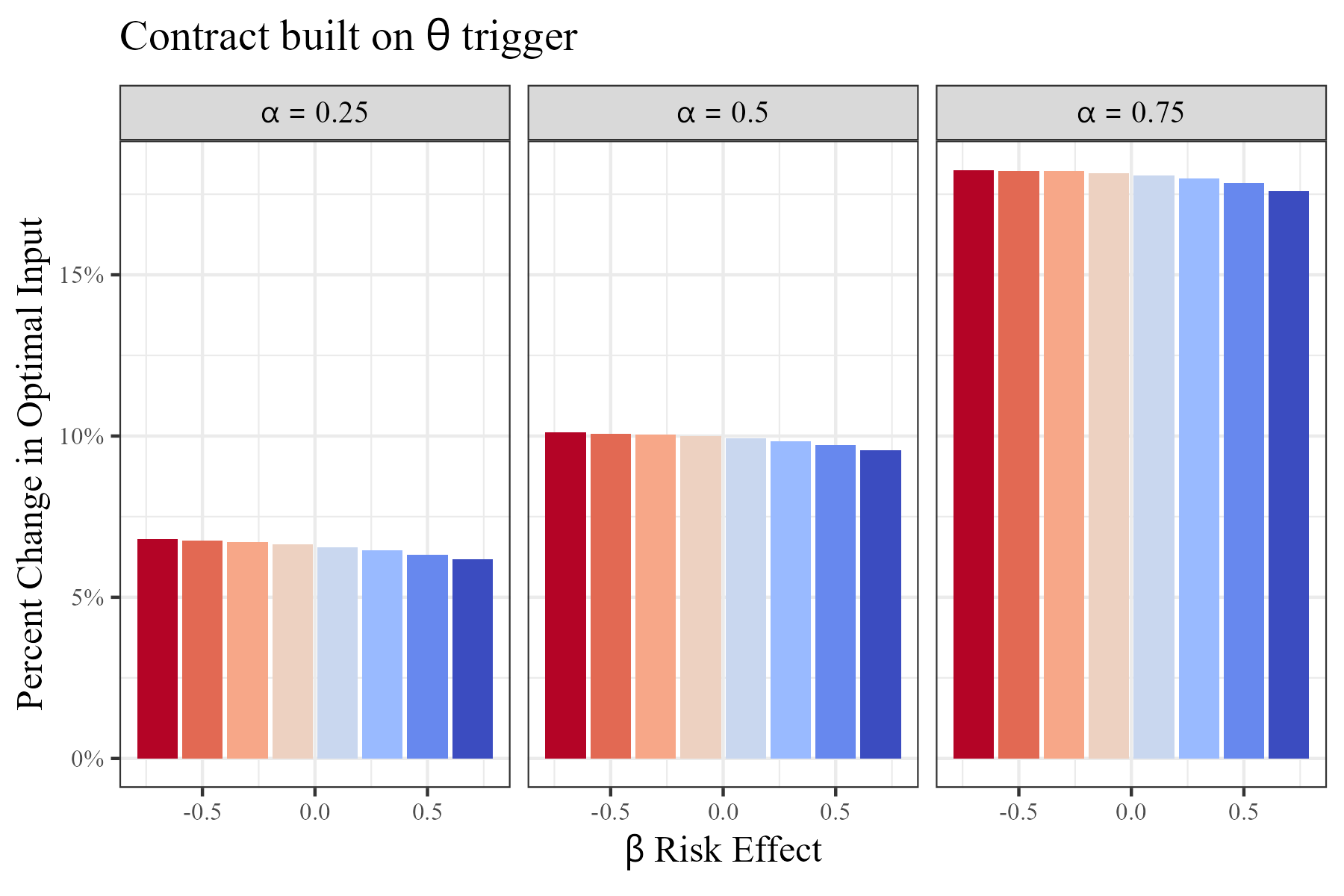}
\caption{\label{fig-corr-theta}Percentage change in optimal input with
an index insurance contract using stock risk, $\theta$, as the index.
Risk increasing inputs (blue bars) and risk decreasing inputs (red bars)
always increase input use. Each panel indicates the mean productivity
($\alpha$) of the input.}

\end{figure}%

Magnitude of input changes are sensitive to other parameters beyond just
mean production elasticity ($\alpha$). We quickly demonstrate some
comparative statics of other important variables such as risk aversion
(Panel A in Figure~\ref{fig-sum}), trigger thresholds (Panel B),
variance of the extraction risk $\sigma_\omega$ (Panel C), and
variance of the stock risk $\sigma_\theta$ (Panel D). In each panel,
we show the results for the contracts specified on stock risk
($\theta$) and extraction risk ($\omega$). The x-axis of each panel
is the extraction risk effect coefficient $\beta$, while the y-axis is
the percent change in input use with index insurance compared to no
insurance. Different colors represent the comparative statics of each
new parameter. Across all panels, the new parameters affect only the
magnitude of change. The direction of change continues to be determined
by risk effects and contract specification.

More risk averse fishers respond more aggressively to insurance (Panel A
in Figure~\ref{fig-sum}). Risk aversion implies more sensitivity towards
risk. The protection from insurance has greater marginal value for more
risk averse fishers. Therefore, they adjust their input use more
significantly to take advantage of the risk mitigating qualities of
insurance.

Different triggers also lead to different magnitudes of input change,
but do not affect the sign changes. While necessary for applying
Lemma~\ref{lem-mp} and Lemma~\ref{lem-theta} in the proofs, the results
of Proposition~\ref{prp-ind} and Proposition~\ref{prp-theta} would
appear to hold if $\bar\omega\ne0$ and $\bar\theta\ne0$. Higher
trigger levels protect against extreme or catastrophic shocks. However,
those shocks occur much less frequently than those closer to the mean.
Fishers are less incentivized to change their expected production levels
in the event of rare events. Offering different insurance contracts does
change the optimal choice of insurance in line with the results of
\citet{Lichtenberg2022}.

Fisher input choice is much more responsive to insurance protection from
more variable shocks (Panel C and D Figure~\ref{fig-sum}), but it
depends on the insurance contract. Similar to risk aversion, the greater
the shocks the greater the marginal value of insurance to mitigate those
shocks. In more volatile environments, insurance provides significantly
more income smoothing leading to similar incentives as the higher risk
aversion example. Shocks the insurance does not protect against has
little influence on fisher input decisions. If the insurance is designed
to protect against one specific shock, it is unsurprising changes in the
other independent shock does not affect fishers decisions.

\begin{figure}
    \centering
    \includegraphics[width=1\linewidth]{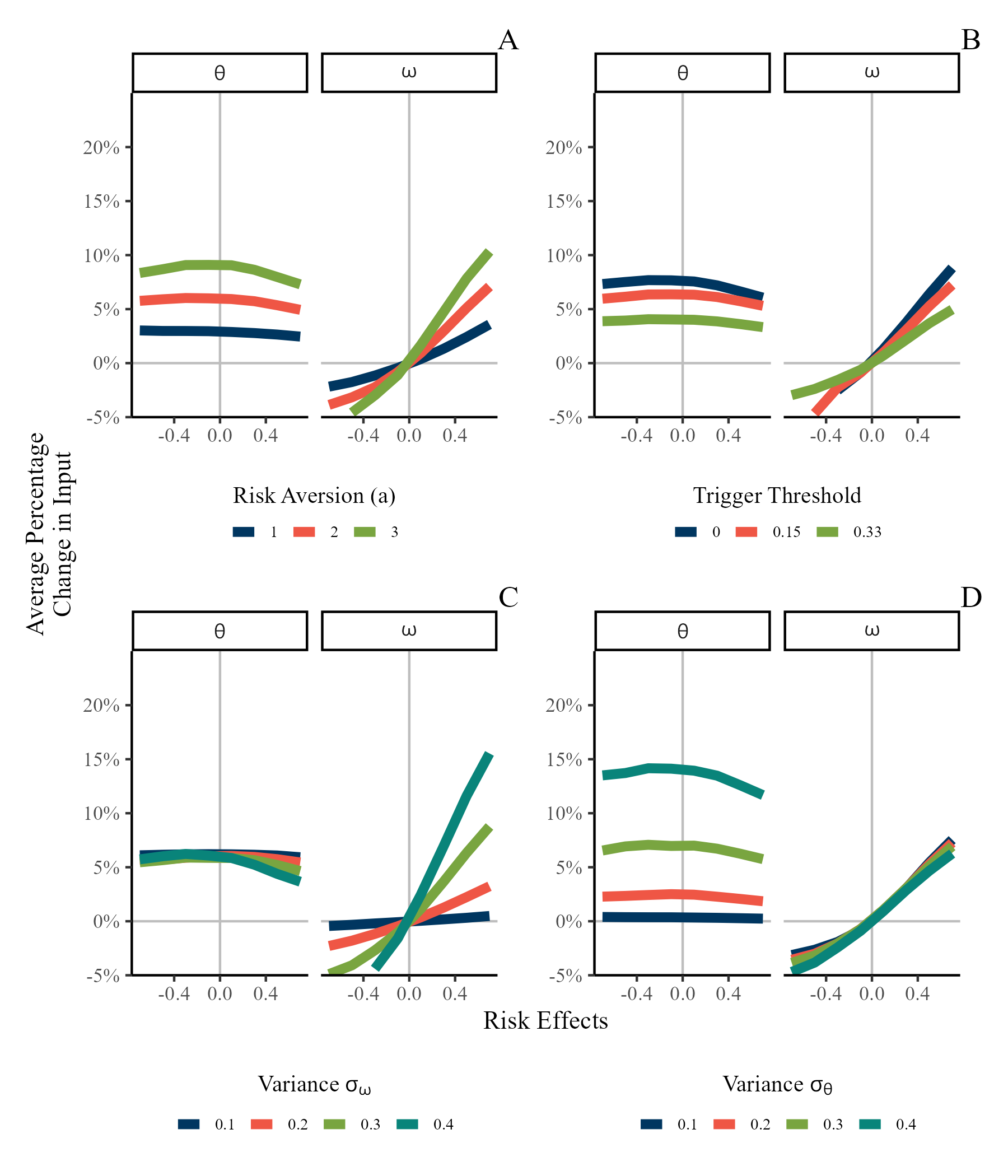}

\caption{\label{fig-sum}Risk Aversion (A), trigger threshold (B), stock
variance $\sigma_{\omega}$ (C), and extraction variance
$\sigma_{\theta}$ (D) all influence the magnitude of change in input
use. Mean production elasticity is set to 0.5. Average percent change in
input (y-axis) is summarized across all other parameter combinations for
each risk effect value of $\beta$. Contracts built on extraction risk
are in the subpanels with $\omega$, while contracts built on stock
risk are indicated by the $\theta$ subpanel.}

\end{figure}%

\subsection{Application to Norwegian
Fisheries}\label{application-to-norwegian-fisheries}

Fishery extraction risk effects are rarely estimated, though appear
crucial to determining the magnitude of input change with index
insurance. The only study to date that calculates risk effects is that
of \citet{Asche2020}. They used a non-linear estimator to
calculate the production and risk parameters of a Just-Pope function
across four different vessel types. We will use their coefficient
calculations to calibrate an estimate of the magnitude of input and
harvest change that index insurance would incentivize if offered to
Norwegian fisheries.

\citet{Asche2020} aggregated by vessel type and not species, so
there is no reasonable estimate for biomass. They accounted for biomass
using fixed effects in their regression, but without additional
information we cannot parameterize the mean and variance of biomass.
Therefore, our simulations normalize mean biomass to 1 and we assume the
stock shocks, $\theta$, have a normal distribution and test different
levels of variance. Norwegian fisheries are well managed so the stock
variance could be mitigated through quota systems or accurate stock
assessments. The simulation model uses three inputs: capital $k$,
labor $l$, and fuel $f$, yielding the following profit in
Equation~\ref{eq-sim3}:

\begin{equation}\label{eq-sim3}{
\pi(k,l,f)=k^{\alpha_k}l^{\alpha_l}f^{\alpha_f}(\hat\beta+\theta)+\omega k^{\beta_k}l^{\beta_l}k^{\beta_f}-c_kk^2-c_ll^2-c_ff^2
}\end{equation}

Mean production, $\alpha$, and risk, $\beta$, elasticities control
the stock and extraction risk effects respectively. Each parameter is
indexed to a particular input through the subscript, e.g.~fuel mean
production elasticity is $\alpha_f$. Fishers in the simulation choose
inputs and insurance coverage to maximize expected utility. We show
their choice based on a $\omega$ contract
(Equation~\ref{eq-maxasche}), but also run a model specification with a
contract built on $\theta$.

\begin{equation}\label{eq-maxasche}{
\begin{aligned}
U&\equiv\max_{\gamma,k,l,f}\mathbb{E}[u]=\mathbb{E}[u(k^{\alpha_k}l^{\alpha_l}f^{\alpha_f}(\hat\beta+\theta)+\omega k^{\beta_k}l^{\beta_l}k^{\beta_f}-c_kk^2-c_ll^2-c_ff^2+\mathbb{I}(\gamma)]\\
\mathbb{I}(\gamma)&=\begin{cases}-\rho\gamma & \text{if } \omega\ge \bar \omega\\
(1-\rho)\omega& \text{if } \omega<\bar \omega
\end{cases}
\end{aligned}
}\end{equation}

Table~\ref{tbl-asche} shows the production and risk elasticities of the
three vessel types found in Norway\footnote{In \citet{Asche2020}, they also estimate a fourth vessel type, purse seiners.
  However the point estimate for mean production elasticity for labor
  was negative which violates our assumption of increasing $f(x)$.
  Therefore we drop purse seiners from our analysis.}.

\begin{table}[htbp]
\centering
\caption{\label{tbl-asche}Production and Risk elasticities of Norwegian Fisheries from \citet{Asche2020}}
\begin{tabular}{lcccccc}
\hline
 & $\alpha_k$ & $\alpha_l$ & $\alpha_f$ & $\beta_k$ & $\beta_l$ & $\beta_f$ \\
\hline
Coastal Seiners      & 0.294 & 0.421 & 0.457 & 0.184  & -0.432 & 0.119 \\
Coastal Groundfish   & 0.463 & 0.421 & 0.355 & 0.965  & -0.080 & 0.113 \\
Groundfish Trawlers  & 0.210 & 0.106 & 0.531 & -2.788 & -0.110 & -0.024 \\
\hline
\end{tabular}
\end{table}

We use the same parameter space as the previous simulations to test the
sensitivity of fisher input choices with index insurance. We plot the
distribution of input change after insurance for all combination of
parameters in Figure~\ref{fig-asche-input} based on a contract with
$\omega$ as the index. Figure~\ref{fig-asche-input} also allow us to
examine whether the conditions of Proposition~\ref{prp-samre} hold with
real world combinations. We report the median change as an indicator of
the general direction of input change.
\begin{figure}
    \centering
    \includegraphics[width=1\linewidth]{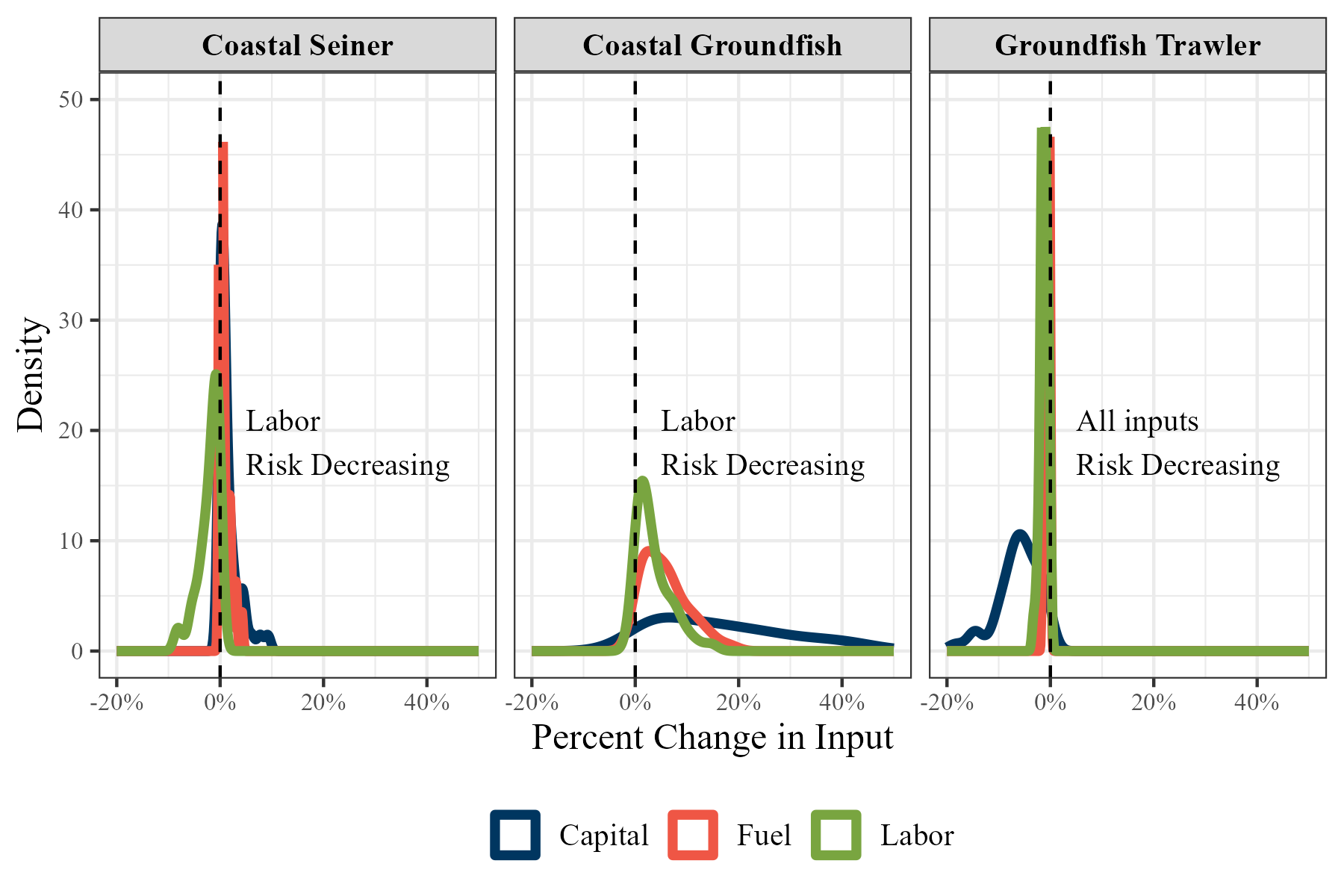}

\caption{\label{fig-asche-input}Density plots of the percent change in
input use for each vessel type in Norwegian fisheries with contracts
built on extraction risk $\omega$. The dashed black line represents no
change in input use. Risk decreasing inputs are labeled.}

\end{figure}

Most changes in inputs tend to change in the direction expected of their
own individual risk effects. For example, fuel and capital are risk
increasing inputs for Coastal Seiners and saw small, but positive
increases. Labor was risk decreasing and saw a decline in use. This
indicates that the conditions of Proposition~\ref{prp-samre} can hold.
The relatively even balance between the marginal productivity
elasticities and risk elasticities perhaps determine the conditions.

Labor in the Coastal Groundfish fishery is risk decreasing, yet always
saw an increase in the simulations. The risk parameter of labor is
relatively small compared to the risk increasing coefficients of capital
and fuel. The cross partial mix of the inputs may explain why fishers
add more labor. As insurance strongly incentivizes capital and fuel
increases, labor must also increase to further enhance those other
inputs. Here, the conditions of Proposition~\ref{prp-samre} do not hold.

Every input in the groundfish trawler fishery was risk decreasing. With
contracts on extraction risks all inputs saw a decline. The largest
decline was in capital, because capital possesses the strongest risk
decreasing effect out of all inputs.

Recreating the same analysis with contracts on stock risks shows the
propensity for stock risk contracts to stimulate increased input use
regardless of risk effects (Figure~\ref{fig-asche-all}). All inputs saw
similar increases in each fishery with the exception of capital in the
groundfish trawlers. While capital still increased, it did so at a much
smaller rate. Capital in the groundfish trawler fishery is the most risk
decreasing input out of all the fisheries. Fishers do not need to expand
capital production in this case because it already reduces so much risk.
Adding insurance marginally reduces some risk, hence the increase, but
is not needed as much compared to other fisheries with weaker risk
reducing inputs.

\begin{figure}
    \centering
    \includegraphics[width=1\linewidth]{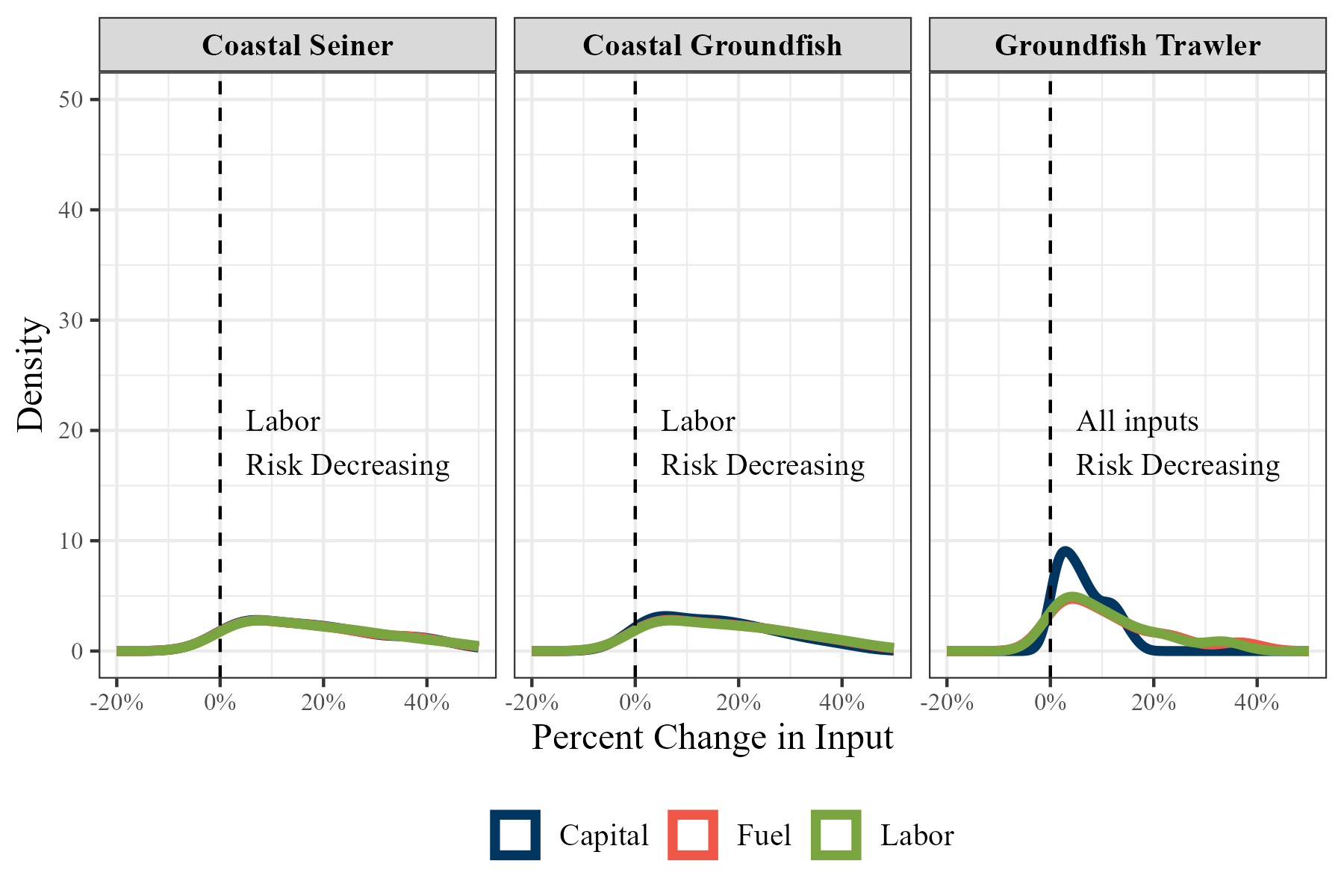}

\caption{\label{fig-asche-all}Density plots of the percent change in
input use for each vessel type in Norwegian fisheries with insurance
contracts built on stock risk $\theta$. The dashed black line
represents no change in input use. Risk decreasing inputs are labeled.}

\end{figure}

Input changes lead to harvest changes. We examine the total change in
harvest for all parameters in Figure~\ref{fig-asche} for a contract
triggered on $\omega$. Overall, insurance leads to relatively small
changes in harvest for all fisheries, but increases are stronger than
decreases. Coastal Groundfish see the largest and most consistent
increase in harvest. Median harvest increased by 10\% with a max
increase of 36\%. Coastal Groundfish have the most risk increasing
inputs out of the estimated fisheries, and had greater input use
(Figure~\ref{fig-asche-input}).

\begin{figure}
    \centering
    \includegraphics[width=1\linewidth]{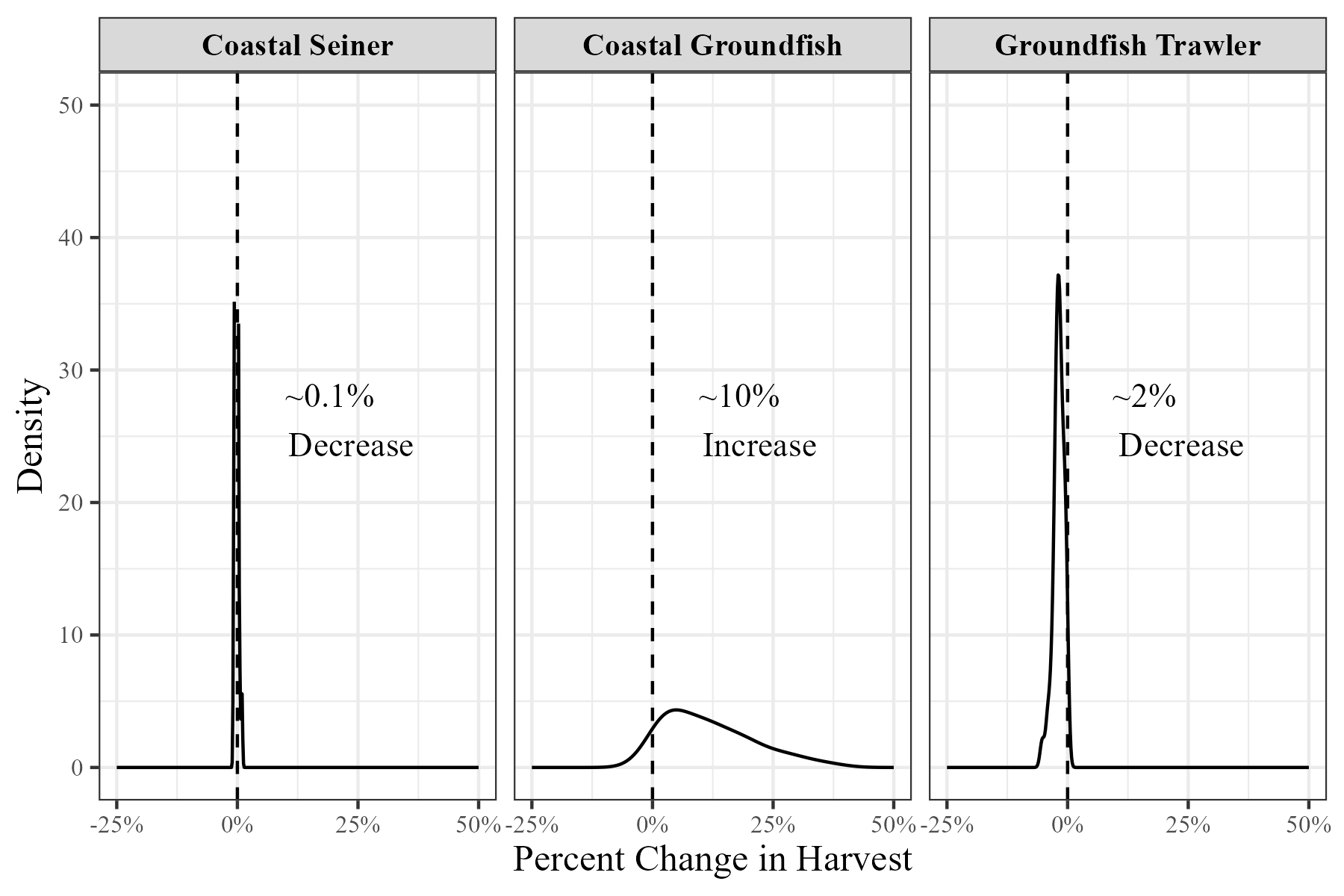}

\caption{\label{fig-asche}Density plots of the percent change in harvest
for each vessel type in Norwegian fisheries from an insurance contract
trigged on extraction risks $\omega$. The dashed line represents no
change in harvest. The text labels represent the median percent change
in harvest for each vessel type.}

\end{figure}

Coastal Seiners had a relatively balanced spectrum of risk effects. The
input mix in this case led to both increases and decreases in input use,
which on net led to near zero changes in harvest. There is a slight skew
towards increased harvest, but drastically less than the Coastal
Groundfish fishery.

Groundfish trawlers consistently see small decreases of 2\% in harvest
(Figure~\ref{fig-asche}). Capital was the primary driver of harvest
reduction. However, it has a relatively low marginal productivity.
Insurance decreases trawler capital use by about 8\%, but the low
productivity leads to only a 2\% decrease in overall harvest.

Applying an insurance contract indemnified on $\theta$ instead of
$\omega$ shifts the direction towards more overfishing
(Figure~\ref{fig-asche-theta}). Prominent shifts occur in all fleets.
For example, in Figure~\ref{fig-asche}, the percent change in harvest
for Coastal Seiners is indistinguishable from zero. With a $\theta$
index contract, Coastal Seiners would increase harvest by 18\%
(Figure~\ref{fig-asche-theta}).

Despite the risk decreasing dominance of capital in groundfish trawlers,
fishers will choose to increase production as insurance protects against
the added harvest risk. This result most clearly shows the impact of
different insurance contracts and the potential for maladaptive behavior
change. Without considering all the margins for change, insurance
protecting against biological risk will encourage overfishing without
additional constraints.

\begin{figure}
    \centering
    \includegraphics[width=1\linewidth]{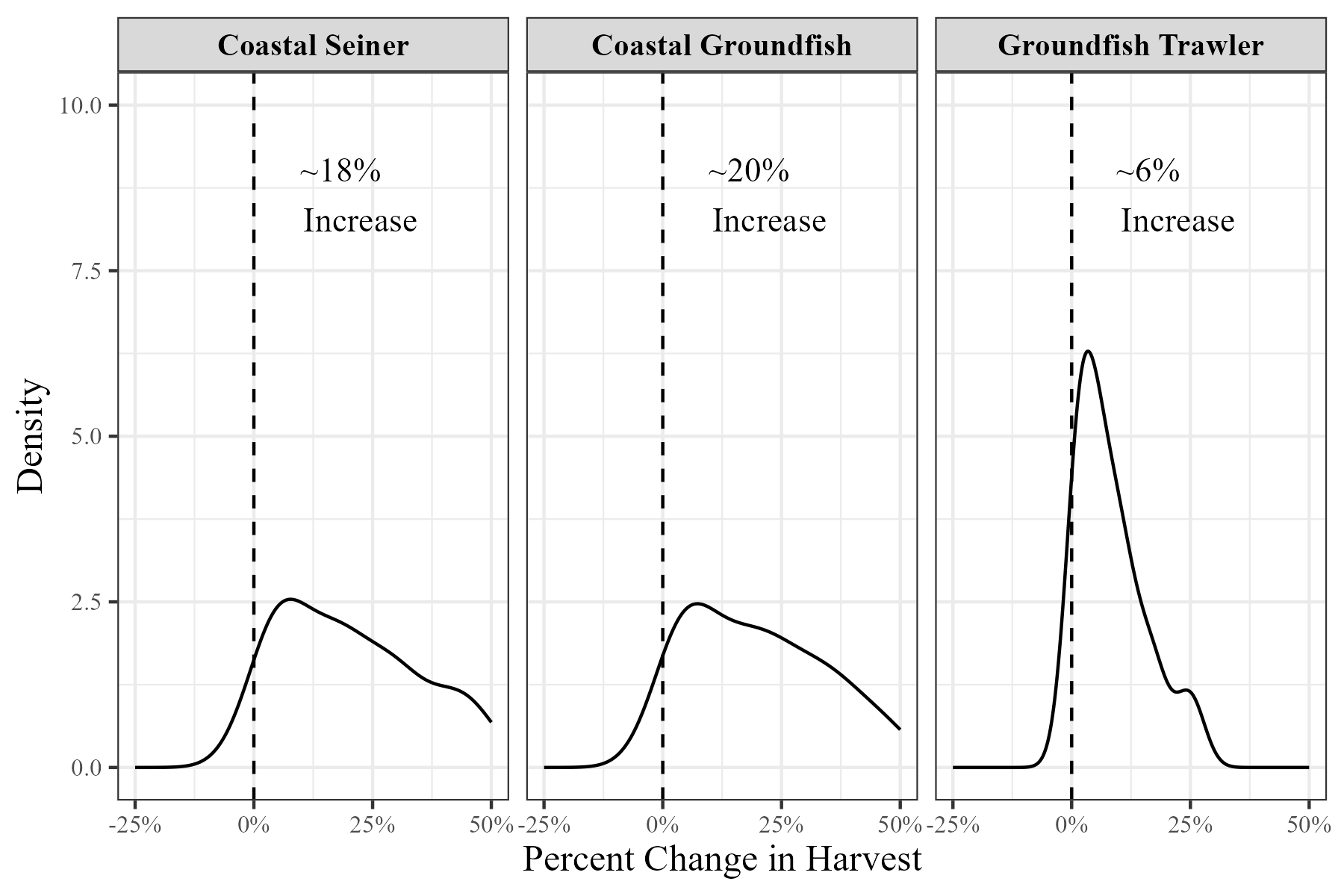}
\caption{\label{fig-asche-theta}Density plots of the percent change in
harvest for each vessel type in Norwegian fisheries with insurance
contract indemnified on biological risk $\theta$. The dashed line
represents no change in harvest. The text labels represent the median
percent change in harvest for each vessel type.}

\end{figure}

\section{Discussion}\label{sec-disc}

This paper makes four distinct contributions. First, if fishery
production is specified to traditional fishery models, then index
insurance will lead to overfishing. Second, contract specification
influences fisher behavior. Contracts that use stock risk as the index
will always lead to increases, while contracts that use extraction
shocks as the index could increase or decrease. The risk effects are the
key drivers of this result similar to prior findings in agriculture
\citep{horowitz1993,Ramaswami1993,Mahul2001}. Third,
multiple inputs make it more difficult to predict individual changes in
input use and overall changes in harvest. The overall effect of index
insurance on harvest requires the total effect of all input changes.
Fourth, insurance incentives that increase harvest lead to larger
absolute changes than decreases in harvest.

The fundamental driver of fishers' behavior changes is whether the
marginal change in productivity is balanced by the marginal change in
risk. Fishers are more willing to increase production if insurance
negates the additional risk of expanded production. Since insurance
lowers risk, fishers need less self insurance through risk reducing
inputs and can reduce their overall input use. However, using less
inputs implies less catch and revenue creating a unique tension for risk
decreasing inputs. Across all simulations, decreased input use was
smaller than increased input use holding all other parameters constant.
This is a clear demonstration of the tradeoff fishers face when
considering risk decreasing input choices with insurance. Behavior
change in fisheries will lean towards expanding production creating a
dilemma for conservation efforts.

Index insurance would improve welfare in Norwegian fisheries, but also
lead to changes in harvest that depends on the extraction risk effects
of fishing fleets. The average utility gain was 2\% on
average\footnote{The utility gains are smaller by construction as we
  manually introduce basis risk with an imperfect insurance contract
  that cannot simultaneously protect both risks.}, indicating index
insurance would be welfare improving. Therefore, insurance can provide
immediate benefit to fishers, but the policymaker must take measures to
mitigate the preserve incentive to expand fishing production. Otherwise,
the long term health of the stock could be degraded and fishers would be
worse off in the long run \citep{muller2017,John2019,Bulte2021}.

Norwegian pelagic groundfish trawlers would have the opposite
considerations. When offered insurance they reduced their harvest by
2\%. Though small, it will lead to improved fishery sustainability. The
long term benefit of insurance would increase with improved stock
health. The decline in harvest was driven by a reduction in
overcapitalization, because capital was a significant risk decreasing
input. Policymakers should attempt to identify fisheries with risk
decreasing inputs for insurance contracts to improve sustainability if
index insurance is to operate in isolation of other policies.

Ex-ante identification of input risk effects is challenging. Extraction
risk effects remain an elusive concept in fisheries, and need to be
estimated in order to articulate more accurate behavior changes of
fishery index insurance. Crop covers and pesticide provide clear
examples of risk decreasing inputs in agriculture, but what do risk
decreasing inputs look like in fisheries? Labor is perhaps the more
intuitive risk decreasing input. Technical expertise of crew and
captains can hedge against luck when fishing \citep{Alvarez2006}. Better trained crew can deploy gear in a safe and timely manner,
increasing the likelihood of effective sets.

Fuel as a risk increasing input in fisheries makes intuitive sense as
well. Fuel is used to power vessels and is a direct cost of fishing.
Fishers explore productive fishing grounds for the best location. Every
hour at sea increases the harvest reward, but also the chances of
failure.

Capital is a more complex input, because it is shown to be both risk
increasing and decreasing. Capital in fisheries typically refer to
vessel tonnage, engine power, and gear technology. \citet{Asche2020} noted that the trawler fishery, $\beta_k=-2.7$, was primarily
an ocean going fleet that could move to the fish and handle adverse
weather more readily. The authors postulated the flexibility to pursue
fish reduced overall production variance. Quota allocations in the
coastal fisheries are determined by vessel size with smaller vessels
receiving proportionally more quota. Increasing vessel size would
increase risk by introducing more variable quotas. The context of the
fishery likely determines the risk effects of capital.

Capital acting as a risk decreasing input may explain overcapitalization
in many fisheries around the world. In standard common pool resource
models, adding risk aversion ought to lower aggregate effort
\citep{gibbons1993,Tilman2018}. Yet,
overcapitalization and overfishing are more often observed in the real
world. Either fishers are never risk averse or the risk effects of
capital are not as simple as the standard model suggests. When capital
is allowed to be risk decreasing, optimal capital choices are much
higher than without risk. Thus, fishers can still make rational, risk
averse decisions even while overfishing.

Specifying contracts on stock risk or extraction risk has clear
differences. However, the leading candidates for possible indices in
fisheries index insurance are currently weather variables most often
associated with biological stock risks \citep{Watson2023}.
Designing contracts solely on these variables may lead to harvest
increases that run contrary to conservation goals.

The structural limitations imposed by assumptions on stock risk may
drive this result. Most bioeconomic models simplify the complex effects
of stock dynamics into multiplicative or additive forms as modeled in
this paper. Instead, different forms of risk could be embedded into the
biological component of fishery models. Stock variance could be greater
in overfished stocks instead of healthier ones, reflecting more
vulnerability in weaker states \citep{Sims2018}. Adapting
alternative, more biologically focused specifications of stock risk
could change the behavioral effects of insurance. Fishers may be more
willing to expose themselves to greater risk at more vulnerable stock
levels with insurance. Alternatively, insurance could help mitigate risk
and incentivize fishers to move toward healthier stocks with less
variance by alleviating income pressures to fish. Further analysis is
required to understand the full implications of stock risk effects in
fisheries.

Management may also limit stock risk, and interact with insurance in
complex ways. Our analysis explicitly modeled scenarios without the
existence of management. We wanted to analyze the interaction of
insurance on fisher behavior in unconstrained settings first to derive a
clearer incentive structure. Most fisheries are managed in some form.
The interaction between management and insurance may be complementary or
substitutes. For example, well managed fisheries that have responsive
harvest control rules may not need insurance. Stock risk could be
reduced so that only contracts on extraction risks would be viable.
Insurance demand and uptake may be low in these fisheries.

Insurance could instead complement management to provide the financial
relief that management cannot offer. Managers often focus on the
biological health of the fishery that can run at odds with fishers'
desires to enhance their income. Insurance can act as the financial
relief to allow managers to pursue more active strategies to protect
fish stocks without political resistance from lowered quotas.
Additionally, management can provide the constraints on insurance moral
hazard so the income smoothing benefits are passed to fishers, but not
the long term degradation. The interaction between insurance and
management requires further investigation especially with the numerous
management strategies that exist in fisheries.

The transfer between inputs and insurance reflects the substitution
between self-insurance and formal insurance \citep{Quaas2008}. If index insurance is designed to reduce fishing capacity,
efforts must be made to ensure that it does not take away from the self
insurance capacity of fishers. Labor appears to be consistently risk
reducing and acts as a form of self insurance. If index insurance
incentivizes captains to hire less crew, the stock of fish may be
preserved, but less employment may reverberate throughout the community.
Fishing is often a primary employment opportunity in coastal
communities. The resiliency of the community would be compromised rather
than enhanced with fewer jobs. Contract stipulations could mandate that
only cost expenses are covered by payouts thereby including lost wages
to the crew. Agriculture contracts often are designed to directly cover
expenses \citep{He2020}.

Our model only directly models behavior change through moral hazards.
Index insurance could be designed to incentivize other forms of
sustainable behavior change. We define three pathways insurance can
change behavior: Moral hazards, Quid Pro Quo, and Collective Action.
Moral hazards were proven in this paper to have ambiguous impacts
controlled by the risk characteristics of fishery inputs. The incentives
of moral hazards will always exist, therefore other measures could be
taken to either limit the downside behavioral effects of insurance or
stimulate other forms of sustainable behavior.

Quid Pro Quo expands contract design to explicitly build in conservation
measures. Fishers would be required to adopt sustainable practices in
order to qualify for insurance. Quid Pro Quo is already used in
agricultural insurance in the form of Good Farm Practices. Farmers must
submit management plans to US Risk Management Agency that clearly
outline their conservation practices in order to qualify for insurance.
Working closely with management agencies, insurance companies could
design contracts that require fishers to follow fishery specific
management practices. For example, fishers may be incentivized to use
more sustainable gear types, have an observer onboard, or reduce
bycatch. A stipulation in in the COAST program is fishers must register
their vessels with the participating countries fisheries department
\citep{Sainsbury2019}. This requirement has brought greater
data clarity to small scale Caribbean fisheries.

Further research would need to uncover the full impact of Quid Pro Quo,
but an initial hypothesis would be fishers will be willing to adopt
sustainable practices so long as the marginal gain in utility from the
insurance is greater or equal to the marginal costs of the stipulated
sustainable changes. Otherwise fishers will not want to buy the
contracts, and the insurance has no binding constraints to change the
fishery.

Collective action ties insurance premiums to biological outcomes to
leverage the political economy of the fishery. Insures could reduce
premiums in fisheries that have robust management practices such as
adaptive harvest control rules, stock assessments, or marine protected
areas in the vicinity. Fishers could either pressure regulators to adopt
these actions or form industry groups to undertake the required actions.
Insurers would agree to this if triggers are connected to biological
health so that negative shocks are less frequent and thus payouts occur
less. Fishers gain from the reduced insurance premium and the increased
sustainability of harvest with rigorous management in place.

Ultimately, if index insurance is to be used in fisheries, it must be
designed with clear objectives and intentions. Index insurance can meet
objectives of income stability and risk reduction, but there has been an
implicit assumption by practitioners that index insurance will always
lead to improved sustainability. Our results directly contradict that
claim with the use of traditional models. Without considering the
behavior change of fishers when adopting insurance, the outcomes may not
be as expected. New insights derived from this paper will help guide the
efficient and sustainable implementation of fisheries index insurance.

\newpage
\appendix
\renewcommand{\thefigure}{A\arabic{figure}}
\renewcommand{\thetable}{A\arabic{table}}
\setcounter{figure}{0}
\setcounter{table}{0}

\section{Appendix}\label{appendix}

\subsection{\texorpdfstring{Proof of
Lemma~\ref{lem-theta}}{Proof of Lemma~}}\label{proof-of-lem-theta}

\textbf{Lemma 2.1} \emph{Index insurance contracts built on} $\theta$
\emph{will always lead to higher expected marginal profits in the good
state}

$\frac{\mathbb{E}[\partial \pi|\theta<\bar \theta]}{\partial x}-\frac{\mathbb{E}[\partial \pi|\theta>\bar \theta]}{\partial x}<0$

\begin{proof}
With standard fishery production, $\pi=f(x)(\hat\beta+\theta)-c(x)$,
the first order conditions lead to optimal input choices that are equal
across states of the world. The choice of inputs occurs before the realization of the stochastic shock. Therefore $f(x^*)$, and $c(x^*)$ are
equal across states. Where $x^*$ denotes the optimal input choice.

\begin{equation}\label{eq-compz1}{
\begin{aligned}
\scriptstyle
\frac{\partial \mathbb{E}[\pi|\theta<\bar\theta]}{\partial x^*}-&\frac{\partial \mathbb{E}[\pi|\theta>\bar\theta]}{\partial x^*}=\\&\cancel{f_{x^*}(x^*)\hat{B}}+\mathbb{E}[\theta f_x(x^*)|\theta <\bar\theta]-\cancel{c_{x^*}(x^*)} \\
&-\cancel{f_{x^*}(x^*)\hat{B}}-\mathbb{E}[\theta f_x(x^*)|\theta >\bar\theta]+\cancel{c_{x^*}(x^*)}\\
=&\mathbb{E}[\theta f_x(x^*)|\theta <\bar\theta]-\mathbb{E}[\theta f_x(x^*)|\theta >\bar\theta]
\end{aligned}
}\end{equation}

With risky production profits are
$\pi=\omega h(x)+f(x)(\hat\beta+\theta)-c(x)$. Independent shocks lead
$\mathbb{E}[\omega|\theta \lessgtr \bar \theta]=0$.

\begin{equation}\label{eq-compz2}
\begin{aligned}
\frac{\partial \mathbb{E}[\pi \mid \theta < \bar\theta]}{\partial x^*}
-
\frac{\partial \mathbb{E}[\pi \mid \theta > \bar\theta]}{\partial x^*}
&=
\mathbb{E}\!\left[
\theta f_x(x^*)
\,\middle|\, \theta < \bar\theta
\right]
-
\mathbb{E}\!\left[
\theta f_x(x^*)
\,\middle|\, \theta > \bar\theta
\right] \\
&\quad
+\, \cancel{f_{x^*}(x^*) \hat{B}}
- \cancel{c_{x^*}(x^*)} \\
&\quad
+\, \cancel{
\mathbb{E}\!\left[
\omega h_{x^*}(x^*)
\,\middle|\, \theta < \bar\theta
\right]
}
-
\cancel{
\mathbb{E}\!\left[
\omega h_{x^*}(x^*)
\,\middle|\, \theta > \bar\theta
\right]
} \\[0.8em]
&=
\mathbb{E}\!\left[
\theta f_x(x^*)
\,\middle|\, \theta < \bar\theta
\right]
-
\mathbb{E}\!\left[
\theta f_x(x^*)
\,\middle|\, \theta > \bar\theta
\right] .
\end{aligned}
\end{equation}

The concavity of $f(x)$ leads to $f_x(x)>0$.
Equation~\ref{eq-compz1} and Equation~\ref{eq-compz2} can then be signed
to be negative so that marginal profit in the good state is always
higher when insurance contracts are triggered on $\theta$.

\[
\frac{\partial \mathbb{E}[\pi|\theta<\bar\theta]}{\partial x^*}-\frac{\partial \mathbb{E}[\pi|\theta>\bar\theta]}{\partial x^*}=\overbrace{\mathbb{E}[\theta f_{x^*}(x^*)|\theta<\bar\theta]-\mathbb{E}[\theta f_{x^*}(x^*)|\theta>\bar\theta]}^{-}
\]
\end{proof}

\subsection{\texorpdfstring{Proof of
Lemma~\ref{lem-mp}}{Proof of Lemma~}}\label{proof-of-lem-mp}

\textbf{Lemma 3.1} \emph{Expected marginal profit is higher in bad
states for risk decreasing inputs when contracts are built on extraction
risk} $\omega$

$\frac{\mathbb{E}[\partial \pi|\omega<\bar \omega]}{\partial x}-\frac{\mathbb{E}[\partial \pi|\omega>\bar \omega]}{\partial x}>0$
\emph{if} $h_{x}(x)<0$.

\emph{Otherwise, risk increasing inputs lead to higher expected marginal
profit in the good states.}

$\frac{\mathbb{E}[\partial \pi|\omega<\bar \omega]}{\partial x}-\frac{\mathbb{E}[\partial \pi|\omega>\bar \omega]}{\partial x}<0$
\emph{if} $h_{x}(x)>0$

\begin{proof}
By the first order conditions, there exist optimal values of any
individual input $x$ that must be chosen before the realization of the
states of the world. Therefore $h(x^*)$, $f(x^*)$, and $c(x^*)$
are equal across states. Where $x^*$ denotes the optimal input choice.

Marginal utility in both states of the world is controlled by risk
effects and the sign of the random variables. Given $\theta$ is
independent of $\omega$, the expected value of
$\mathbb{E}[\theta|\omega\lessgtr\bar\omega]=0$. The difference in
expected marginal profit across insurance states is defined as:

\begin{equation}\label{eq-comppi1}
\begin{aligned}
\frac{\partial \mathbb{E}[\pi \mid \omega < \bar\omega]}{\partial x^*}
-
\frac{\partial \mathbb{E}[\pi \mid \omega > \bar\omega]}{\partial x^*}
&=
\mathbb{E}\!\left[
\omega h_{x^*}(x^*)
\,\middle|\, \omega < \bar\omega
\right]
- 
\mathbb{E}\!\left[
\omega h_{x^*}(x^*)
\,\middle|\, \omega > \bar\omega
\right] \\
&\quad
+\, \cancel{f_{x^*}(x^*) \hat{B}}
- \cancel{c_{x^*}(x^*)} \\
&\quad
+\, \cancel{
\mathbb{E}\!\left[
\theta f_x(x^*)
\,\middle|\, \omega < \bar\omega
\right]
}
-
\cancel{
\mathbb{E}\!\left[
\theta f_x(x^*)
\,\middle|\, \omega > \bar\omega
\right]
} \\[0.8em]
&=
\mathbb{E}\!\left[
\omega h_{x^*}(x^*)
\,\middle|\, \omega < \bar\omega
\right]
-
\mathbb{E}\!\left[
\omega h_{x^*}(x^*)
\,\middle|\, \omega > \bar\omega
\right] .
\end{aligned}
\end{equation}

If an input is risk decreasing then $h_{x}(x)<0$. Then
Equation~\ref{eq-comppi1} is positive and marginal profit in the bad
state is greater than the marginal profit in the good state. Adding more
of a risk reducing input reduces the negative impact in the bad state
relative to the good state.

\[
\frac{\partial \mathbb{E}[\pi|\omega<\bar\omega]}{\partial x^*}-\frac{\partial \mathbb{E}[\pi|\omega>\bar\omega]}{\partial x^*}=\overbrace{\mathbb{E}[\omega h_{x^*}(x^*)|\omega<\bar\omega]-\mathbb{E}[\omega h_{x^*}(x^*)|\omega>\bar\omega]}^{+}
\]

Repeating the same steps for risk increasing inputs $h_{x}(x)>0$ shows
that marginal profit in the bad state is less than marginal profit in
the good state.

\[
\frac{\partial \mathbb{E}[\pi|\omega<\bar\omega]}{\partial x^*}-\frac{\partial \mathbb{E}[\pi|\omega>\bar\omega]}{\partial x^*}=\overbrace{\mathbb{E}[\omega h_{x^*}(x^*)|\omega<\bar\omega]-\mathbb{E}[\omega h_{x^*}(x^*)|\omega>\bar\omega]}^{-}
\]
\end{proof}

\subsection{\texorpdfstring{Proof of
Proposition~\ref{prp-samre}}{Proof of Proposition~}}\label{sec-samre}

\textbf{Proposition 4.1} \emph{In fisheries with two inputs, index
insurance specified with contracts on} $\omega$ \emph{will increase
(decrease) the optimal use of a specific input if the input's risk
effects are increasing (decreasing) when the following sufficient
condition is true:}

$\frac{\partial U}{\partial x_a x_b}>0$ \emph{when both inputs share
the same risk effects, and}
$\frac{\partial U}{\partial x_a \partial x_b}<0$\emph{when inputs
have opposite risk effects.}

\emph{Otherwise, index insurance will have ambiguous effects on optimal
input choice.}

\begin{proof}
We use the same insurance design from Section~\ref{sec-common}. Fishers
now maximize expected utility by selecting two inputs. Contracts are
built on $\omega$.

\begin{equation}\label{eq-max2}{
\begin{aligned}
U\equiv\max_{x_a,x_b}\mathbb{E}[U]=\int^{\infty}_{-\infty}&\left[ \int^{\bar \omega}_{-\infty}j(\omega,\theta)u(\pi(X,\hat{B},\theta,\omega)+(1-J(\bar \omega))\gamma)d\omega \right.\\
&\left.+\int^{\infty}_{\bar{\omega}}j(\omega,\theta) u(\pi(X,\hat{B},
\theta,\omega)-J(\bar \omega)\gamma)d\omega\right] d\theta
\end{aligned}
}\end{equation}

Taking the first order conditions yields:

\begin{equation}\label{eq-foc2}
\begin{aligned}
\frac{\partial U}{\partial x_a}
&= \int_{-\infty}^{\infty}
\Biggl[
\int_{-\infty}^{\bar\omega}
j(\omega,\theta)\,
u_{x_a}\!\Bigl(
\pi(X,\hat{B},\theta,\omega)
+ (1 - J(\bar\omega))\gamma
\Bigr) \\
&\qquad\qquad\quad
\times
\frac{\partial \pi}{\partial x_a}
(X,\hat{B},\theta,\omega)
\, d\omega \\
&\quad
+ \int_{\bar\omega}^{\infty}
j(\omega,\theta)\,
u_{x_a}\!\Bigl(
\pi(X,\hat{B},\theta,\omega)
- J(\bar\omega)\gamma
\Bigr) \\
&\qquad\qquad\quad
\times
\frac{\partial \pi}{\partial x_a}
(X,\hat{B},\theta,\omega)
\, d\omega
\Biggr]
\, d\theta
= 0 , \\[1.2em]
\frac{\partial U}{\partial x_b}
&= \int_{-\infty}^{\infty}
\Biggl[
\int_{-\infty}^{\bar\omega}
j(\omega,\theta)\,
u_{x_b}\!\Bigl(
\pi(X,\hat{B},\theta,\omega)
+ (1 - J(\bar\omega))\gamma
\Bigr) \\
&\qquad\qquad\quad
\times
\frac{\partial \pi}{\partial x_b}
(X,\hat{B},\theta,\omega)
\, d\omega \\
&\quad
+ \int_{\bar\omega}^{\infty}
j(\omega,\theta)\,
u_{x_b}\!\Bigl(
\pi(X,\hat{B},\theta,\omega)
- J(\bar\omega)\gamma
\Bigr) \\
&\qquad\qquad\quad
\times
\frac{\partial \pi}{\partial x_b}
(X,\hat{B},\theta,\omega)
\, d\omega
\Biggr]
\, d\theta
= 0 .
\end{aligned}
\end{equation}

Assuming the first order condition is satisfied, we can use the implicit
function theorem (IFT) to look at the impact of a change in the
exogenous insurance contract. Applying IFT yields a system of equations
that determine the impact of insurance on each optimal input:

\begin{equation}\label{eq-ivtsol}{
\begin{aligned}
&\frac{\partial x_a}{\partial \gamma}=\frac{-1}{Det}\left[\frac{\partial U}{\partial x_b \partial x_b}\frac{\partial U}{\partial x_a \partial \gamma}-\frac{\partial U}{\partial x_a \partial x_b}\frac{\partial U}{\partial x_b \partial \gamma}\right] \\
&\frac{\partial x_b}{\partial \gamma}=\frac{-1}{Det}\left[\frac{-\partial U}{\partial x_b \partial x_a}\frac{\partial U}{\partial x_a \partial \gamma}+\frac{\partial U}{\partial x_a \partial x_a}\frac{\partial U}{\partial x_b \partial \gamma}\right]
\end{aligned}
}\end{equation}

Because the determinant (DET) will always be positive by the
second-order condition, we can focus on the interior of the brackets. If
positive, then insurance will lower use of that specific input and vice
versa. The partial derivatives in Equation~\ref{eq-ivtsol} are complex.
Their complete derivations are included in Section~\ref{sec-partial}.

Lemma~\ref{lem-mp} allows us to sign the partial equations
Equation~\ref{eq-kgam} and Equation~\ref{eq-lgam} for any risk effect on
either input. Concave utility by definition leads to $u''<0$. For
simplicity, we will only focus on
$\frac{\partial U}{\partial x_a\partial \gamma}$, but all applies
equally to $\frac{\partial U}{\partial x_b\partial \gamma}$. Insurance
payouts equalize profits between different states. If insurance
completely covers all loss and $x_a$ is risk increasing, then
$\frac{\partial U}{\partial x_a\partial \gamma}$ is positive.

\begin{equation}\label{eq-kgamsol}{
\begin{aligned}
\frac{U}{\partial x_a \partial \gamma}=\int^{\infty}_{-\infty}&\overbrace{j_{\theta}(\theta)J(\bar\theta)(1-J(\bar\theta))u''(\theta,\cdot)}^{-}\\
&\left[ \int^{\bar\omega}_{-\infty}\underbrace{j_{\omega}(\omega)\frac{\partial \pi}{\partial x_a}d\omega
-\int^{\infty}_{\bar{\theta}}j_{\omega}(\omega)\frac{\partial \pi}{\partial x_a}d\omega}_{-}\right] d\theta\\
>0
\end{aligned}
}\end{equation}

Suppose both inputs are risk increasing so
$\frac{\partial U}{\partial x_a\partial \gamma}$ and
$\frac{\partial U}{\partial x_b\partial \gamma}$ are positive. The
only way for Equation~\ref{eq-ivtsol} to be unambiguously positive is
for $\frac{\partial U}{\partial x_a\partial x_b}$ and
$\frac{\partial U}{\partial x_b\partial x_a}$ to be positive.

\[
\begin{aligned}
&\frac{\partial x_a}{\partial \gamma}=\overbrace{\frac{-1}{Det}}^{-}\left[\overbrace{\overbrace{\frac{\partial U}{\partial x_b \partial x_b}}^{-}\overbrace{\frac{\partial U}{\partial x_a \partial \gamma}}^{+}\overbrace{-\frac{\partial U}{\partial x_a \partial x_b}}^{-}\overbrace{\frac{\partial U}{\partial x_b \partial \gamma}}^{+}}^{-}\right] >0\\
&\frac{\partial x_b}{\partial \gamma}=\overbrace{\frac{-1}{Det}}^{-}\left[\overbrace{\overbrace{\frac{-\partial U}{\partial x_b \partial x_a}}^{-}\overbrace{\frac{\partial U}{\partial x_a \partial \gamma}}^{+}+\overbrace{\frac{\partial U}{\partial x_a \partial x_a}}^{-}\overbrace{\frac{\partial U}{\partial x_b \partial \gamma}}^{+}}^{-}\right]>0
\end{aligned}
\]

Both risk increasing inputs will be raised with index insurance.
Repeating the same steps above with risk decreasing inputs shows both
inputs unambiguously decrease with index insurance.

Now suppose inputs have mixed risk effects. For simplicity, $x_a$ will
be risk increasing and $x_b$ will be risk decreasing. The results will
hold for the opposite case. By Lemma~\ref{lem-mp},
$\frac{\partial U}{\partial x_a\partial \gamma}$ is positive, while
$\frac{\partial U}{\partial x_b\partial \gamma}$ is negative.
Equation~\ref{eq-ivtsol} will be unambiguously positive if
$\frac{\partial U}{\partial x_a\partial x_b}$ and
$\frac{\partial U}{\partial x_b\partial x_a}$ are negative.

\[
\begin{aligned}
&\frac{\partial x_a}{\partial \gamma}=\overbrace{\frac{-1}{Det}}^{-}\left[\overbrace{\overbrace{\frac{\partial U}{\partial x_b\partial x_b}}^{-}\overbrace{\frac{\partial U}{\partial x_a \partial \gamma}}^{+}\overbrace{-\frac{\partial U}{\partial x_a \partial x_b}}^{+}\overbrace{\frac{\partial U}{\partial x_b\partial \gamma}}^{-}}^{-}\right] >0\\
&\frac{\partial x_b}{\partial \gamma}=\overbrace{\frac{-1}{Det}}^{-}\left[\overbrace{\overbrace{\frac{-\partial U}{\partial x_b\partial x_a}}^{+}\overbrace{\frac{\partial U}{\partial x_a \partial \gamma}}^{+}+\overbrace{\frac{\partial U}{\partial x_a \partial x_a}}^{-}\overbrace{\frac{\partial U}{\partial x_b\partial \gamma}}^{-}}^{+}\right]<0
\end{aligned}
\]

The risk increasing input will be raised with index insurance, while the
risk decreasing input will be lowered.

If these conditions do not hold, then it is impossible to determine
which additive element outweighs the other, and the insurance effects on
optimal input use will be ambiguous regardless of the underlying risk
effects of an input.
\end{proof}

\subsection{Partial derivatives}\label{sec-partial}

Partial derivatives used to sign Equation~\ref{eq-ivtsol} are shown
below. For brevity, $\pi(X,\hat B,\omega,\theta)$ is reduced to
$\pi$.

\begin{equation}\label{eq-kk}
\begin{aligned}
\frac{\partial^2 U}{\partial x_a \, \partial x_a}
&= \int_{-\infty}^{\infty}
\Biggl[
\int_{-\infty}^{\bar\omega}
j(\omega,\theta)
\Biggl\{
u''\!\bigl(
\pi + (1 - J(\bar\omega))\gamma
\bigr)
\frac{\partial \pi}{\partial x_a} \\
&\qquad\qquad\quad
+\, u'\!\bigl(
\pi + (1 - J(\bar\omega))\gamma
\bigr)
\frac{\partial^2 \pi}{\partial x_a^2}
\Biggr\}
\, d\omega \\
&\quad
+ \int_{\bar\omega}^{\infty}
j(\omega,\theta)
\Biggl\{
u''\!\bigl(
\pi - J(\bar\omega)\gamma
\bigr)
\frac{\partial \pi}{\partial x_a} \\
&\qquad\qquad\quad
+\, u'\!\bigl(
\pi - J(\bar\omega)\gamma
\bigr)
\frac{\partial^2 \pi}{\partial x_a^2}
\Biggr\}
\, d\omega
\Biggr]
\, d\theta .
\end{aligned}
\end{equation}

\begin{equation}\label{eq-ll}
\begin{aligned}
\frac{\partial^2 U}{\partial x_b \, \partial x_b}
&= \int_{-\infty}^{\infty}
\Biggl[
\int_{-\infty}^{\bar\omega}
j(\omega,\theta)
\Biggl\{
u''\!\bigl(
\pi + (1 - J(\bar\omega))\gamma
\bigr)
\frac{\partial \pi}{\partial x_b} \\
&\qquad\qquad\quad
+\, u'\!\bigl(
\pi + (1 - J(\bar\omega))\gamma
\bigr)
\frac{\partial^2 \pi}{\partial x_b^2}
\Biggr\}
\, d\omega \\
&\quad
+ \int_{\bar\omega}^{\infty}
j(\omega,\theta)
\Biggl\{
u''\!\bigl(
\pi - J(\bar\omega)\gamma
\bigr)
\frac{\partial \pi}{\partial x_b} \\
&\qquad\qquad\quad
+\, u'\!\bigl(
\pi - J(\bar\omega)\gamma
\bigr)
\frac{\partial^2 \pi}{\partial x_b^2}
\Biggr\}
\, d\omega
\Biggr]
\, d\theta .
\end{aligned}
\end{equation}

\begin{equation}\label{eq-crossl}
\begin{aligned}
\frac{\partial^2 U}{\partial x_a \, \partial x_b}
&= \int_{-\infty}^{\infty}
\Biggl[
\int_{-\infty}^{\bar\omega}
j(\omega,\theta)
\Biggl\{
u''\!\bigl(
\pi + (1 - J(\bar\omega))\gamma
\bigr)
\frac{\partial \pi}{\partial x_a}
\frac{\partial \pi}{\partial x_b} \\
&\qquad\qquad\quad
+\, u'\!\bigl(
\pi + (1 - J(\bar\omega))\gamma
\bigr)
\frac{\partial^2 \pi}{\partial x_a \, \partial x_b}
\Biggr\}
\, d\omega \\
&\quad
+ \int_{\bar\omega}^{\infty}
j(\omega,\theta)
\Biggl\{
u''\!\bigl(
\pi - J(\bar\omega)\gamma
\bigr)
\frac{\partial \pi}{\partial x_a}
\frac{\partial \pi}{\partial x_b} \\
&\qquad\qquad\quad
+\, u'\!\bigl(
\pi - J(\bar\omega)\gamma
\bigr)
\frac{\partial^2 \pi}{\partial x_a \, \partial x_b}
\Biggr\}
\, d\omega
\Biggr]
\, d\theta .
\end{aligned}
\end{equation}

\begin{equation}\label{eq-crossk}
\begin{aligned}
\frac{\partial^2 U}{\partial x_b \, \partial x_a}
&= \int_{-\infty}^{\infty}
\Biggl[
\int_{-\infty}^{\bar\omega}
j(\omega,\theta)
\Biggl\{
u''\!\bigl(
\pi + (1 - J(\bar\omega))\gamma
\bigr)
\frac{\partial \pi}{\partial x_a}
\frac{\partial \pi}{\partial x_b} \\
&\qquad\qquad\quad
+\, u'\!\bigl(
\pi + (1 - J(\bar\omega))\gamma
\bigr)
\frac{\partial^2 \pi}{\partial x_b \, \partial x_a}
\Biggr\}
\, d\omega \\
&\quad
+ \int_{\bar\omega}^{\infty}
j(\omega,\theta)
\Biggl\{
u''\!\bigl(
\pi - J(\bar\omega)\gamma
\bigr)
\frac{\partial \pi}{\partial x_a}
\frac{\partial \pi}{\partial x_b} \\
&\qquad\qquad\quad
+\, u'\!\bigl(
\pi - J(\bar\omega)\gamma
\bigr)
\frac{\partial^2 \pi}{\partial x_b \, \partial x_a}
\Biggr\}
\, d\omega
\Biggr]
\, d\theta .
\end{aligned}
\end{equation}
\begin{equation}\label{eq-kgam}{
\begin{aligned}
\frac{\partial U}{\partial x_a \partial \gamma}=\int^\infty_{-\infty}&\left[ \int^{\bar \omega}_{-\infty}j(\omega,\theta)u''(\pi+(1-J(\bar \omega))\gamma)\frac{\partial\pi}{\partial x_a}(1-J(\bar \omega)d\omega \right. \\
& \left.\int^\infty_{\bar \omega}j(\omega,\theta)u''(\pi-J(\bar \omega)\gamma)\frac{\partial\pi}{\partial x_a}(-J(\bar \omega))d\omega\right]d\theta
\end{aligned}
}\end{equation}

\begin{equation}\label{eq-lgam}{
\begin{aligned}
\frac{\partial U}{\partial x_b \partial \gamma}=\int^\infty_{-\infty}&\left[ \int^{\bar \omega}_{-\infty}j(\omega,\theta)u''(\pi+(1-J(\bar \omega))\gamma)\frac{\partial\pi}{\partial x_b}(1-J(\bar \omega)d\omega \right. \\
&\left.\int^\infty_{\bar \omega}j(\omega,\theta)u''(\pi-J(\bar \omega)\gamma)\frac{\partial\pi}{\partial x_b}(-J(\bar \omega))d\omega\right]d\theta
\end{aligned}
}\end{equation}

\subsection{Correlation Proofs}\label{correlation-proofs}

\begin{lemma}[]\protect\hypertarget{lem-corr}{}\label{lem-corr}

When shocks are perfectly correlated, expected marginal profit is always
higher in the good state when an input, $x$, is risk increasing and
ambiguous when $x$ is risk decreasing. This hold regardless of the
chosen index.

$\frac{\mathbb{E}[\partial \pi|\omega<\bar \omega]}{\partial x}-\frac{\mathbb{E}[\partial \pi|\omega>\bar \omega]}{\partial x}<0$
if $h_{x}(X)>0$

And,
$\frac{\mathbb{E}[\partial \pi|\omega<\bar \omega]}{\partial x}-\frac{\mathbb{E}[\partial \pi|\omega>\bar \omega]}{\partial x}\lessgtr 0$
if $h_{x_m}(X)<0$.

\end{lemma}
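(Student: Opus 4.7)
The plan is to mirror the template used in the proofs of Lemma~\ref{lem-theta} and Lemma~\ref{lem-mp}, but to confront the fact that the two shocks can no longer be decoupled inside the conditional expectations. First, I would start from $\partial\pi/\partial x = f_x(x)\hat{B} + \theta f_x(x) + \omega h_x(x) - c_x(x)$ and form the difference between the bad- and good-state expectations. Since $x^{*}$ is chosen before any shock is realised, the deterministic pieces $f_x(x^{*})\hat{B}$ and $c_x(x^{*})$ cancel, and what remains is $\Delta_\theta\, f_x(x^{*}) + \Delta_\omega\, h_x(x^{*})$, where $\Delta_\theta := \mathbb{E}[\theta|\omega<\bar\omega]-\mathbb{E}[\theta|\omega>\bar\omega]$ and $\Delta_\omega := \mathbb{E}[\omega|\omega<\bar\omega]-\mathbb{E}[\omega|\omega>\bar\omega]$. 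This is the exact analogue of Equation~\ref{eq-comppi1}, except the $\theta$ contribution no longer vanishes by independence.

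Second, I would invoke perfect correlation to sign $\Delta_\theta$ and $\Delta_\omega$ simultaneously. Under a perfect positive coupling one can write $\theta$ as a strictly increasing affine function of $\omega$, so the event $\{\omega<\bar\omega\}$ coincides with a corresponding half-line event in $\theta$. Taking $\bar\omega=0$ (the mean) therefore gives $\mathbb{E}[\omega|\omega<0]<0<\mathbb{E}[\omega|\omega>0]$ and, by monotonicity of the coupling, $\mathbb{E}[\theta|\omega<0]<0<\mathbb{E}[\theta|\omega>0]$. Both $\Delta_\theta$ and $\Delta_\omega$ are then strictly negative. Because the partitions $\{\omega<\bar\omega\}$ and $\{\theta<\bar\theta\}$ coincide under this coupling, redoing the calculation for a $\theta$-indexed contract produces the same expression, which is what delivers the ``regardless of the chosen index'' clause.

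Third, I would sign the combined expression using $f_x(x^{*})>0$. The first term $\Delta_\theta\, f_x(x^{*})$ is strictly negative in all cases. When $h_x(x^{*})>0$ the second term is also strictly negative, so the sum is negative and the good state carries strictly higher expected marginal profit, matching the first claim of the lemma. When $h_x(x^{*})<0$ the two terms point in opposite directions and the overall sign depends on which of $|\Delta_\theta\, f_x(x^{*})|$ and $|\Delta_\omega\, h_x(x^{*})|$ dominates, which is precisely the ambiguity asserted for risk-decreasing inputs.

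The main obstacle is the coupling step. In the independent case the cross terms vanished cheaply because $\mathbb{E}[\theta|\omega\lessgtr\bar\omega]=0$; here I must argue instead that the monotone coupling forces $\Delta_\theta$ and $\Delta_\omega$ to carry the \emph{same} sign, and be explicit that the statement refers to perfect \emph{positive} correlation (perfect negative correlation would flip the sign of $\Delta_\theta$ and merge the risk-increasing and risk-decreasing cases rather than separate them). Making that stipulation up front, before any sign analysis, is the cleanest way to avoid conflating a genuine magnitude ambiguity in the risk-decreasing case with a sign-convention issue about the correlation itself.
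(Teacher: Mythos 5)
Your proposal is correct and follows essentially the same route as the paper's own proof: cancel the deterministic terms, reduce the difference to the $\theta f_x$ and $\omega h_x$ conditional-expectation pieces, use the fact that under perfect (positive) correlation both shocks are negative on $\{\omega<\bar\omega\}$ to sign both pieces negative when $h_x>0$, and note the two pieces conflict when $h_x<0$. Your explicit caveat that the statement requires perfect \emph{positive} correlation is a worthwhile clarification the paper leaves implicit, but it does not change the argument.
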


\begin{proof}
Perfect correlation between two random variables centered at 0 imply
that whenever one variable is negative, so too is the other. Due to
this, we focus only on $\omega$ as the index. The proof follows
identically if replaced by an index on $\theta$.

\begin{equation}\label{eq-corrmp}{
\begin{aligned}
\frac{\partial \mathbb{E}[\pi|\omega<\bar\omega]}{\partial x}-&\frac{\partial \mathbb{E}[\pi|\omega>\bar\omega]}{\partial x}=\\ &\cancel{f_{x}(x)}\hat{B}+\mathbb{E}[\theta f_x(x)|\omega<\bar\omega]+\mathbb{E}[\omega h_{x}(x)|\omega<\bar\omega]-\cancel c(x)\\
&-\cancel{f_{x}(x)}\hat{B}+\mathbb{E}[\theta f_x(x)|\omega>\bar\omega]+\mathbb{E}[\omega h_{x}(x)|\omega>\bar\omega]-\cancel c(x)
\end{aligned}
}\end{equation}

When $h_x(X)>0$, Equation~\ref{eq-corrmp} is always negative. Expected
marginal profit is always higher in the good trigger state when shocks
are perfectly correlated.

When $h_x(X)<0$, Equation~\ref{eq-corrmp} is ambiguous. The sign of
each line depends on the relative effect between $f_x(X)$ and
$h_x(X)$. If the risk effects term dominates then
Equation~\ref{eq-corrmp} will be positive. Without further information
it is impossible to know which effect dominates.
\end{proof}

\begin{proposition}[]\protect\hypertarget{prp-corr}{}\label{prp-corr}

For feasible index insurance contracts specified at either trigger,
$\bar\omega=0$ or $\bar\theta=0$, when $\omega$ and $\theta$ are
perfectly correlated random variables, the change in the optimal input
is ambiguous when $h_x(x)<0$ and increases when $h_x(x)>0$.

\end{proposition}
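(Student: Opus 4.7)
The plan is to mirror the structure of the proof of Proposition~\ref{prp-ind}, but to carry out the factorization in a way that respects the degenerate joint distribution created by perfect correlation. First I would invoke the implicit function theorem as in Equation~\ref{eq-ivt}, so that signing $\partial x^*/\partial \gamma$ reduces to signing $\partial^2 U/(\partial x\,\partial \gamma)$, since the denominator is negative by the second-order condition. Because $\theta$ and $\omega$ are perfectly correlated, the joint distribution collapses onto a one-dimensional curve, so I would parameterize everything by a single shock (say $\omega$), write $\theta$ as a deterministic function of $\omega$, and express the cross partial as a single integral split at $\bar\omega$, exactly analogous to Equation~\ref{eq-xgamrisk} but without the outer $\theta$ integral.

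Next I would impose the same full-coverage device used in the proof of Proposition~\ref{prp-ind}: if insurance equalizes profits across the good and bad states, then the argument of $u''$ takes a common value and can be pulled outside the bracket, leaving a factor of $J(\bar\omega)(1-J(\bar\omega))u''(\cdot)$ multiplying the difference
\[
\int_{-\infty}^{\bar\omega} j(\omega)\,\frac{\partial \pi}{\partial x}\,d\omega \;-\; \int_{\bar\omega}^{\infty} j(\omega)\,\frac{\partial \pi}{\partial x}\,d\omega,
\]
which is exactly the difference in conditional expected marginal profits between the bad and good states. The same reduction applies verbatim to the $\bar\theta=0$ contract by relabelling, so both cases collapse to signing this one bracket.

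At this point I would apply Lemma~\ref{lem-corr} directly. The prefactor $J(\bar\omega)(1-J(\bar\omega))u''(\cdot)$ is unambiguously negative. When $h_x(x)>0$, Lemma~\ref{lem-corr} says marginal profit is strictly larger in the good state, so the bracket is negative and the product is positive; dividing by the negative denominator of Equation~\ref{eq-ivt} then gives $\partial x^*/\partial \gamma>0$. When $h_x(x)<0$, Lemma~\ref{lem-corr} leaves the bracket ambiguous because the stock-risk term $\mathbb{E}[\theta f_x(x)\mid\cdot]$ and the extraction-risk term $\mathbb{E}[\omega h_x(x)\mid\cdot]$ now push in opposite directions along the correlation line, so $\partial x^*/\partial \gamma$ is of indeterminate sign. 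The conclusion about harvest then follows from the monotonicity of mean production in $x$.

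The main obstacle I anticipate is the factorization step: in the independent case the argument exploited $\int j_\theta(\theta)d\theta$ being a well-defined outer integral that could be pulled out, whereas under perfect correlation the joint density is supported on a curve of measure zero in $(\omega,\theta)$ space. I would handle this by re-deriving Equation~\ref{eq-xgamrisk} directly as a single expectation over $\omega$ (with $\theta$ substituted as its correlated value), thereby avoiding any attempt to separate the marginals and making clear that the structural reduction to $u''(\cdot)\cdot J(\bar\omega)(1-J(\bar\omega))\cdot[\cdot]$ still goes through. Once that reformulation is in place, the remaining steps are essentially a bookkeeping application of Lemma~\ref{lem-corr}.
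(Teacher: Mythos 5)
Your proposal is correct and follows essentially the same route as the paper's proof: reduce the problem to signing $\partial^2 U/(\partial x\,\partial\gamma)$, use the full-coverage device to factor out $u''(\cdot)J(\bar\omega)(1-J(\bar\omega))$, and then sign the remaining bracket via Lemma~\ref{lem-corr}. The only difference is cosmetic --- the paper writes the degenerate joint density as a double integral over the same-sign quadrants, whereas you collapse it to a single integral in $\omega$; both yield the same expression in Equation~\ref{eq-corrsol}.
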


\begin{proof}
Perfect correlation implies $\theta<0$ when $\omega<0$ and
$\theta>0$ when $\omega>0$ since both distributions have mean zero,
$\mathbb{E}[\theta]\equiv\mathbb{E}[\omega]=0$. The bounds of the
integral can be with respect to either trigger. For simplicity, we will
use $\bar\omega$ as the trigger, but the proof holds with
$\bar\theta$.

\begin{equation}\label{eq-per}
\begin{aligned}
\frac{\partial^2 U}{\partial x \, \partial \gamma}
&= (1 - J(\bar\omega))
\int_{-\infty}^{\bar\omega}
\int_{-\infty}^{\bar\omega}
j(\omega,\theta)\,
u''\!\Bigl(
\pi(x,\hat{B},\theta,\omega)
+ (1 - J(\bar\omega))\gamma
\Bigr) \\
&\qquad \times
\frac{\partial \pi}{\partial x}
(x,\hat{B},\theta,\omega)
\, d\omega \, d\theta \\
&\quad
- J(\bar\omega)
\int_{\bar\omega}^{\infty}
\int_{\bar\omega}^{\infty}
j(\omega,\theta)\,
u''\!\Bigl(
\pi(x,\hat{B},\theta,\omega)
- J(\bar\omega)\gamma
\Bigr) \\
&\qquad \times
\frac{\partial \pi}{\partial x}
(x,\hat{B},\theta,\omega)
\, d\omega \, d\theta .
\end{aligned}
\end{equation}

Suppose insurance fully covers the loss between states, then utility in
the good state and bad state are equal to each other so that we can
factor out like terms in Equation~\ref{eq-per}.

\begin{equation}\label{eq-corrsol}
\begin{aligned}
\frac{\partial^2 U}{\partial x \, \partial \gamma}
&= u''(\cdot)\, J(\bar\omega)\bigl(1 - J(\bar\omega)\bigr) \\
&\quad \times \Biggl[
  \int_{-\infty}^{\bar\omega}
  \int_{-\infty}^{\bar\omega}
  j(\omega,\theta)
  \frac{\partial \pi}{\partial x}
  (x,\hat{B},\theta,\omega)
  \, d\omega \, d\theta \\
&\qquad
  - \int_{\bar\omega}^{\infty}
    \int_{\bar\omega}^{\infty}
    j(\omega,\theta)
    \frac{\partial \pi}{\partial x}
    (x,\hat{B},\theta,\omega)
    \, d\omega \, d\theta
\Biggr].
\end{aligned}
\end{equation}

By Lemma~\ref{lem-corr}, when $h_x(x)<0$ the interior is ambiguous so
Equation~\ref{eq-corrsol} cannot be signed, but is unambiguously
positive when $h_x(x)>0$.
\end{proof}

\bibliographystyle{elsarticle-num-names}
\bibliography{library}

@article{Sethi2010,
   abstract = {Risk management methods provide means to address increasing complexity for successful fisheries management by systematically identifying and coping with risk. The objective of this study is to summarize risk management practices in use in fisheries and to present strategies that are not currently used but may be applicable. Available tools originate from a variety of disciplines and are as diverse as the risks they address, including algorithms to aid in making decisions with multiple stakeholders, reserves to buffer against economic or biological surprises, and insurance instruments to help fishermen cope with economic variability. Techniques are organized in a two-stage framework. In the first stage, risks are identified and analysed. Strategies presented in this category focus on decision analysis, including multicriteria decision-making tools, and the related concept of risk assessment. Then in the treatment stage, identified risks can be transferred, avoided, or retained using tools such as the Precautionary Approach, portfolio management, financial contracts to manage price risk and horizontal integration. Published fishery applications are reviewed, and some empirical examples of risks and risk management using US fisheries data are presented. © 2010 Blackwell Publishing Ltd.},
   author = {Suresh A. Sethi},
   doi = {10.1111/j.1467-2979.2010.00363.x},
   issn = {14672960},
   issue = {4},
   journal = {Fish and Fisheries},
   keywords = {Biological risk,Decision analysis,Economic risk,Fisheries management,Risk management,Socioecological systems},
   pages = {341-365},
   title = {Risk management for fisheries},
   volume = {11},
   year = {2010}
}

@article{rogers2019,
   abstract = {Climate change is expected to have a profound impact on the distribution, abundance and diversity of marine species globally 1,2. These ecological impacts of climate change will affect human communities dependent on fisheries for livelihoods and well-being 3. While methods for assessing the vulnerability of species to climate change are rapidly developing 4 and socio-ecological vulnerability assessments for fisheries are becoming available 5 , there has been less work devoted to understanding how impacts differ across fishing communities. We developed a linked socio-ecological approach to assess the exposure of fishing communities to risk from climate change, and present a case study of New England and Mid-Atlantic (USA) fishing communities. We found that the northern part of the study region was projected to gain suitable habitat and the southern part projected to lose suitable habitat for many species, but the exposure of fishing communities to risk was strongly dependent on both their spatial use of the ocean and their portfolio of species caught. A majority of fishing communities were projected to face declining future fishing opportunities unless they adapt, either through catching new species or fishing in new locations. By integrating climatic , ecological and socioeconomic data at a scale relevant to fishing communities, this analysis identifies where strategies for adapting to the ecological impacts of climate change will be most needed. Climate change is altering the distribution, abundance and diversity of marine species globally 1,2,6. On a local scale, conditions will become more favourable for some species and less favourable for others, which will ultimately alter the mix of species available for harvesting in any given coastal ecosystem. Despite widespread acknowledgement that climate change is a key challenge for sustainable fisheries and communities 7,8 , we have limited understanding of the relative exposure of fishing communities to climate change risk. Such information is critical for creating adaptation policies, prioritizing research and management efforts and for reducing community exposure to risk on the ground 7. Ecological risk or vulnerability assessments identify which species or populations may be most at risk from climate change or other stressors. For fisheries, these assessments are usually aimed at the species or stock level (for example, ref. 4). However, a fishing community's exposure to risk is dependent not only on which species or stocks it targets, but where in the ocean it targets them and how much flexibility it has to adapt to new conditions. Socio-ecological risk assessments can link ecological risk to community vulnerability 3,5 , but methods to do so at the appropriate scale for adaptation planning are not well developed. While fish species may shift in response to climate change 6 , fishers are often limited in where they can fish based on local ecological knowledge, vessel size or gear type, geographic distance, spatial management or conservation measures and, in some cases, customary territories 9. Peer groups of vessels from the same port and using the same gear type are often subject to a common set of spatial constraints (for example, shared local ecological knowledge , vessel mobility) and, as a result, typically exhibit distinct and relatively enduring spatial patterns of ocean use 10,11. The 'commu-nities-at-sea' concept 11 recognizes that shared patterns of ocean use indicate shared spatial constraints, as well as resident community processes and practices that shape both community identity and the capacity to adapt and respond to environmental change 12. The community-at-sea concept was developed based on communities in the Northeast region of the USA (NEUS), but could be applied more generally to identify groups of fishers likely to face similar challenges and opportunities under climate change. To develop and test socio-ecological methods used for assessing the exposure of fishing communities to risk under climate change, we integrated climatic, ecological and socioeconomic data from the NEUS at the scale of communities-at-sea. First, we quantified the spatial patterns of projected changes in habitat suitability for individual species under climate change. We then linked these projected ecological changes to information on fishing community practices to assess the level of exposure to risk for fishing communities based on their harvest portfolios and spatial use of the ocean. We discuss these results in light of adaptation possibilities and barriers. Providing local-scale information on the projected changes to species habitat, and on the exposure of coastal communities to these changes, is an important step towards creating climate adaptation plans and prioritizing adaptation actions and investments. Species distribution models fit to more than 40 years of scientific survey data indicated that temperature was a significant predictor of species occurrence in space and time based on out-of-sample predictive skill (Supplementary Table 1). For the majority of species (24 of 33), habitat was projected to improve in some regions of the NEUS shelf but to deteriorate in others by 2040-2050 (Fig. 1b). NATuRE CLiMATE ChANGE | VOL 9 | JULY 2019 | 512-516 | www.nature.com/natureclimatechange 512},
   author = {Lauren A Rogers and Robert Griffin and Talia Young and Emma Fuller and Kevin St Martin and Malin L Pinsky},
   doi = {10.1038/s41558-019-0503-z},
   journal = {Nature Climate Change},
   pages = {512-516},
   title = {Shifting habitats expose fishing communities to risk under climate change},
   volume = {9},
   url = {https://doi.org/10.1038/s41558-019-0503-z},
   year = {2019}
}

@article{Herrmann2004,
   abstract = {This paper analyzes the feasibility of extending the U.S. Department of Agriculture's crop insurance program to the Bristol Bay, Alaska, capture fishery for sockeye salmon Onco-rhynchus nerka. The impetus for this program has been a string of poor sockeye salmon fishing seasons followed by a series of disaster declarations in western Alaska. The mission of the Risk Management Agency, for whom this analysis was prepared, is to help stabilize the agricultural sector, not to provide disaster payments that offset permanent industry shifts by assuring producers some historical (but no longer attainable) production or revenue levels. Without a contemporaneous effort to restructure and rationalize the program of limited entry, the proposed crop insurance program for Bristol Bay cannot provide meaningful relief to a maladaptive industry that is be-leaguered by excessive costs and increased competition. One thing seems certain: 2001 is likely to be remembered as a painful year for Alaska fishermen and processors. Will it also be remembered as the year Alaska truly set out to rebuild and retool its salmon industry for the 21st century? [Fiorillo 2001:1.]},
   author = {Mark Herrmann and Joshua Greenberg and Charles Hamel and Hans Geier},
   doi = {10.1577/M02-086.1},
   journal = {North American Journal of Fisheries Management},
   pages = {352-366},
   title = {Extending Federal Crop Insurance Programs to Commercial Fisheries: The Case of Bristol Bay, Alaska, Sockeye Salmon},
   volume = {24},
   year = {2004}
}

@article{Oken2021,
   abstract = {Natural resources often exhibit large interannual fluctuations in productivity  driven by shifting environmental conditions, and this translates to high variability in the revenue resource users earn. However, users can dampen this variability by harvesting a portfolio of resources. In the context of fisheries, this means targeting multiple populations, though the ability to actually build diverse fishing portfolios is often constrained by the costs and availability of fishing permits. These constraints are generally intended to prevent overcapitalization of the fleet and ensure populations are fished sustainably. As linked human-natural systems, both ecological and fishing dynamics influence the specific advantages and disadvantages of increasing the diversity of fishing portfolios. Specifically, a portfolio of synchronous populations with similar responses to environmental drivers should reduce revenue variability less than a portfolio of asynchronous populations with opposite responses. We built a bioeconomic model based on the Dungeness crab (Metacarcinus magister), Chinook salmon (Oncorhynchus tshawytscha), and groundfish fisheries in the California Current, and used it to explore the influence of population synchrony and permit access on income patterns. As expected, synchronous populations reduced revenue variability less than asynchronous populations, but only for portfolios including crab and salmon. Synchrony with the longer-lived groundfish population was not important because environmentally-driven changes in groundfish recruitment were mediated by growth and natural mortality over the full population age structure, and overall biomass was relatively stable across years. Thus, building a portfolio of diverse life histories can buffer against the impacts of poor environmental conditions over short time scales. Increasing access to all permits generally led to increased revenue stability and decreased inequality of the fleet, but also resulted in less revenue earned by an individual from a given portfolio because more vessels shared the available biomass. This means managers are faced with a tradeoff between the average revenue individuals earn and the risk those individuals accept. These results illustrate the importance of considering connections between social and ecological dynamics when evaluating management options that constrain or facilitate fishers' ability to diversify their fishing.},
   author = {Kiva L. Oken and Daniel S. Holland and André E. Punt},
   doi = {10.1002/eap.2307},
   issn = {1051-0761},
   issue = {0},
   journal = {Ecological Applications},
   keywords = {bioeconomic model,california current,economics,fisheries,portfolio effects,synchrony},
   pages = {1-16},
   title = {The effects of population synchrony, life history, and access constraints on benefits from fishing portfolios},
   volume = {0},
   year = {2021}
}

@article{Lehodey2006,
   abstract = {Fish population variability and fisheries activities are closely linked to weather and climate dynamics. While weather at sea directly affects fishing, environmental variability determines the distribution, migration, and abundance of fish. Fishery science grew up during the last century by integrating knowledge from oceanography, fish biology, marine ecology, and fish population dynamics, largely focused on the great Northern Hemisphere fisheries. During this period, understanding and explaining interannual fish recruitment variability became a major focus for fisheries oceanographers. Yet, the close link between climate and fisheries is best illustrated by the effect of "unexpected" events - that is, nonseasonal, and sometimes catastrophic - on fish exploitation, such as those associated with the El Niño-Southern Oscillation (ENSO). The observation that fish populations fluctuate at decadal time scales and show patterns of synchrony while being geographically separated drew attention to oceanographic processes driven by low-frequency signals, as reflected by indices tracking large-scale climate patterns such as the Pacific decadal oscillation (PDO) and the North Atlantic Oscillation (NAO). This low-frequency variability was first observed in catch fluctuations of small pelagic fish (anchovies and sardines), but similar effects soon emerged for larger fish such as salmon, various groundfish species, and some tuna species. Today, the availability of long time series of observations combined with major scientific advances in sampling and modeling the oceans' ecosystems allows fisheries science to investigate processes generating variability in abundance, distribution, and dynamics of fish species at daily, decadal, and even centennial scales. These studies are central to the research program of Global Ocean Ecosystems Dynamics (GLOBEC). This review presents examples of relationships between climate variability and fisheries at these different time scales for species covering various marine ecosystems ranging from equatorial to subarctic regions. Some of the known mechanisms linking climate variability and exploited fish populations are described, as well as some leading hypotheses, and their implications for their management and for the modeling of their dynamics. It is concluded with recommendations for collaborative work between climatologists, oceanographers, and fisheries scientists to resolve some of the outstanding problems in the development of sustainable fisheries. © 2006 American Meteorological Society.},
   author = {P. Lehodey and J. Alheit and M. Barange and T. Baumgartner and G. Beaugrand and K. Drinkwater and J. M. Fromentin and S. R. Hare and G. Ottersen and R. I. Perry and C. Roy and C. D. van der Lingen and F. Werner},
   doi = {10.1175/JCLI3898.1},
   issn = {08948755},
   issue = {20},
   journal = {Journal of Climate},
   pages = {5009-5030},
   title = {Climate variability, fish, and fisheries},
   volume = {19},
   year = {2006}
}

@article{Tilman2018,
   abstract = {Harvesting behaviors of natural resource users, such as farmers, fishermen and aquaculturists, are shaped by season-to-season and day-today variability, or in other words risk. Here, we explore how risk-mitigation strategies can lead to sustainable use and improved management of common-pool natural resources. Over-exploitation of unmanaged natural resources, which lowers their long-term productivity , is a central challenge facing societies. While effective top-down management is a possible solution, it is not available if the resource is outside the jurisdictional bounds of any management entity, or if existing institutions cannot effectively impose sustainable-use rules. Under these conditions, alternative approaches to natural resource governance are required. Here, we study revenue-sharing clubs as a mechanism by which resource users can mitigate their income volatility and importantly, as a co-benefit, are also incentivized to reduce their effort, leading to reduced over-exploitation and improved resource governance. We use game theoretic analyses and agent-based modeling to determine the conditions in which revenue-sharing can be beneficial for resource management as well as resource users. We find that revenue-sharing agreements can emerge and lead to improvements in resource management when there is large variability in production/revenue and when this variability is uncorrelated across members of the revenue-sharing club. Further, we show that if members of the revenue-sharing collective can sell their product at a price premium, then the range of ecological and economic conditions under which revenue-sharing can be a tool for management greatly expands. These results have implications for the design of bottom-up management, where resource users themselves are incentivized to operate in ecologically sustainable and economically advantageous ways.},
   author = {Andrew R Tilman and Simon Levin and James R Watson},
   doi = {10.1016/j.jtbi.2018.06.003},
   journal = {Journal of Theoretical Biology},
   pages = {205-214},
   title = {Revenue-sharing clubs provide economic insurance and incentives for sustainability in common-pool resource systems Keywords: Risk Insurance Social-ecological systems Fisheries management Sustainability Complex adaptive systems Agent-based model Common-poo},
   volume = {454},
   url = {https://doi.org/10.1016/j.jtbi.2018.06.003},
   year = {2018}
}

@article{Wu2020,
   abstract = {Along with adverse selection, moral hazard is one of the major hurdles that private and public insurance plans must contend with. Moral hazard occurs if risks are endogenous to a producer's behavior and if the insurer is unable to properly monitor the insured. We review the role of moral hazard in the US crop insurance program. We conduct an empirical analysis of one important aspect of the US crop insurance program—prevented planting. This provision provides indemnity payments if conditions are not suitable for planting. The program has been the subject of considerable controversy, especially during 2019, when the rate of claims is expected to be especially high. Because loss adjustors may encounter difficulties in assessing the weather conditions associated with prevented planting claims, the program is susceptible to moral hazard. We consider the extent to which prevented planting claims may be endogenous to prices. We find significant evidence of moral hazard. The likelihood of prevented planting claims increases as the expected market price decreases or as fertilizer costs increase for corn and soybeans in the Prairie Pothole Region and for grain sorghum and cotton in all states.},
   author = {Shenan Wu and Barry K. Goodwin and Keith Coble},
   doi = {10.1111/agec.12545},
   issn = {15740862},
   issue = {1},
   journal = {Agricultural Economics (United Kingdom)},
   keywords = {LASSO estimation,Q14,Q18,crop insurance,logistic regression,moral hazard,prevented planting},
   pages = {131-142},
   title = {Moral hazard and subsidized crop insurance},
   volume = {51},
   year = {2020}
}

@article{Sibiko2020,
   abstract = {Weather risk is a serious issue in the African small farm sector that will further increase due to climate change. Farmers typically react by using low amounts of agricultural inputs. Low input use can help to minimize financial loss in bad years, but is also associated with low average yield and income. Increasing small farm productivity and income is an important prerequisite for rural poverty reduction and food security. Crop insurance could incentivize farmers to increase their input use, but indemnity-based crop insurance programs are plagued by market failures. This article contributes to the emerging literature on the role of weather index insurance (WII). We use data from a survey of farmers in Kenya, where a commercial WII scheme has been operating for several years. Regression models with instrumental variables are used to analyze WII uptake and effects on input use and crop productivity. Results show that WII uptake is positively and significantly associated with the use of chemical fertilizer and improved seeds, and also with crop yield. We conclude that upscaling WII programs may help to spur agricultural development in the small farm sector.},
   author = {Kenneth W. Sibiko and Matin Qaim},
   doi = {10.1007/s12571-019-00987-y},
   issn = {18764525},
   issue = {1},
   journal = {Food Security},
   keywords = {Africa,Crop insurance,Fertilizer,Impact evaluation,Small farms,Weather risk},
   pages = {151-167},
   title = {Weather index insurance, agricultural input use, and crop productivity in Kenya},
   volume = {12},
   url = {https://doi.org/10.1007/s12571-019-00987-y},
   year = {2020}
}

@article{Bulte2021,
   abstract = {Agricultural (index) insurance for smallholders in developing countries has gained traction in academic and policy circles. The expectation is that the uptake of insurance will protect smallholders from production shocks and incentivize them to modernize production. We develop a simple theoretical model to demonstrate that the welfare effects of insurance are fundamentally ambiguous-even in the absence of transaction costs or basis risk. The second best nature of the institutional context within which smallholders operate implies that the uptake of insurance may accentuate pre-existing inefficiencies. This idea is worked out in detail for the case of livestock herding on common grazing lands. Our theoretical model predicts that insurance invites overstocking of communal lands, and lowers the profitability of herding when common pastures are degraded.},
   author = {Erwin Bulte and Rein Haagsma},
   doi = {10.1007/s10640-021-00545-1},
   isbn = {0123456789},
   journal = {Environmental and Resource Economics},
   keywords = {Index insurance,Livestock,Overgrazing,Property rights,Second-best analysis},
   pages = {587-613},
   title = {The Welfare Effects of Index-Based Livestock Insurance: Livestock Herding on Communal Lands},
   volume = {78},
   url = {https://doi.org/10.1007/s10640-021-00545-1},
   year = {2021}
}

@article{He2020,
   abstract = {There have been a number of previous studies that examined the effects of yield- or revenue-based crop insurance products on input use of farmers. However, no study has specifically investigated the input use impacts of a cost-of-production (COP) crop insurance policy, even though this type of crop insurance is the predominant one used in several other countries outside of the United States (such as the Philippines and China). This article aims to theoretically and empirically examine the effect of a COP crop insurance product on farmers’ chemical input use. Our theoretical model suggests that the effect of COP insurance on input use can either be positive or negative, with the resulting impact depending on the strengths of (a) the traditional moral hazard effect of insurance (i.e., an input use decreasing effect); versus (b) the marginal incentives to apply more inputs due to input levels being the main determinant for expected indemnity amounts in this type of insurance (i.e., an input use increasing effect). A survey data set from corn farmers in the Philippines is then used to empirically illustrate how a particular COP insurance product influences input use in a real-life context. In this case, we find that COP insurance increases the use of chemical inputs (e.g., fertilizers and total chemical expenditure), implying that the positive marginal incentive to apply more inputs dominates the negative moral hazard effect.},
   author = {Juan He and Xiaoyong Zheng and Roderick Rejesus and Jose Yorobe},
   doi = {10.1111/AGEC.12558},
   issue = {3},
   journal = {Agricultural Economics (United Kingdom)},
   keywords = {cost-of-production crop insurance,moral hazard},
   month = {5},
   pages = {343-357},
   publisher = {Blackwell Publishing Ltd},
   title = {Input use under cost-of-production crop insurance: Theory and evidence},
   volume = {51},
   year = {2020}
}

@article{Mishra2005,
   abstract = {Using farm level data we evaluate the input use and environmental effects of revenue insurance. A priori, the moral hazard effect on input use is indeterminate. This paper empirically assesses the input use impact of the increasingly popular, and federally subsidized, risk management instrument of revenue insurance and the extent to which its effects on input use may differ from those of the older yield based instruments. We conclude that among winter wheat farmers, those who purchase revenue insurance tend to spend less on fertilizers but do not appreciably alter pesticide expenditures. Thus, any improved environmental outcomes due to crop insurance are likely due to reduced fertilizer not pesticide use. When the environmental indicators included indicated a potential environmental fragility (i.e. high erosion, pesticide leaching or pesticide runoff potential), the input use equation suggested that fertilizer expenditures decreased. Revenue insurance undoubtedly further reduces fertilizer applications on these fields as well, but the marginal environmental benefit of revenue insurance is lessened because the reduction, where it matters most, accrues on land on which fertilizer use has already been curtailed to some degree. © 2004 Elsevier Ltd. All rights reserved.},
   author = {Ashok K. Mishra and R. Wesley Nimon and Hisham S. El-Osta},
   doi = {10.1016/j.jenvman.2004.08.003},
   issn = {03014797},
   issue = {1},
   journal = {Journal of Environmental Management},
   keywords = {Environment,Fertilizer use,Moral hazard,Pesticide use,Probit,Revenue insurance,Sample selection},
   pages = {11-20},
   pmid = {15572077},
   title = {Is moral hazard good for the environment? Revenue insurance and chemical input use},
   volume = {74},
   year = {2005}
}

@misc{horowitz1993,
   author = {John Horowitz and Erik Lichtenberg},
   journal = {American Journal of Agricultral Economics},
   pages = {926-935},
   title = {Insurance, moral hazard, and chemical use in Agriculture},
   volume = {75},
   year = {1993}
}

@article{Babcock1996,
   abstract = {Previous studies disagree on the effects of insurance on fertilizer application rates. The effect of increased fertilizer on the probability of low yields primarily determines whether fertilizer and insurance are substitutes or complements. We estimate conditional distributions of corn yields to determine if the technical relationship between yields and fertilizer supports the hypothesis that insurance increases optimal application rates. Our results indicate no support for this hypothesis. At all nitrogen fertilizer rates and reasonable levels of risk aversion, nitrogen fertilizer and insurance are substitutes, suggesting that those who purchase insurance are likely to decrease nitrogen fertilizer applications.},
   author = {Bruce A. Babcock and David A. Hennessy},
   doi = {10.2307/1243713},
   issn = {0002-9092},
   issue = {2},
   journal = {American Journal of Agricultural Economics},
   keywords = {acre input levels is,crop insurance,estimate,fertilizer rates,horowitz and,in an econometric study,in dispute,insurance on optimal per,lichtenberg,revenue insurance,the effect of crop,yield distributions},
   pages = {416-427},
   title = {Input Demand under Yield and Revenue Insurance},
   volume = {78},
   year = {1996}
}

@article{Smith1996,
   abstract = {This study examines the relationship between chemical input use and crop insurance purchase decisions for a sample of Kansasdry land wheat farmers. Recent research by Horowitz and Lichtenberg indicated that, contrary to conventional wisdom, farmers that purchased insurance tended to use relatively more chemical inputs than farmers who did not insure. In contrast, our results confirm the conventional view that moral hazard incentives lead insured farmers to use fewer chemical inputs. Implications for the joint determination of insurance and input use decision sand appropriate estimation techniques are discussed.},
   author = {Vincent H. Smith and Barry K. Goodwin},
   doi = {10.2307/1243714},
   isbn = {9781351146968},
   issn = {0002-9092},
   issue = {2},
   journal = {The Economics of Agri-Environmental Policy},
   keywords = {Crop insurance,Input use,Moral hazard},
   pages = {169-179},
   title = {Crop insurance, moral hazard, and agricultural chemical use},
   volume = {2},
   year = {1996}
}

@article{Sainsbury2019,
   author = {Nigel C. Sainsbury and Rachel A. Turner and Bryony L. Townhill and Stephen C. Mangi and John K. Pinnegar},
   doi = {10.1038/s41558-019-0645-z},
   isbn = {4155801906},
   issn = {17586798},
   issue = {12},
   journal = {Nature Climate Change},
   pages = {896-897},
   publisher = {Springer US},
   title = {The challenges of extending climate risk insurance to fisheries},
   volume = {9},
   url = {http://dx.doi.org/10.1038/s41558-019-0645-z},
   year = {2019}
}

@article{muller2011,
   abstract = {Rain-index insurance is strongly advocated in many parts of the developing world to help farmers to cope with climatic risk that prevails in (semi-)arid rangelands due to low and highly uncertain rainfall. We present a modeling analysis of how the availability of rain-index insurance affects the sustainability of rangeland management. We show that a rain-index insurance with frequent payoffs, i.e. a high strike level, leads to the choice of less sustainable grazing management than without insurance available. However, rain-index insurance with a low to medium strike level enhances the farmer's well-being while not impairing the sustainability of rangeland management. © 2011 Elsevier B.V.},
   author = {Birgit Müller and Martin F. Quaas and Karin Frank and Stefan Baumgärtner},
   doi = {10.1016/j.ecolecon.2011.06.011},
   issn = {09218009},
   issue = {11},
   journal = {Ecological Economics},
   keywords = {Africa,Ecological-economic modeling,Grazing management,Namibia,Risk,Weather-index insurance},
   pages = {2137-2144},
   publisher = {Elsevier B.V.},
   title = {Pitfalls and potential of institutional change: Rain-index insurance and the sustainability of rangeland management},
   volume = {70},
   url = {http://dx.doi.org/10.1016/j.ecolecon.2011.06.011},
   year = {2011}
}

@article{gibbons1993,
   abstract = {Games - mathematical models of strategic interaction - embrace a multitude of disciplines, in each of which they form a largely separate field of study with its own traditions and priorities. Game-theoretic resource modeling concerns itself primarily with strategic aspects of externalities arising from joint use of the commons, although it also deals with externalities of privately owned resources and with questions of fairness in allocating shared resource costs. This essay surveys resource games - in fisheries, forestry, water resources and environmental regulation - within a novel unifying framework, assesses their contribution to current understanding of resource issues, and indicates their promise for the future. -Author},
   author = {M. Mesterton-Gibbons},
   doi = {10.1111/j.1939-7445.1993.tb00143.x},
   issn = {08908575},
   issue = {2},
   journal = {Natural Resource Modeling},
   pages = {93-147},
   title = {Game-theoretic resource modeling},
   volume = {7},
   year = {1993}
}

@article{Ramaswami1993,
   author = {Bharat Ramaswami},
   journal = {American Journal of Agricultural Economics},
   pages = {914-925},
   title = {Supply response to agricultural insurance: Risk reduction and moral hazard effects},
   volume = {75},
   year = {1993}
}

@article{John2019,
   abstract = {Increasingly frequent and severe droughts pose one of the greatest challenges for dryland pastoralists in the Horn of Africa. Livestock drought insurance (LDI) has been proposed as a means to manage these risks. However, LDI may have unintended side effects, such as inducing unsustainable herd sizes leading to long-term pasture degradation. These issues are infeasible to study empirically given that none of the emerging LDI programs has existed at scale for any extended period of time. Thus, we study the potential long-term effects of LDI on pasture conditions at scale with the help of an agent-based model. We particularly consider the possibility that if insurance is taken up at scale, the quick herd size recovery that insurance enables after droughts can disrupt natural pasture recovery dynamics, with the potential to degrade the long-run carrying capacity of the vegetation. Our results show that, especially if pastures are very sensitive to grazing, insurance can indeed cause and/or intensify ecological instability. Furthermore, unfortunately, these unintended ecological consequences are most likely where insurance is needed the most. Designing the insurance product in the light of these insights may dampen these effects.},
   author = {Felix John and Russell Toth and Karin Frank and Jürgen Groeneveld and Birgit Müller},
   doi = {10.1016/J.ECOLECON.2018.11.021},
   issn = {09218009},
   journal = {Ecological Economics},
   keywords = {East Africa,Grazing,Index-based insurance,Pastoralism,Risk-coping strategies},
   month = {3},
   pages = {357-368},
   publisher = {Elsevier B.V.},
   title = {Ecological Vulnerability Through Insurance? Potential Unintended Consequences of Livestock Drought Insurance},
   volume = {157},
   year = {2019}
}

@misc{Sumaila2021,
   abstract = {The ocean, which regulates climate and supports vital ecosystem services, is crucial to our Earth system and livelihoods. Yet, it is threatened by anthropogenic pressures and climate change. A healthy ocean that supports a sustainable ocean economy requires adequate financing vehicles that generate, invest, align, and account for financial capital to achieve sustained ocean health and governance. However, the current finance gap is large; we identify key barriers to financing a sustainable ocean economy and suggest how to mitigate them, to incentivize the kind of public and private investments needed for topnotch science and management in support of a sustainable ocean economy.},
   author = {U. Rashid Sumaila and Melissa Walsh and Kelly Hoareau and Anthony Cox and Louise Teh and Patrízia Abdallah and Wisdom Akpalu and Zuzy Anna and Dominique Benzaken and Beatrice Crona and Timothy Fitzgerald and Louise Heaps and Ibrahim Issifu and Katia Karousakis and Glenn Marie Lange and Amanda Leland and Dana Miller and Karen Sack and Durreen Shahnaz and Torsten Thiele and Niels Vestergaard and Nobuyuki Yagi and Junjie Zhang},
   doi = {10.1038/s41467-021-23168-y},
   issn = {20411723},
   issue = {1},
   journal = {Nature Communications},
   month = {12},
   pmid = {34103490},
   publisher = {Nature Research},
   title = {Financing a sustainable ocean economy},
   volume = {12},
   year = {2021}
}

@article{Cheung2021,
   abstract = {Extreme temperature events have occurred in all ocean basins in the past two decades with detrimental impacts on marine biodiversity, ecosystem functions, and services. However, global impacts of temperature extremes on fish stocks, fisheries, and dependent people have not been quantified. Using an integrated climate-biodiversity-fisheries-economic impact model, we project that, on average, when an annual high temperature extreme occurs in an exclusive economic zone, 77% of exploited fishes and invertebrates therein will decrease in biomass while maximum catch potential will drop by 6%, adding to the decadal-scale mean impacts under climate change. The net negative impacts of high temperature extremes on fish stocks are projected to cause losses in fisheries revenues and livelihoods in most maritime countries, creating shocks to fisheries social-ecological systems particularly in climate-vulnerable areas. Our study highlights the need for rapid adaptation responses to extreme temperatures in addition to carbon mitigation to support sustainable ocean development.},
   author = {William W L Cheung and Thomas L Frölicher and Vicky W Y Lam and Muhammed A Oyinlola and Gabriel Reygondeau and U Rashid Sumaila and Travis C Tai and Lydia C L Teh and Colette C C Wabnitz},
   journal = {Sci. Adv},
   title = {Marine high temperature extremes amplify the impacts of climate change on fish and fisheries},
   volume = {7},
   url = {https://www.science.org},
   year = {2021}
}

@techReport{fao2020,
   author = {FAO},
   city = {Rome},
   doi = {10.4060/ca9229en},
   issn = {15289303},
   issue = {6},
   institution = {FAO},
   journal = {INFORM},
   publisher = {American Oil Chemists Society},
   title = {The State of World Fisheries and Aquaculture 2020. Sustinability in action},
   volume = {32},
   year = {2020}
}

@article{Kasperski2013,
   abstract = {Catches and prices from many fisheries exhibit high interannual variability, leading to variability in the income derived by fishery participants. The economic risk posed by this may be mitigated in some cases if individuals participate in several different fisheries, particularly if revenues from those fisheries are uncorrelated or vary asynchronously. We construct indices of gross income diversification from fisheries at the level of individual vessels and find that the income of the current fleet of vessels on the US West Coast and in Alaska is less diverse than at any point in the past 30 y. We also find a dome-shaped relationship between the variability of individuals' income and income diversification, which implies that a small amount of diversification does not reduce income risk but that higher levels of diversification can substantially reduce the variability of income from fishing. Moving from a single fishery strategy to a 50-25-25 split in revenues reduces the expected coefficient of variation of gross revenues between 24% and 65% for the vessels included in this study. The increasing access restrictions in many marine fisheries through license reductions and moratoriums have the potential to limit fishermen's ability to diversify their income risk across multiple fisheries. Catch share programs often result in consolidation initially and may reduce diversification. However, catch share programs also make it feasible for fishermen to build a portfolio of harvest privileges and potentially reduce their income risk. Therefore, catch share programs create both threats and opportunities for fishermen wishing to maintain diversified fishing strategies.},
   author = {Stephen Kasperski and Daniel S. Holland},
   doi = {10.1073/pnas.1212278110},
   issn = {00278424},
   issue = {6},
   journal = {Proceedings of the National Academy of Sciences of the United States of America},
   month = {2},
   pages = {2076-2081},
   title = {Income diversification and risk for fishermen},
   volume = {110},
   year = {2013}
}

@article{Asche2020,
   abstract = {Although fishing is regarded as a risky production process, limited attention has been given to the impact of input factor use on production risk. Production risk is particularly important for fisher behavior and fisheries management when input factors are restricted, since input restrictions can influence production risk in addition to output levels. This paper investigates production risk by estimating production and risk functions for the main vessel groups in the Norwegian fishing fleet. The results indicate that production risk is present and that the effect of input use on production risk varies between vessel groups. Capital has a risk-reducing effect in the ocean fleet, but are risk-increasing in coastal fisheries. Fuel use is found to be a risk-increasing input for most of the vessel groups, while labor use is risk-reducing.},
   author = {Frank Asche and Andreea L. Cojocaru and Ruth B.M. Pincinato and Kristin H. Roll},
   doi = {10.1007/s10640-019-00391-2},
   issn = {15731502},
   issue = {1},
   journal = {Environmental and Resource Economics},
   keywords = {Groundfish,Input use,Norwegian fisheries,Pelagic fish,Production risk},
   month = {1},
   pages = {137-149},
   publisher = {Springer},
   title = {Production Risk in the Norwegian Fisheries},
   volume = {75},
   year = {2020}
}

@article{Lichtenberg2022,
   abstract = {Index insurance has been seen as an important tool for managing the risks faced by farmers in developing countries but uptake has been disappointingly low. Basis risk is often cited as one of the main reasons for the lack of uptake. We investigate the role of basis risk in greater depth by considering contracts that allow farmers to choose the level of the index that triggers an indemnity in addition to coverage. We show that risk averse farmers will select trigger levels to protect against catastrophic outcomes, even if downside basis risk increases. A numerical simulation model bears out our theoretical results.},
   author = {Erik Lichtenberg and Eva Iglesias},
   doi = {10.1016/j.jdeveco.2022.102883},
   issn = {03043878},
   journal = {Journal of Development Economics},
   keywords = {Agricultural insurance,Development,Index insurance,Risk,Rural financial markets},
   month = {9},
   publisher = {Elsevier B.V.},
   title = {Index insurance and basis risk: A reconsideration},
   volume = {158},
   year = {2022}
}

@article{Smith2005,
   abstract = {Commercial fishing involves both physical and financial risks. This combination questions whether fishermen are inherently risk-loving, whether physical and financial risk preferences are correlated, and how much preferences vary across fishermen. This paper addresses these questions with a panel data set of daily participation decisions in the California sea urchin dive fishery. Weather buoy data and the prevalence of great white sharks at a particular fishing site proxy for physical risk. Overall, urchin fishermen are not risk-loving on average, risk preferences are heterogeneous, and there is some evidence that risk preferences are positively correlated across physical and financial domains.},
   author = {Martin D Smith and James E Wilen},
   journal = {The Journal of Risk and Uncertainty},
   keywords = {D81,fishing behavior JEL Classification: Q22,risk domains,risk preferences},
   pages = {1-53},
   title = {Heterogeneous and Correlated Risk Preferences in Commercial Fishermen: The Perfect Storm Dilemma},
   volume = {31},
   year = {2005}
}

@article{Just1978,
   abstract = {The stochastic specification of input-output response is examined. Postulates are set forth which seem reasonable on the basis of a priori theorizing and observed behavior. It is found that commonly used formulations are restrictive and may lead to inefficient and biased results. A function which satisfies the postulates is suggested. Two- and four-stage procedures for estimation of the resulting function are then outlined. The estimators are shown to be consistent and, in the latter case, asymptotically efficient under normality. © 1978.},
   author = {Richard E. Just and Rulon D. Pope},
   doi = {10.1016/0304-4076(78)90006-4},
   issn = {03044076},
   issue = {1},
   journal = {Journal of Econometrics},
   pages = {67-86},
   title = {Stochastic specification of production functions and economic implications},
   volume = {7},
   year = {1978}
}

@article{Alvarez2006,
   abstract = {The study of productive efficiency has not been a traditional field of research in fisheries economics. However, recent papers have dealt with testing what it is known as "the good captain hypothesis", according to which differences in catches among vessels can be explained in terms of skipper skill (efficiency). These papers introduce an interesting research issue: the distinction between luck and efficiency. In this paper, we try to shed more light on this topic using a panel data set on the hake fishery in Northern Spain. In particular, we are interested in separating efficiency not only from luck but also from other time invariant variables, such as vessel characteristics, which sometimes are confounded with efficiency. In contrast to the other papers that deal with this topic, we find that luck is more important than technical efficiency in explaining catches. We argue this can be explained by the fact that other papers use data at a higher level of temporal aggregation. Over longer periods of time, skill persists while luck averages away.},
   author = {Antonio Alvarez and Peter Schmidt and A Alvarez and P Schmidt},
   doi = {10.1007/s11123-006-0002-x},
   journal = {J Prod Anal},
   keywords = {Fishing ·,Panel data,Technical efficiency ·},
   pages = {15-25},
   title = {Is skill more important than luck in explaining fish catches?},
   volume = {26},
   year = {2006}
}

@article{Sims2018,
   abstract = {Bioeconomic approaches to resource management treat renewable resources as naturally reproducing assets, or natural capital, whose conservation generates an economic rate of return. Stochasticity in resource growth produces stochasticity in these returns, affecting both the economic efficiency and ecological sustainability of resource use. Traditional bioeconomic models find that the rate of return to conservation is increased by a risk premium that may lower conservation incentives for more random populations. But these prescriptions depend on how the stochasticity of resource growth is modeled. Whereas the ecological literature adopts stochastic processes based on ecological considerations, bioeconomic analyses generally rely on ad hoc specifications chosen for convenience. Using analytical and numerical applications to the efficient management of fisheries and forests, we find that risk premium calculations based on traditional bioeconomic approaches versus ecological approaches may differ both quantitatively and qualitatively so that not all sources of ecological stochasticity imply conservation is risky. Recommendations for Resource Managers. Bioeconomic approaches to resource management link economic and ecological models to evaluate economic–ecological tradeoffs arising from different management choices. The bioeconomic literature has focused very little on the implications of stochastic resource growth, and the specification of stochasticity in these analyses is generally ad hoc and not ecologically justifiable. We bridge this gap by adopting specifications promoted by ecologists and demonstrate as follows:. Not all sources of ecological stochasticity imply conservation is risky. This result contrasts with traditional bioeconomic findings, driven by ad hoc modeling assumptions, that natural capital is always a risky investment. Whether a particular source of stochasticity implies conservation is risky may be state dependent. Even when models based on ecological notions of stochasticity do imply that conservation is risky, the magnitude of these risks and the ability of conservation efforts to manage risks may differ greatly from traditional bioeconomic specifications.},
   author = {Charles Sims and Richard D. Horan and Benjamin Meadows},
   doi = {10.1111/NRM.12191},
   issn = {19397445},
   issue = {4},
   journal = {Natural Resource Modeling},
   keywords = {Brownian motion,conservation,demographic stochasticity,environmental stochasticity,natural capital,risk premiums},
   month = {11},
   publisher = {Rocky Mountain Mathematics Consortium},
   title = {Come on feel the noise: Ecological foundations in stochastic bioeconomic models},
   volume = {31},
   year = {2018}
}

@article{Cai2016,
   abstract = {This paper uses a natural experiment to study the impact of an agricultural insurance program on household production, borrowing , and saving behavior. The empirical strategy includes difference-indifferences and triple difference estimations. The results show that insurance provision increases the insured crop production by 16 percent and raises borrowing by 29 percent. Interestingly, it does not affect total household savings; however, it does affect the relative proportion of flexible-term savings. Furthermore, effects on production and savings persist in the long run, while effects on borrowing are significant in only the medium run. Lastly, calibration results suggest that the policy is both welfare improving and cost-effective. (JEL D13, D14, G22, O16, P32, P36, Q12) P oor households in rural areas are exposed to substantial weather disasters, which can generate large fluctuations in income and consumption if insurance markets are incomplete. To protect themselves from such risks, rural households undertake risk management strategies, such as using informal insurance, avoiding high risk-high return agricultural activities, holding precautionary savings, and reducing their investment in production (Morduch 1995; Rosenzweig and Stark 1989). However, these informal insurance mechanisms cannot effectively reduce the negative impact of regional weather shocks (Townsend 1994). In the absence of a formal insurance market, the negative shocks and forgone profitable opportunities can lead to highly variable household income and persistent poverty (Dercon and Christiaensen 2011; Jensen 2000; Rosenzweig and Wolpin 1993). To shield farmers from weather-related risk, a number of developing countries have started to develop and market formal insurance products. However, the take-up},
   author = {Jing Cai},
   doi = {10.1257/pol.20130371},
   issue = {2},
   journal = {American Economic Journal: Economic Policy},
   pages = {44-88},
   title = {The Impact of Insurance Provision on Household Production and Financial Decisions},
   volume = {8},
   url = {http://dx.doi.org/10.1257/pol.20130371},
   year = {2016}
}

@article{Pfeiffer2022,
   abstract = {Commercial fishers are constantly exposed to many risk factors, making it a dangerous occupation. Fisheries management that limits access and catches can give rise to well-known stock and rule-of-capture externalities known as the “race to fish.” This market failure dissipates rents and can lead fishers to take on additional risks such as fishing in poor weather, overloading vessels, or delaying maintenance to outcompete others. Rights-based management is expected to reduce the incentives to take on additional risk. Using a large dataset of fishers from around the United States, we empirically estimate the effects of individual fishing quota (IFQ) programs on one important risk factor: The decision to fish in poor weather. We find that risk-taking behavior generally decreases under IFQs, but the magnitude of the shift differs by fishery, and we explore potential drivers of these differences.},
   author = {Lisa Pfeiffer and Tess Petesch and Thamanna Vasan},
   doi = {10.1086/716856},
   issn = {23345985},
   issue = {1},
   journal = {Marine Resource Economics},
   keywords = {Fisheries management,Individual fishing quotas,Risk,Safety},
   pages = {1-33},
   title = {A Safer Catch? The Role of Fisheries Management in Fishing Safety},
   volume = {37},
   year = {2022}
}

@article{Claassen2017,
   abstract = {This article integrates economic and biophysical models to assess how federal crop revenue insurance programs affect land use, cropping systems, and environmental quality in the U.S. Corn Belt region. The empirical framework includes econometric models that predict land conversion and crop choices at the parcel level based on expectation and variance of crop revenues, land quality, climate conditions, and physical characteristics at each site. The predictions are then combined with sitespecific environmental production functions to determine the effect of revenue insurance on nitrate runoff and leaching, soil water and wind erosion, and carbon sequestration. Results suggest that federal crop insurance has, on average, a small effect on conversions of non-cropland to cropland, and somewhat more significant impacts on crop choice and crop rotation. These changes in cropping systems have, on average, small impacts on agricultural pollution.},
   author = {Roger Claassen and Christian Langpap and Junjie Wu},
   doi = {10.1093/AJAE/AAW075},
   issn = {14678276},
   issue = {3},
   journal = {American Journal of Agricultural Economics},
   keywords = {Crop insurance,crop choice,environmental quality,land use},
   month = {4},
   pages = {592-613},
   publisher = {Oxford University Press},
   title = {Impacts of federal crop insurance on land use and environmental quality},
   volume = {99},
   year = {2017}
}

@article{Deryugina2017,
   abstract = {Agricultural production remains particularly vulnerable to weather fluctuations and extreme events, such as droughts, floods, and heat waves. Crop insurance is a risk management tool developed to mitigate some of this weather risk and protect farmer income in times of poor production. However, crop insurance may have unintended consequences for water resources sustainability, as the vast majority of freshwater withdrawals go to agriculture. The causal impact of crop insurance on water use in agriculture remains poorly understood. Here, we determine the empirical relationship between crop insurance and irrigation water withdrawals in the United States. Importantly, we use an instrumental variables approach to establish causality. Our methodology exploits a major policy change in the crop insurance system – the 1994 Federal Crop Insurance Reform Act – which imposed crop insurance requirements on farmers. We find that a 1% increase in insured crop acreage leads to a 0.223% increase in irrigation withdrawals, with most coming from groundwater aquifers. We identify farmers growing more groundwater-fed cotton as an important mechanism contributing to increased withdrawals. A 1% increase in insured crop acreage leads to a 0.624% increase in cotton acreage, or 95,602 acres. These results demonstrate that crop insurance causally leads to more irrigation withdrawals. More broadly, this work underscores the importance of determining causality in the water-food nexus as we endeavor to achieve global food security and water resources sustainability.},
   author = {Tatyana Deryugina and Megan Konar},
   doi = {10.1016/j.advwatres.2017.03.013},
   issn = {03091708},
   journal = {Advances in Water Resources},
   pages = {437-444},
   publisher = {Elsevier Ltd},
   title = {Impacts of crop insurance on water withdrawals for irrigation},
   volume = {110},
   url = {https://doi.org/10.1016/j.advwatres.2017.03.013},
   year = {2017}
}

@article{Wabnitz2019,
   abstract = {A suite of recent international commitments and aspirational targets related to ocean conservation and sustainable fisheries management suggest growing consensus among states regarding the urgency of action. Yet, securing adequate financial resources to achieve these goals will be a crucial hurdle for many countries and will depend on financing mechanisms that go beyond traditional official development assistance (ODA) and philanthropy. An expanding and diversifying universe of financing mechanisms, however, risks generating confusion, incoherence, and uneven outcomes. This Special Issue on “Funding for ocean conservation and sustainable fisheries” was conceived to gain insights into current and emerging trends in the rapidly evolving world of ‘blue’ finance. While one emphasis of the Special Issue is on ODA and philanthropy, additional contributions also cover new and emerging financing mechanisms. Throughout the Special Issue, authors reflect on important gaps, future perspectives and prospects for greater impact. Two relevant topics for the Special Issue, for which dedicated manuscripts are not available, are also briefly addressed: China's growing role as a provider of development finance and a shift to overtly transactional use of aid by the current US administration.},
   author = {Colette C.C. Wabnitz and Robert Blasiak},
   doi = {10.1016/j.marpol.2019.103526},
   issn = {0308597X},
   issue = {May},
   journal = {Marine Policy},
   pages = {103526},
   publisher = {Elsevier Ltd},
   title = {The rapidly changing world of ocean finance},
   volume = {107},
   url = {https://doi.org/10.1016/j.marpol.2019.103526},
   year = {2019}
}

@article{Smith2023,
   abstract = {Climatic extremes are becoming increasingly common against a background trend of global warming. In the oceans, marine heatwaves (MHWs)—discrete periods of anomalously warm water—have intensified and become more frequent over the past century, impacting the integrity of marine ecosystems globally. We review and synthesize current understanding of MHW impacts at the individual, population, and community levels. We then examine how these impacts affect broader ecosystem services and discuss the current state of research on biological impacts of MHWs. Finally, we explore current and emergent approaches to predicting the occurrence and impacts of future events, along with adaptation and management approaches. With further increases in intensity and frequency projected for coming decades, MHWs are emerging as pervasive stressors to marine ecosystems globally. A deeper mechanistic understanding of their biological impacts is needed to better predict and adapt to increased MHW activity in the Anthropocene.Expected final online publication date for the Annual Review of Marine Science, Volume 15 is January 2023. Please see http://www.annualreviews.org/page/journal/pubdates for revised estimates.},
   author = {Kathryn E. Smith and Michael T. Burrows and Alistair J. Hobday and Nathan G. King and Pippa J. Moore and Alex Sen Gupta and Mads S. Thomsen and Thomas Wernberg and Dan A. Smale},
   doi = {10.1146/annurev-marine-032122-121437},
   issn = {1941-1405},
   issue = {1},
   journal = {Annual Review of Marine Science},
   keywords = {climate change,ecosystem services,ecosystems,environmental change,extreme events,marine,ocean warming},
   pages = {1-27},
   title = {Biological Impacts of Marine Heatwaves},
   volume = {15},
   year = {2023}
}

@article{stoeffler2022a,
   abstract = {This paper examines whether agricultural insurance can boost investment by small scale farmers in West Africa. It is based on a randomized evaluation designed to analyze the impacts of index insurance for cotton farmers in Burkina Faso. No impact of insurance was found on cotton, but, consistent with microeconomic theory, significant spillover impacts on investment in other agricultural activities were measured. Furthermore, the effects of insurance payouts on farmers hit by a shock confirm the potential of index insurance as a risk-management tool. However, this research uncovers important flaws in the implementation of the project that limited its impact on cotton. Overall, this study suggests a promising role for index insurance in stimulating investment, but also draws attention to key challenges to the efficient delivery of insurance to small farmers. Finally, the study's hybrid, mixed methods RCT offers lessons for the evaluation of complex interventions where trust, understanding, and timing are all important. JEL classification: G22, I38, O12, O13, Q12},
   author = {Quentin Stoeffler and Michael Carter and Catherine Guirkinger and Wouter Gelade},
   doi = {10.1093/wber},
   isbn = {36/1/114/6291930},
   issue = {1},
   journal = {The World Bank Economic Review},
   keywords = {Burkina Faso,index insurance,mixed-methods,productive investments,randomized evaluation},
   pages = {114-140},
   title = {The Spillover Impact of Index Insurance on Agricultural Investment by Cotton Farmers in Burkina Faso},
   volume = {36},
   url = {https://doi.org10.1093/wber/lhab011Article},
   year = {2022}
}

@article{Mahul2001,
   abstract = {An optimal insurance contract against a climatic risk is derived in the presence of an uninsurable and dependent aggregate production risk. The optimal design depends on the stochastic dependency between both sources of uncertainty and on the producer's attitude towards risk, especially on his prudent behavior. Rational weather insurance purchasing decisions are also derived. The prudent producer responds to actuarially fair weather insurance by increasing his exposure towards risk.},
   author = {Olivier Mahul},
   doi = {10.1111/0002-9092.00180},
   issn = {00029092},
   issue = {3},
   journal = {American Journal of Agricultural Economics},
   keywords = {Agricultural insurance,Background risk,Climatic risk,Prudence,Stochastic dominance,Weather derivatives},
   pages = {593-604},
   title = {Optimal insurance against climatic experience},
   volume = {83},
   year = {2001}
}

@article{Holland2008,
   abstract = {Uncertainty is a defining characteristic of fisheries. Fishermen make decisions affecting their livelihood daily and even hourly, often with scant information on which to evaluate alternatives. Cognitive psychologists and behavioral economists have shown that decisions involving uncertainty often diverge substantially from what would be predicted by expected utility theory. I review relevant findings from the literature on decision making under uncertainty and previous empirical modeling of fishing decisions, and explore the implications of a number of different behavioral theories on fishing decisions of various types. Excerpts from ethnographic interviews o groundfish fishermen in New England are used to illustrate how these fishermen deal with uncertainty in decisions they make about when, where, how, and how long to fish. The interviews provide anecdotal evidence in support of prospect theory and other behavioral hypotheses that appear to contrast with what would be considered rational behavior from a neoclassical economics perspective. Copyright © 2008 MRE Foundation, Inc.},
   author = {Daniel S. Holland},
   doi = {10.1086/mre.23.3.42629621},
   issn = {07381360},
   issue = {3},
   journal = {Marine Resource Economics},
   keywords = {Fisheries,Heuristics and biases,Prospect theory,Risk aversion,Uncertainty},
   pages = {325-344},
   title = {Are Fishermen rational? A fishing expedition},
   volume = {23},
   year = {2008}
}

@article{muller2017,
   abstract = {Agricultural insurance programs are currently being championed by international donors in many developing countries. They are acclaimed as promising instruments for coping with climate risk. However, research on their impacts has mainly focused on economic considerations. Studies on broader social and ecological consequences are sparse and have produced ambiguous and inconclusive results. We address this knowledge deficit by (a) advocating for a holistic view of social-ecological systems and vulnerability when considering insurance impacts; (b) offering a systematic overview highlighting the potential beneficial and adverse effects of ‘climate insurance’ in agriculture, particularly where programs target intensifying agricultural production; and (c) suggesting preliminary principles for avoiding maladaptive outcomes, including specific recommendations for designing appropriate impact studies and insurance programs. Our synopsis brings together scientific knowledge generated in both developing and developed countries, demonstrating that agricultural insurance programs shape land-use decisions and may generate serious economic, social, and ecological consequences. If insurance is to be an appropriate tool for mitigating the impacts of climate change, it needs to be carefully developed with specific local social-ecological contexts and existing risk coping strategies in mind. Otherwise, it is liable to create long-term maladaptive outcomes and undermine the ability of these systems to reduce vulnerability.},
   author = {Birgit Müller and Leigh Johnson and David Kreuer},
   doi = {10.1016/j.gloenvcha.2017.06.010},
   issn = {09593780},
   journal = {Global Environmental Change},
   keywords = {Agroecology,Climate change adaptation,Index insurance,Resilience,Smallholder agriculture,Vulnerability},
   month = {9},
   pages = {23-33},
   publisher = {Elsevier Ltd},
   title = {Maladaptive outcomes of climate insurance in agriculture},
   volume = {46},
   year = {2017}
}

@unpublished{Elabed2016,
   author = {Ghada Elabed and Michael Carter},
   institution = {University of California Davis},
   title = {Ex-ante impacts of agricultural insurance: Evidence from a field experiment in Mali},
   year = {2018},
    note={Accessed: 6-6-2023 }
}

@article{Pfeiffer2020,
   abstract = {Fishermen are known to try to avoid fishing in stormy weather, as storms pose a physical threat to fishers, their vessels, and their gear. In this article, a dataset and methods are developed to investigate the degree to which fishers avoid storms, estimate storm aversion parameters, and explore how this response varies across vessel characteristics and across regions of the United States. The data consist of vessel-level trip-taking decisions from six federal fisheries across the United States combined with marine storm warning data from the National Weather Service. The estimates of storm aversion can be used to parameterize predictive models. Fishers' aversion to storms decreases with increasing vessel size and increases with the severity of the storm warning. This information contributes to our understanding of the risk-to-revenue trade-off that fishers evaluate every time they consider going to sea, and of the propensity of fishers to take adaptive actions to avoid facing additional physical risk.},
   author = {Lisa Pfeiffer},
   doi = {10.1093/icesjms/fsaa145},
   issn = {10959289},
   issue = {7-8},
   journal = {ICES Journal of Marine Science},
   keywords = {fisher behaviour,risk,safety,storms,weather},
   pages = {2753-2762},
   title = {How storms affect fishers' decisions about going to sea},
   volume = {77},
   year = {2020}
}

@article{Quaas2008,
   abstract = {In the face of uncertainty, ecosystems can provide natural insurance to risk averse users of ecosystem services. We employ a conceptual ecological-economic model in which ecosystem management has a private insurance value and, through ecosystem processes at higher hierarchical levels, generates a positive externality on other ecosystem users. We analyze the allocation of (endogenous) risk and ecosystem quality by risk averse ecosystem managers who have access to financial insurance, and study the implications for individually and socially optimal ecosystem management, and policy design. We show that while an improved access to financial insurance leads to lower ecosystem quality, the effect on the extent of the public-good problem and on welfare is determined by ecosystem properties. We derive conditions on ecosystem functioning under which, if financial insurance becomes more accessible, (i) the extent of optimal regulation increases or decreases; and (ii) welfare, in the absence of environmental regulation, increases or decreases. © 2007 Elsevier B.V. All rights reserved.},
   author = {Martin F. Quaas and Stefan Baumgärtner},
   doi = {10.1016/j.ecolecon.2007.07.004},
   issn = {09218009},
   issue = {2},
   journal = {Ecological Economics},
   keywords = {Ecosystem management,Ecosystem services,Endogenous environmental risk,Insurance,Multi-scale ecosystem functioning,Risk aversion,Uncertainty},
   month = {4},
   pages = {397-406},
   title = {Natural vs. financial insurance in the management of public-good ecosystems},
   volume = {65},
   year = {2008}
}

@article{Watson2023,
   abstract = {For the Blue Foods economy-those sectors that gain value from the biological productivity of the oceans such as fisheries and aquaculture-climate shocks pose an existential threat. Species range shifts, harmful algal blooms, marine heatw a v es, lo w o xy gen e v ents, coral bleaching, and hurricanes all present a serious economic risk to these industries, and yet there exist few financial tools for managing these risks. This contrasts with agriculture, where financial tools such as insurance are widely a v ailable f or managing numerous weather-related shocks. Designing financial tools to aid risk management, such as insurance, for equitable resilience against marine climate shocks will give coastal communities access to the necessary means for reducing their sensitivity to climate shocks and improving their long-term adaptive capacity. We suggest that a con v ergence of the insurance industry and marine sectors, fostered through collaboration with go v ernments, academics, and NGOs will help usher in new forms of insurance, such as ocean-index or parametric insurance. These new risk-management tools have the potential to help incentivize sustainable use of living marine resources, as well as strengthening the economic resilience of coastal communities to climate change.},
   author = {James R. Watson and Claire M. Spillman and L. Richard Little and Alistair J. Hobday and Phillip S. Levin},
   doi = {10.1093/icesjms/fsad175},
   issn = {10959289},
   issue = {10},
   journal = {ICES Journal of Marine Science},
   keywords = {adaptive capacity,aquaculture,climate change,climate shocks,conservation finance,fisheries,insurance,marine heatwave,resilience,sustain- ability},
   month = {12},
   pages = {2457-2469},
   publisher = {Oxford University Press},
   title = {Enhancing the resilience of blue foods to climate shocks using insurance},
   volume = {80},
   year = {2023}
}

@article{Maltby2023,
   abstract = {Climate-change-driven storminess and extreme events are increasingly challenging fishers in tropical island countries. Weather-based index insurance is an emerging tool that can assist fishing communities in their recovery and adaptation to such events. In these regions, coral reefs support valuable fisheries and also provide coastal protection during extreme events. Surveying 80 fishers in Grenada, this exploratory study examined fishers' perceptions of index insurance in the context of their experiences of extreme events. We also explore perceptions of reef health and its' connections to fishing outcomes and coastal protection, given the indirect role this plays in supporting fishers' resilience through associated fisheries and storm protection. Most fishers viewed extreme events as a severe risk to their livelihoods, affecting their ability to make future plans. Fishers comprehended the links between improved reef health and positive impacts on fishing (higher catches and incomes). Several challenges regarding index insurance were raised, which centred on themes of flexibility, affordability, inclusivity, and accessibility. These could pose barriers to fishers and undermine demand for or participation in such schemes. As such, research, design, and implementation of future index insurance schemes should consider issues raised by fishers to ensure that provision is equitable and improve uptake.},
   author = {Katherine M. Maltby and Luis Acosta and Bryony Townhill and Julia Touza and Piran White and Stephen C. Mangi},
   doi = {10.1093/icesjms/fsac003},
   issn = {10959289},
   issue = {8},
   journal = {ICES Journal of Marine Science},
   keywords = {Small Island Developing States (SIDS),adaptation,climate change,coral reefs,extreme weather events,parametric insurance},
   month = {10},
   pages = {2210-2221},
   publisher = {Oxford University Press},
   title = {Exploring fishers' perceptions of index insurance and coral reef health in the context of climate-driven changes in extreme events},
   volume = {80},
   year = {2023}
}

@article{Outreville2014,
   author = {J François Outreville},
   issue = {2},
   journal = {Journal of Insurance Issues},
   pages = {158-186},
   title = {Risk Aversion, Risk Behavior, and Demand for Insurance: A Survey},
   volume = {37},
   year = {2014}
}

@article{Kirkley1998,
   abstract = {Researchers have long recognized that entrepreneurial or managerial skill is a major determinant of productivity or reason why production among firms varies. Yet, except for a few studies, differences in productivity and output levels are usually attributed to plant configuration or scale. More important, there appears to have been few attempts to relate technical efficiency to managerial skill. Utilizing a stochastic production frontier, we examine the relationship between technical efficiency and characteristics of skill such as experience and education in a fishery. Although we can not determine threshold or essential levels of experience and education, substitution possibilities are found to exist between years of experience and education levels. Additional analysis of efficiency for two captains of the same background and experience reveals that additional characteristics need to be considered in the examination of skipper skill or the "good-captain" hypothesis.},
   author = {James Kirkley and Ivar E Strand},
   journal = {Journal of Productivity Analysis},
   keywords = {Technical efficiency,and fisheries,managerial skill},
   pages = {145-160},
   title = {Characterizing Managerial Skill and Technical Efficiency in a Fishery},
   volume = {9},
   year = {1998}
}

@article{Kompas2004,
   abstract = {This paper provides the first ex post estimates of the effects of input controls on technical efficiency in a fishery. Using individual vessel data from the northern prawn fishery of Australia for the years 1990-1996 and 1994-2000, stochastic production frontiers are estimated to analyse the efficiency impacts of input controls on engine and vessel size. The results indicate that technical efficiency is increasing in a measure of vessel size and engine capacity that was controlled by the regulator from 1985 to 2001, and decreasing in an unregulated input, gear headrope length. The study shows that fishers have substituted from regulated to unregulated inputs over the period 1990-2000 and technical efficiency has declined coincident with increasing restrictions on vessel size and engine capacity. The decline in technical efficiency indicates that the goal of the regulator to increase economic efficiency has not been realized. © 2004 Taylor and Francis Ltd.},
   author = {Tom Kompas and Tuong Nhu Che and R. Quentin Grafton},
   doi = {10.1080/0003684042000218561},
   issn = {00036846},
   issue = {15},
   journal = {Applied Economics},
   month = {8},
   pages = {1631-1641},
   title = {Technical efficiency effects of input controls: Evidence from Australia's banana prawn fishery},
   volume = {36},
   year = {2004}
}

@article{Eggert2004,
   author = {Håkan Eggert and Ragnar Tveteras},
   issue = {1},
   journal = {American Journal of Agricultural Economics},
   pages = {199-212},
   title = {Stochastic Production and Heterogeneous Risk Preferences: Commercial Fishers' Gear Choices},
   volume = {86},
   year = {2004}
}

@misc{RARE2021,
   author = {RARE},
   month = {11},
   title = {Insuring the Ensurers: A New Insurance Project Protects the Fishers Who Feed the World},
   url = {https://rare.org/stories-articles/insuring-the-ensurers-rares-fish-forever-program-protects-the-fishers-who-feed-the-world/},
   year = {2021}
}

@article{Sloggy2025,
   abstract = {Although insurance is an important risk management tool in many resource-dependent sectors, its connections to natural resource use are not well understood. We estimate the impact of US federal crop insurance pricing policy on the private use of common pool groundwater in the High Plains region of the United States. Field level panel data on insurance purchases and groundwater extraction allow us to exploit spatial discontinuities in insurance prices at county borders while instrumenting for endogenous prices. A theoretical model describes the mechanisms connecting insurance and resource extraction through groundwater intensity and irrigated acreage. Empirical results demonstrate that a 1% increase in irrigated insurance prices decreases total groundwater use by 0.501%, irrigated acreage by 0.266%, and irrigation per acre by 0.283%. Simulation results suggest that reducing subsidies for irrigated insurance by a recently considered 6.2 percentage points decreases groundwater use by 7.47%. These results suggest that insurance price subsidies may have unintended consequences for natural resource use but could be designed to complement conservation objectives.},
   author = {Matthew R. Sloggy and R. Aaron Hrozencik and Dale T. Manning and Chris G. Goemans and Roger L. Claassen},
   doi = {10.1016/j.jeem.2025.103125},
   issn = {10960449},
   journal = {Journal of Environmental Economics and Management},
   keywords = {Groundwater,Insurance,Resource management,Risk management},
   month = {3},
   publisher = {Academic Press Inc.},
   title = {Insurance and extraction incentives in a common pool resource: Evidence from groundwater use in the high plains},
   volume = {130},
   year = {2025}
}

@misc{Hobday2025,
   abstract = {Ensuring resilience and adaptability of the seafood sector in the face of accelerating climate change involves risk management and risk reduction. In a world where climate patterns are rapidly evolving, innovative financial instruments have a role to play in managing environmental risks and reducing exposure to these risks through climate adaptation. Here, we consider the application of parametric insurance—when a predetermined amount is paid out based on parameters chosen based on expected losses associated with extreme events. We offer insights into its utility, implementation, and limitations. We emphasise the role of ocean forecasting in developing and pricing parametric insurance products to help mitigate and adapt to climate-related risks in aquaculture and fisheries. Forecasting is essential for anticipating changing environmental conditions at a range of spatial and temporal scales and thus informing decision-making regarding resource allocation and adaptation strategies. Through both climate-aware insurance policy design and the provision of education, capacity-building programs, and financial support for industry transition, parametric insurance has the potential to help seafood producers navigate the challenges posed by climate change. Use of parametric insurance can facilitate a managed retreat from vulnerable or non-viable areas to more sustainable alternatives or livelihoods by reducing climate risk in a measured environment. This approach can also contribute to the long-term resilience of seafood industries and coastal communities while protecting the health of marine ecosystems.},
   author = {Alistair J. Hobday and L. Richard Little and James R. Watson and Claire M. Spillman},
   doi = {10.1007/s11160-025-09920-3},
   issn = {15735184},
   issue = {1},
   journal = {Reviews in Fish Biology and Fisheries},
   keywords = {Climate extremes,Fisheries,Forecasting,Parametric insurance,Risk management},
   month = {3},
   pages = {175-185},
   publisher = {Springer Science and Business Media Deutschland GmbH},
   title = {Parametric insurance for climate adaptation in fisheries and aquaculture},
   volume = {35},
   year = {2025}
}

\end{document}